\documentclass[a4paper,11pt, notitlepage]{article}
\usepackage{geometry}
\usepackage{graphicx}
\usepackage{setspace}
\usepackage{amsfonts}
\usepackage[bbgreekl]{mathbbol}
\DeclareSymbolFontAlphabet{\mathbb}{AMSb}
\DeclareSymbolFontAlphabet{\mathbbl}{bbold}

 \geometry{
 a4paper,
 left=25mm,
 top=25mm,
 right=25mm
 }

\usepackage[english]{babel}
\usepackage{caption}
\usepackage{subcaption} 
\usepackage{mathtools}
\usepackage{mathbbol}
\usepackage{enumerate}
\usepackage{xcolor}
\newcommand{\texorpdfstring}[2]{#1}

\usepackage{array}
% Define typographic struts, as suggested by Claudio Beccari
%   in an article in TeX and TUG News, Vol. 2, 1993.
\newcommand\Tstrut{\rule{0pt}{3.4ex}}         % = `top' strut
\newcommand\Bstrut{\rule[-0.9ex]{0pt}{0pt}}   % = `bottom' strut

% Useful packages
\usepackage{amsmath}
\usepackage{graphicx}
\usepackage{graphics}
\usepackage{amssymb}
\usepackage{amsthm}
\usepackage{thmtools, thm-restate}

\newtheorem{theorem}{Theorem}[section]
\newtheorem{definition}[theorem]{Definition}
\newtheorem{lemma}[theorem]{Lemma}
\newtheorem{corollary}[theorem]{Corollary}

\newcommand{\figref}[1]{\figurename~\ref{#1}}

\newcommand{\ceil}[1]{\left\lceil #1 \right\rceil}
\newcommand{\floor}[1]{\left\lfloor #1 \right\rfloor}
\newcommand{\union}{\cup}
\newcommand{\inter}{\cap}

\newcommand{\T}{{t^*}}
\newcommand{\card}[1]{\left|#1\right|}
\newcommand{\G}{\mathcal{G}}

\newcommand{\RT}{\mathcal{T}}
\newcommand{\R}{\mathcal{R}}
\newcommand{\I}[3]{\mathcal{I}_{#1}^{#2}(#3)}
\newcommand{\Out}[3]{\mathcal{O}_{#1}^{#2}(#3)}
\newcommand{\expo}[2]{#1 ^ {(#2)}}
\newcommand{\graph}[3]{#1_{#2}\circ \hdots \circ #1_{#3}}
\newcommand{\N}{\mathbb{N}}
\newcommand{\F}{\mathcal{F}}

\newcommand{\1}{\mathbbl{1}}

\title{Asymptotically Tight Bounds on \\ the Time Complexity of Broadcast and its Variants \\in Dynamic Networks \thanks{This project has received funding from the European Research Council (ERC) under the European Union’s Horizon 2020 research and innovation programme (grant agreement No. 101019564)\includegraphics[scale=0.4]{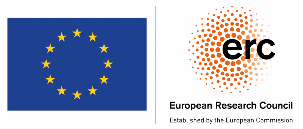}.
This work was further supported by the Austrian Science Fund (FWF) and netIDEE SCIENCE project P 33775-N, as well as the FWF project I 4800-N (ADVISE).}

%Data Dissemination in Dynamic Networks is Linear:\\
%Broadcast, Multi-Broadcast and Cover
}
\author{Antoine El-Hayek\\Faculty of Computer Science\\UniVie Doctoral School Computer Science DoCS\\University of Vienna, Austria
\and
Monika Henzinger\\Faculty of Computer Science\\University of Vienna, Austria\\
\and
Stefan Schmid\\TU Berlin, Germany \\ Fraunhofer SIT, Germany
}
\date{}

\begin{document}
\maketitle

\sloppy 
\begin{abstract}
Data dissemination is a fundamental task in distributed computing. This paper studies \emph{broadcast problems} in various innovative models where the communication network connecting $n$ processes is dynamic (e.g., due to mobility or failures) and controlled by an adversary. 

In the first model, the processes transitively communicate their ids in synchronous rounds along a rooted tree given in each round by the adversary whose goal is to maximize the number of rounds until \emph{at least one id is known by all processes}. 
 Previous research has shown a $\lceil{\frac{3n-1}{2}}\rceil-2$ lower bound and an $O(n\log\log n)$ upper bound.
We show the first linear upper bound for this problem, namely $\lceil{(1 + \sqrt 2) n-1}\rceil \approx 2.4n$.

We extend these results to the setting where the adversary gives in each round $k$-disjoint forests and their goal is to maximize the number of rounds until there is a set of $k$ ids such that \emph{each process knows of at least one of them}. We give a $\left\lceil{\frac{3(n-k)}{2}}\right\rceil-1$ lower bound and a $\frac{\pi^2+6}{6}n+1 \approx 2.6n$ upper bound for this problem.

Finally, we study the setting where the adversary gives in each round a directed graph with $k$ roots and their goal is to maximize the number of rounds until \emph{there exist $k$ ids that are known by all processes.}
%such that all processes know of all of them}. 
We give a $\left\lceil{\frac{3(n-3k)}{2}}\right\rceil+2$ lower bound and a $\lceil { (1+\sqrt{2})n}\rceil+k-1 \approx 2.4n+k$ upper bound for this problem.

For the two latter problems no upper or lower bounds were previously known.
\end{abstract}

\section{Introduction}

Data dissemination is one of the most fundamental tasks in distributed systems. 
This paper studies data dissemination in an innovative model where the communication network connecting $n$ processes is \emph{dynamic}. In particular, we consider a worst-case perspective and assume that the information flow between the processes is controlled by an \emph{oblivious message adversary} which may drop an arbitrary set of messages sent by some processes in each round. This results in a sequence of directed communication graphs, whose edges tell which process can successfully send a message to which other process in a given round. 
The oblivious message adversary model is appealing because it is conceptually simple and still provides a highly dynamic network model: The set of allowed graphs can be arbitrary, and the nodes that can communicate with one another can vary greatly from one round to the next. It is, thus, well-suited for settings where significant transient message loss occurs, such as in wireless networks subject to interference, jamming, or mobility.

We look into three data dissemination problems in dynamic networks: \emph{broadcast, cover} and \emph{$k$-broadcast}. These problems come in many flavors and feature intriguing connections to other classic problems such as leader(s) election, regular and $k$-set consensus (also known as $k$-set agreement), for which our problems' time complexity is typically a lower bound.

In particular, we assume that each process has a unique id and
in every message each process communicates all the ids it knows of so far. We first study a fundamental model where communication happens along arbitrary rooted trees, chosen by an adversary who aims to maximize the \emph{broadcast time}, which is the number of rounds it takes until there exists a process that everyone knows of. We then extend our investigations to sparser networks, considering $k$-forests (a union of $k$ rooted trees). Here the adversary will maximize the \emph{cover time}, which is the number of rounds it takes until there exists $k$ process such that everyone knows of at least one of them.
Moreover, in more highly connected networks, we study $k$-rooted dynamic networks (directed graphs with $k$ roots), where the adversary aims at maximizing the \emph{$k$-broadcast time}, which is the number of rounds it takes until there exist $k$ processes that everyone knows of.
Before presenting our results, we introduce our model more formally.

\paragraph*{Model}

Let $n$ be the number of processes and let each process have a unique identifier from $[n]$. 
Let $\G$ be a fixed set of directed networks with $n$ nodes such that each node has a unique identifier from $[n]$. There will be a sequence of rounds $t = 1, 2, \dots, $ such that, in each round $t$, an adversary chooses a network $\tilde G_t$ from $\G$, which determines the communication links in round $t$ as follows:
Initially every process {\em knows} (or {\em has heard}) of its own identifier.
During round $t$ process $i$ sends all identifiers it knows of to its out-neighbors in $\tilde G_t$.
The rounds stop whenever the objective -- broadcast, $k$-broadcast or cover of size $k$ -- is attained.
The goal of the adversary is to maximize the number of rounds.

To model the information propagation we use graph products: 

\begin{definition}
If $A=([n], E_1)$ and $B=([n], E_2)$ are two directed networks on $n$ nodes, then the \emph{product graph} $A\circ B$ is the network on $n$ nodes, with edge set $E$, where $(x,y) \in E$ if and only if there exist a node $z$ such that $(x,z) \in E_1$ and $(z,y) \in E_2$.
\end{definition}

Consider round $t$ and let $G(t)$ be the {\em product graph} $G(t)=G_1\circ \hdots \circ G_t$, where $G_i$ is created from $\tilde G_i$ by adding a self-loop to every node. Note that in $G(t)$ the in-neighbors of a process $x$ are exactly the processes that $x$ knows of after $t$ rounds, and its out-neighbors are the processes it has sent information to. We added a self-loop to every node in $G_i$ to capture the fact that no process ``forgets'' any piece of information in any round. An example is given in \figref{fig:example1}.
\begin{figure}
    \centering
    \includegraphics[width=0.65\linewidth]{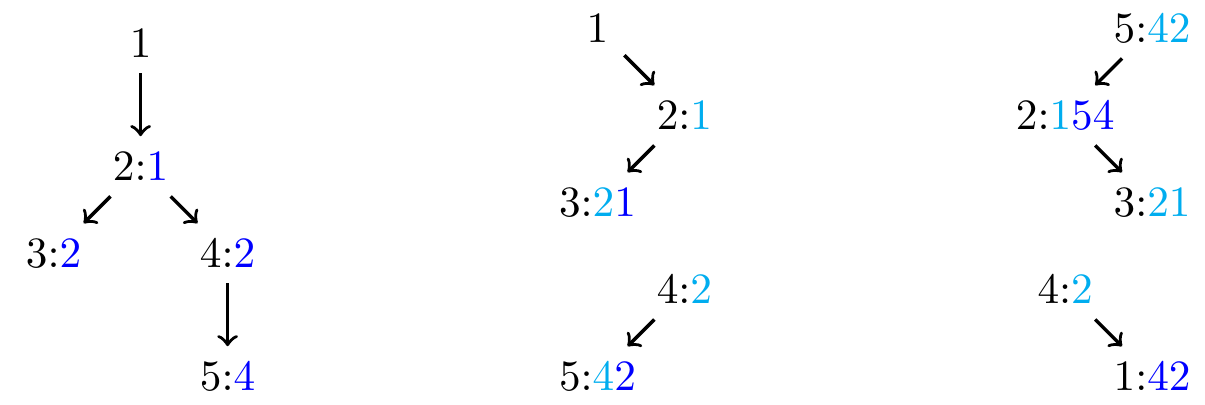}
    \caption{Example of information propagation. Self-loops are omitted, but are present on every node in every round. $x:\textcolor{cyan}{y}\textcolor{blue}{z}$ means that $y$ was an in-neighbor of $x$ before the current round (and still is), whereas $z$ is a new in-neighbor of $x$. We omit mentioning $x$ in its own in-neighbors as it is always the case.}
    \label{fig:example1}
\end{figure}

We will consider three different models, each of which has a different objective, detailed below. \figref{table} summarizes the 3 models, gives examples and states the results.

%\subsubsection{Broadcasting on Trees}
\paragraph*{Broadcasting on Trees}
\begin{definition}
    Let $\RT_n$ be the set of all rooted trees with a self-loop added at every node.
\end{definition}

In this variant, the networks given by the adversary are restricted to trees in $\RT_n$ and we analyze
%our goal is to minimize 
the {\em broadcasting time }$\T$, which is the smallest round $t$ such that there exists a node in $G(t)$ with an out-edge to every other node. Note that this corresponds to 
 a process such that every other process has heard of its identifier. We will say that {\em the node has broadcast (its identifier to everyone)} or simply {\em broadcast has happened}. Trees being rooted ensure that broadcast happens in a finite number of rounds\footnote{If the adversary can choose a non-rooted graph, it could repeat this graph indefinitely, preventing broadcast.}.

\begin{definition}
The broadcast time $\T(G_1, G_2, \hdots)$ of a sequence of graphs $G_1, G_2, \hdots$, is
%defined as follows:
$$
\T(G_1, G_2, \hdots)=\min\{t \in \N: \exists x \in [n], \forall y \in [n], (x,y)\in G_1\circ\hdots \circ G_t\}
$$

\end{definition}

%Recall that the goal of the adversary is to make $\T$ as large as possible. 

\begin{definition}
The broadcast time $\T(\G)$ of an adversary $\G$, is defined as follows:
$$
\T(\G)=\max\{\T(G_1, G_2, \hdots): \forall i \in \N, G_i \in \G\}
$$
\end{definition}
We will give tight asymptotic bounds on $\T(\RT_n)$.
%\mh{This is a bit risky to say, as we only answer this question asymptotically. We could say instead "We will give linear upper and lower bounds on $\T(\RT_n)$, thereby settling its asymptotic complexity."}\antoine{sounds good to me}
%
Note that even in the simple case where the adversary gives the same directed tree in each round, the broadcast time can be as large as $n-1$, namely if the tree is simply a path. Conversely, in each round, it is easy to see that at least one new edge appears in the product graph\footnote{In each round, the identifier of the root reaches someone new.}, and thus the broadcast time is at most $n^2$. This raises the question of how large the broadcast time can be made if in each round a different directed tree can be used.

This has been an open question for several years. Results from Charron-Bost and Schiper in 2009~\cite{charron2009heard} and Charron-Bost, F{\"u}gger, and Nowak in 2015~\cite{charron2015approximate} imply an $n \log n$ upper bound. In 2019, Zeiner, Schwarz, and Schmid~\cite{schwarz2017linear} gave a linear upper bound when the adversary is restricted to trees with either a constant number of leaves or a constant number of inner nodes. They also gave a $\ceil{\frac{3n-1}{2}}-2$ lower bound. In 2020, F{\"u}gger, Nowak, and
Winkler~\cite{fugger2020radius} improved the general upper bound to $2n\log\log n + O(n)$. So far, it has been an open conjecture~\cite{schwarz2017linear} whether the broadcast time is linear for arbitrary sequences of rooted trees.

\paragraph*{Covering on $k$-forests}
\begin{definition}
    We define $\F_n^k$ to be the set of all forests over $n$ processes which are the union of $k$ rooted trees and a self-loop is added at every node.
\end{definition}
In this variant, the networks given by the adversary are restricted to networks from $\F_n^k$, and we analyze the
{\em cover time} $t^c_k$, which is the smallest round $t$ such that there exists a set $I\subset [n]$, $\card I \leq k$, such that every node $x$ of $G(t)$ has at least one in-neighbor from $I$. Said differently, there exists a set of $k$ nodes such that every node knows of {\em at least one} of them.

\begin{definition}
The cover time $t^c_k(G_1, G_2, \hdots)$ of a sequence of graphs $G_1, G_2, \hdots$, is % defined as follows:
$$
t^c_k(G_1, G_2, \hdots)=\min\{t \in \N: \exists x_1, \hdots, x_k \in [n], \forall y \in [n], \exists i \in [k], (x_i,y)\in G_1\circ\hdots \circ G_t\}
$$

\end{definition}

%The goal of the adversary is to make $t^c_k$ as large as possible. 

\begin{definition}
The cover time $t^c_k(\G)$ of an adversary $\G$, is defined as follows:
$$
t^c_k(\G)=\max\{t^c_k(G_1, G_2, \hdots): \forall i \in \N, G_i \in \G\}
$$
\end{definition}

We will give tight asymptotic bounds on $t^c_k(\F_n^k)$. To the best of our knowledge, there is no prior work that gives upper or lower bounds on $t^c_k(\F_n^k)$.

\paragraph*{$k$-Broadcasting on $k$-rooted Networks}
\begin{definition}
    Let $\R_n^k$ be the set of all (directed) networks over $n$ nodes that
%    \begin{enumerate}
        %\item 
        (1) have $k$ roots, that is $k$ different processes $r_1, \hdots , r_k$ such that there exists a directed path from any $r_i$ to any process in $[n]$, and (2)
        %\item 
        have a self-loop at every node.
  %  \end{enumerate}
\end{definition}
In this variant, the networks given by the adversary are restricted to networks from $\R_n^k$, and we analyze the %\mh{for broadcasting and cover we talked about our goal}\antoine{ i would say "the adversary's goal" is better, as our goal is not to minimize anything, since we do not intervene in the rounds} is to maximize the
{\em $k$-broadcasting time} $t^*_k$, which is the smallest round $t$ such that there exist $k$ nodes in $G(t)$ with an out-edge to every other node. Said differently, there exists a set of $k$ nodes such that every node knows of {\em all} of them. 
\begin{definition}
The $k$-broadcast time $t^*_k(G_1, G_2, \hdots)$ of a sequence of graphs $G_1, G_2, \hdots$, is  %defined as follows:
\begin{multline*}
        t^*_k(G_1, G_2, \hdots)=\min\{t \in \N: \exists x_1, \hdots, x_k \in [n], \forall i \neq j, x_i \neq x_j \\ \land \forall y \in [n], \forall i \in [k], (x_i,y)\in G_1\circ\hdots \circ G_t\}
\end{multline*}
\end{definition}

%The goal of the adversary is to make $t^*_k$ as large as possible. 

\begin{definition}
The $k$-broadcast time $t^*_k(\G)$ of an adversary $\G$, is defined as follows:
$$
t^*_k(\G)=\max\{t^*_k(G_1, G_2, \hdots): \forall i \in \N, G_i \in \G\}
$$
\end{definition}

We will give tight asymptotic bounds on $t^*_k(\R_n^k)$. To the best of our knowledge, there is no prior work that gives upper or lower bounds on $t^*_k(\R_n^k)$.

\paragraph*{Contribution}
We give asymptotically tight bounds for all three settings, see also \figref{table}.

(1) First
we settle the open problem about time complexity of broadcast in dynamic
trees, by showing that it is \emph{linear}. Hence, Zeiner et al.'s~\cite{schwarz2017linear}
conjecture is true.
In particular, we present an upper bound of $\ceil { (1+\sqrt{2})n} \approx 2.4n$, which complements their $\left\lceil{\frac{3n-1}{2}}\right\rceil-2$ lower bound. 

(2) We further show that covering on $k$-forests also takes linear time, by giving a $\frac{\pi^2+6}{6}n+1\approx 2.6n$ upper bound and a $\left\lceil{\frac{3(n-k)}{2}}\right\rceil-1$ lower bound.

(3) Finally, we show that $k$-broadcasting on $k$-rooted networks is linear as well, by giving an upper bound of $\ceil { (1+\sqrt{2})n}+k-1 \approx 2.4n+k$, and a lower bound of $\left\lceil{\frac{3(n-3k)}{2}}\right\rceil+2$. 
%A summary of contributions is found in \figref{table}.

\begin{figure}[ht]

\begin{center}
\resizebox{\linewidth}{!}{
\begin{tabular}{|c c c c|} 
 \hline
 \rule{0pt}{2.7ex} \Bstrut Model: & Broadcasting on Trees & Covering on $k$-Forests& $k$-Broadcasting on $k$-rooted Networks  \\ [0.5ex] 
 \hline
 Adversary: & \includegraphics[width=0.132\linewidth]{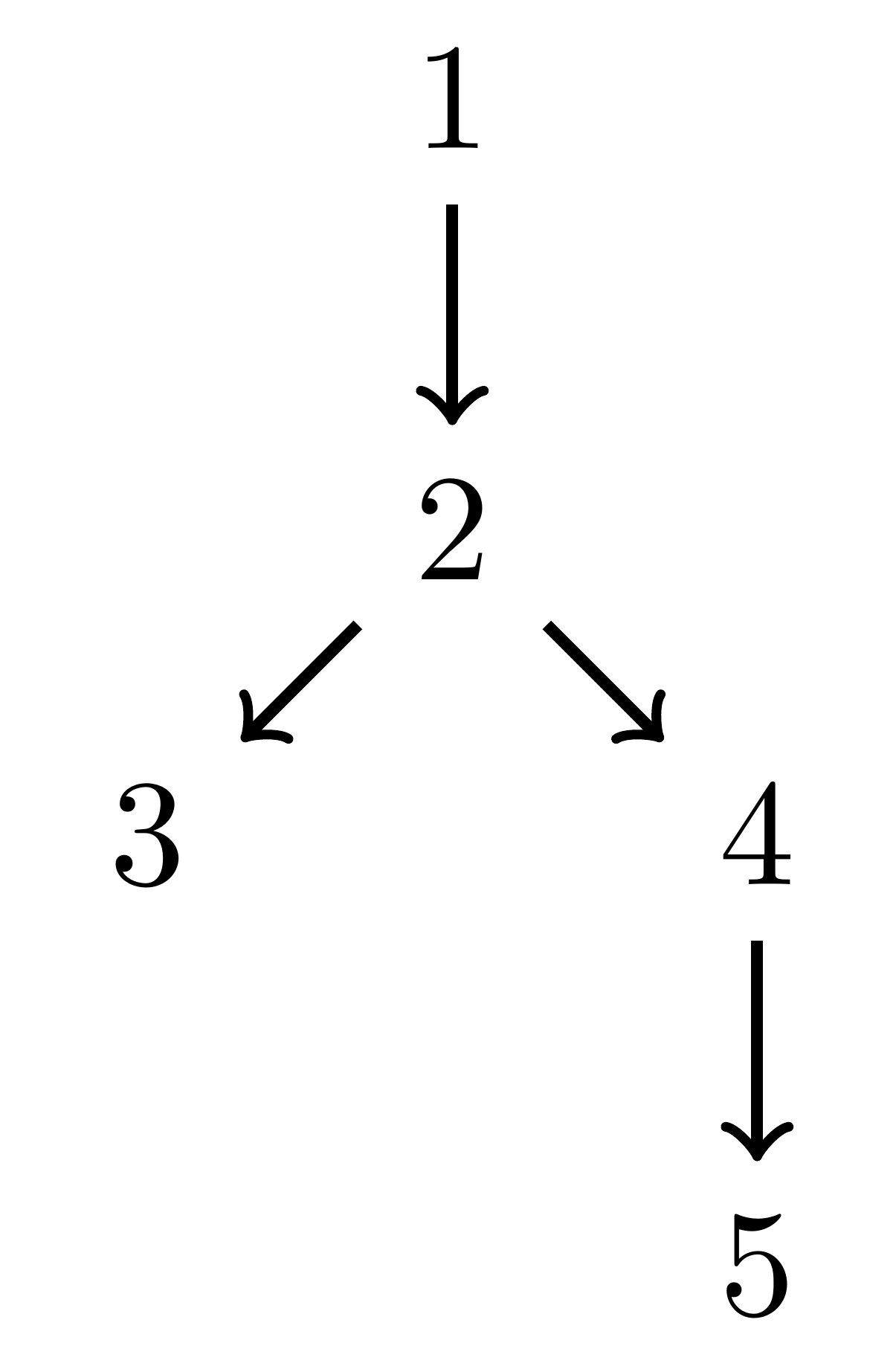} & \includegraphics[width=0.3\linewidth]{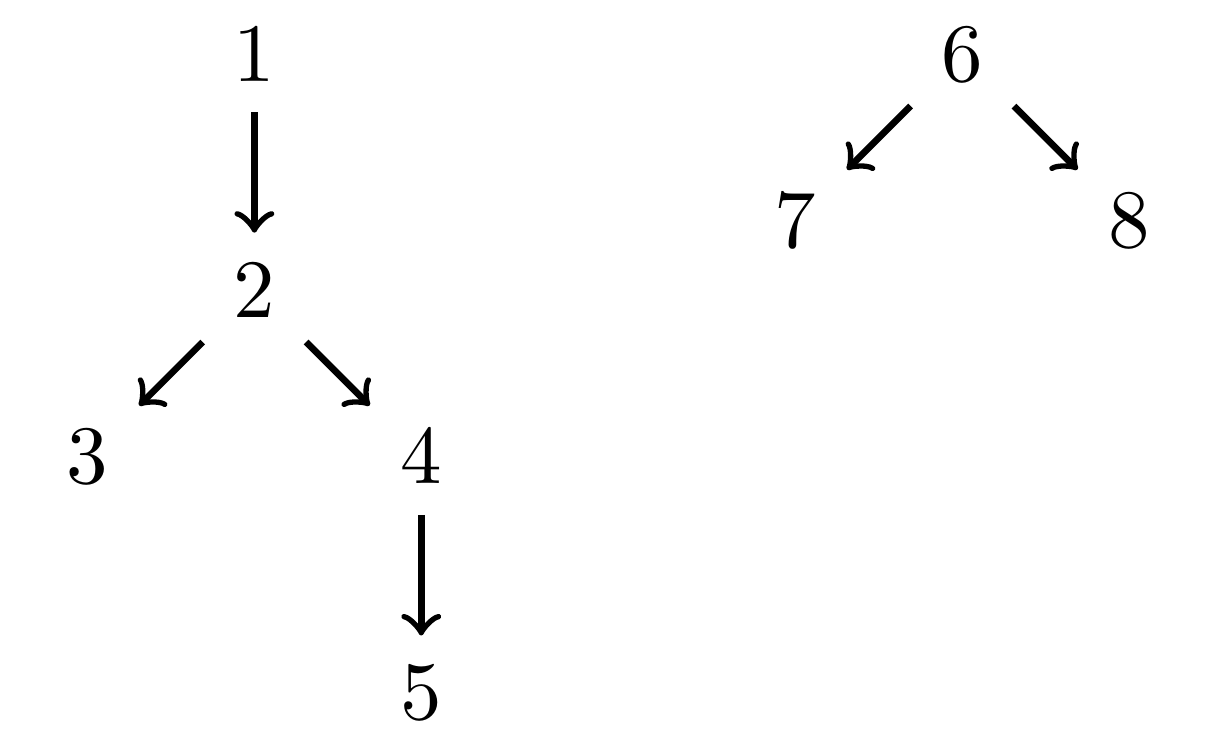}& $\includegraphics[width=0.12\linewidth]{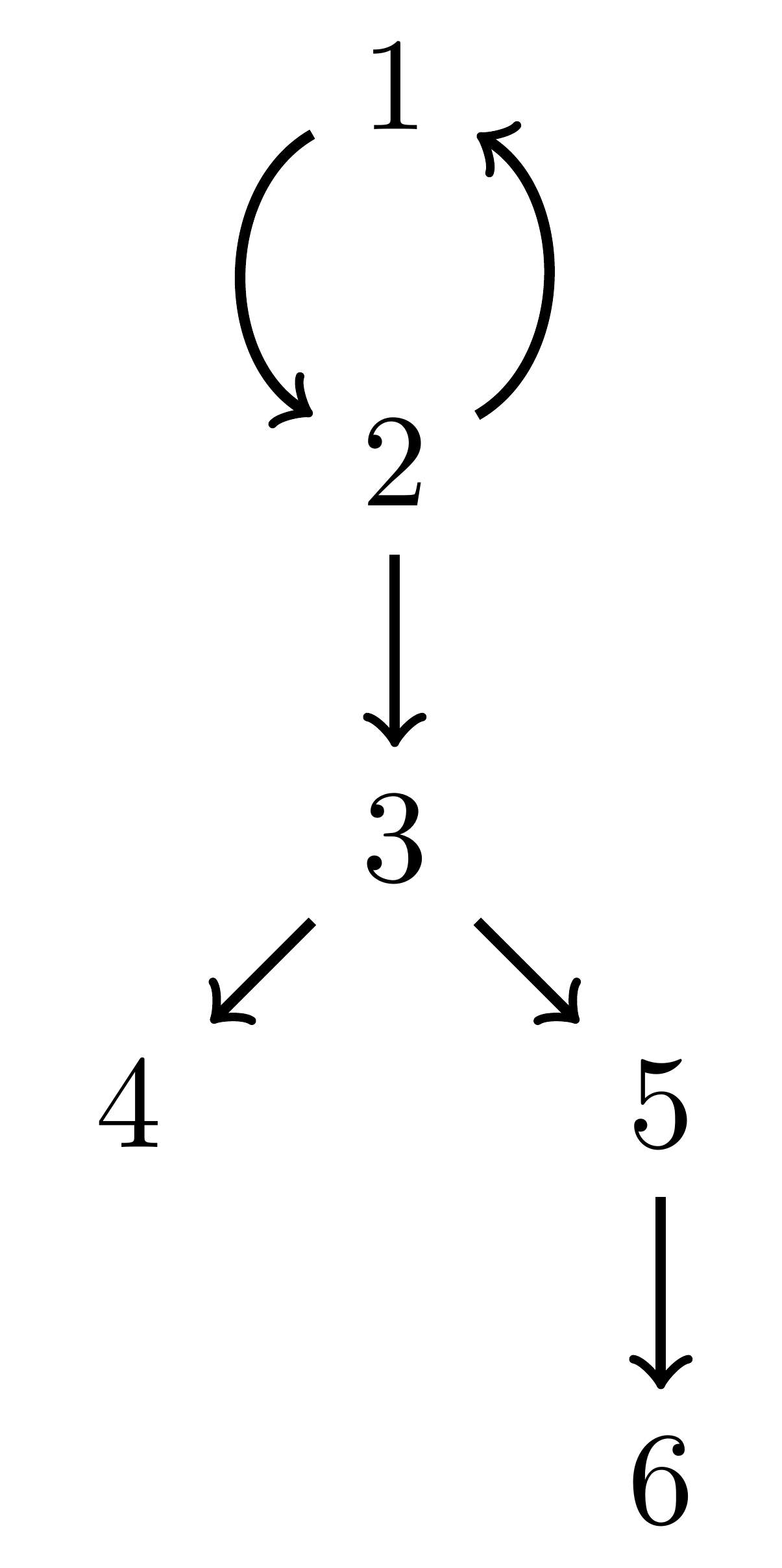}$  \\ 
 
  &Rooted trees over $n$ nodes. &   Forests over $n$ processes,& Networks over $n$ processes and $k$ roots.     \\
  
   & Example for $n=5$ & composed of $k$ rooted trees.& Example for $n=6$ and $k=2$\\
   &&Example for $n=8$ and $k=2$&\\
   [0.5ex] 
   \hline
   Objective:&\includegraphics[width=0.1\linewidth]{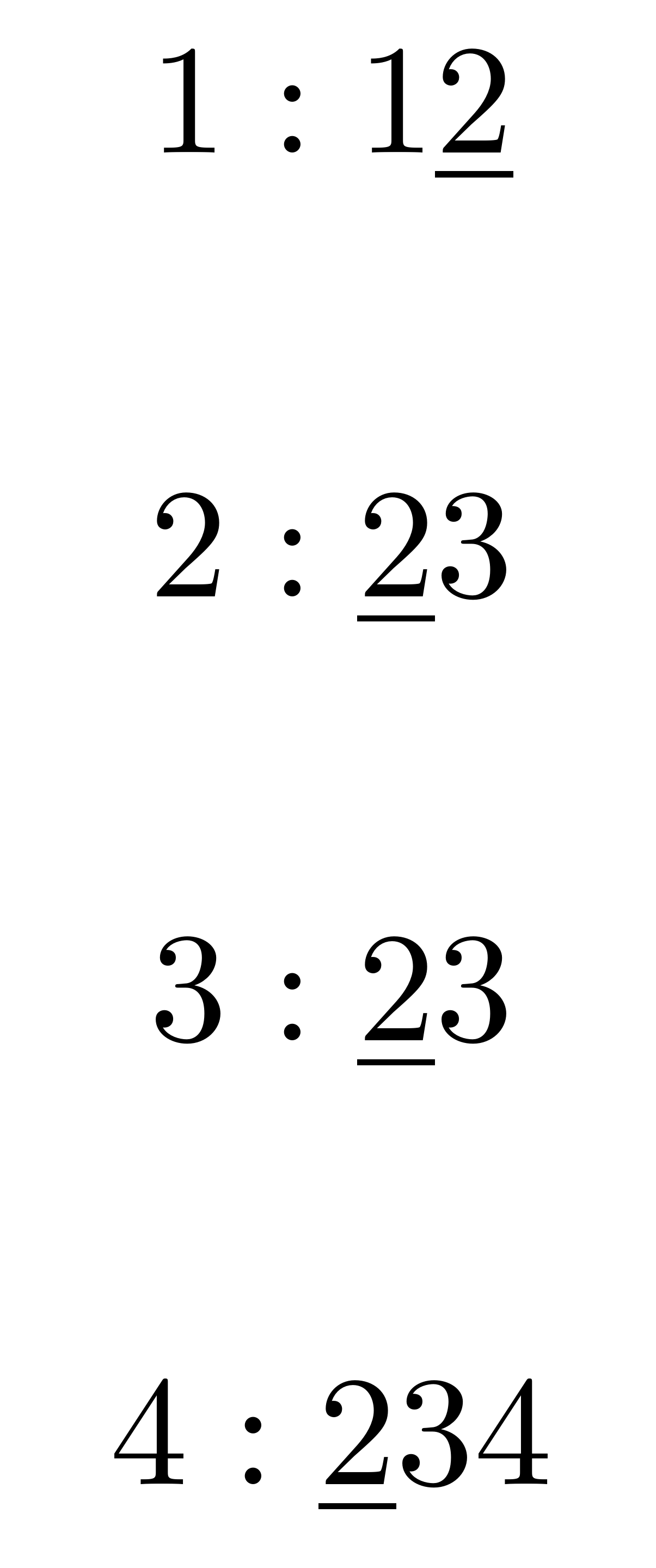}&\includegraphics[width=0.1\linewidth]{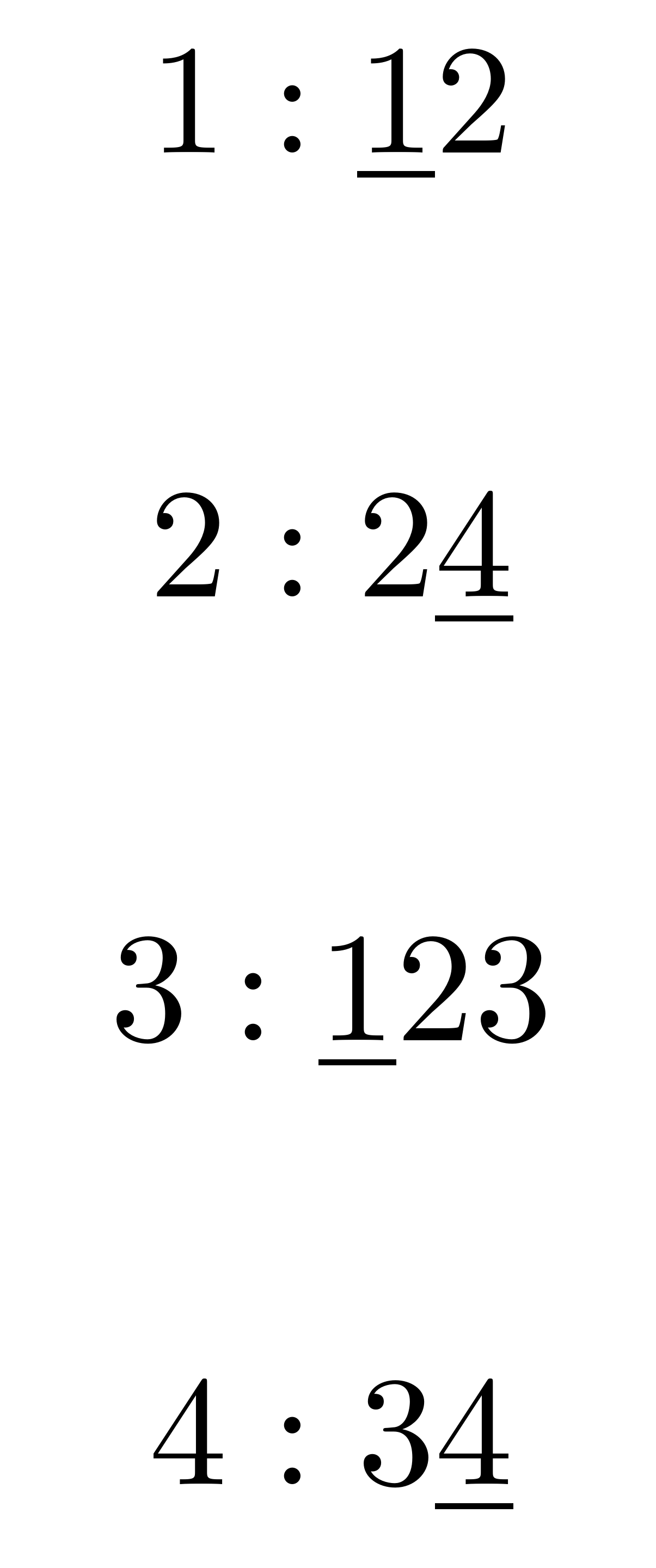}&\includegraphics[width=0.1\linewidth]{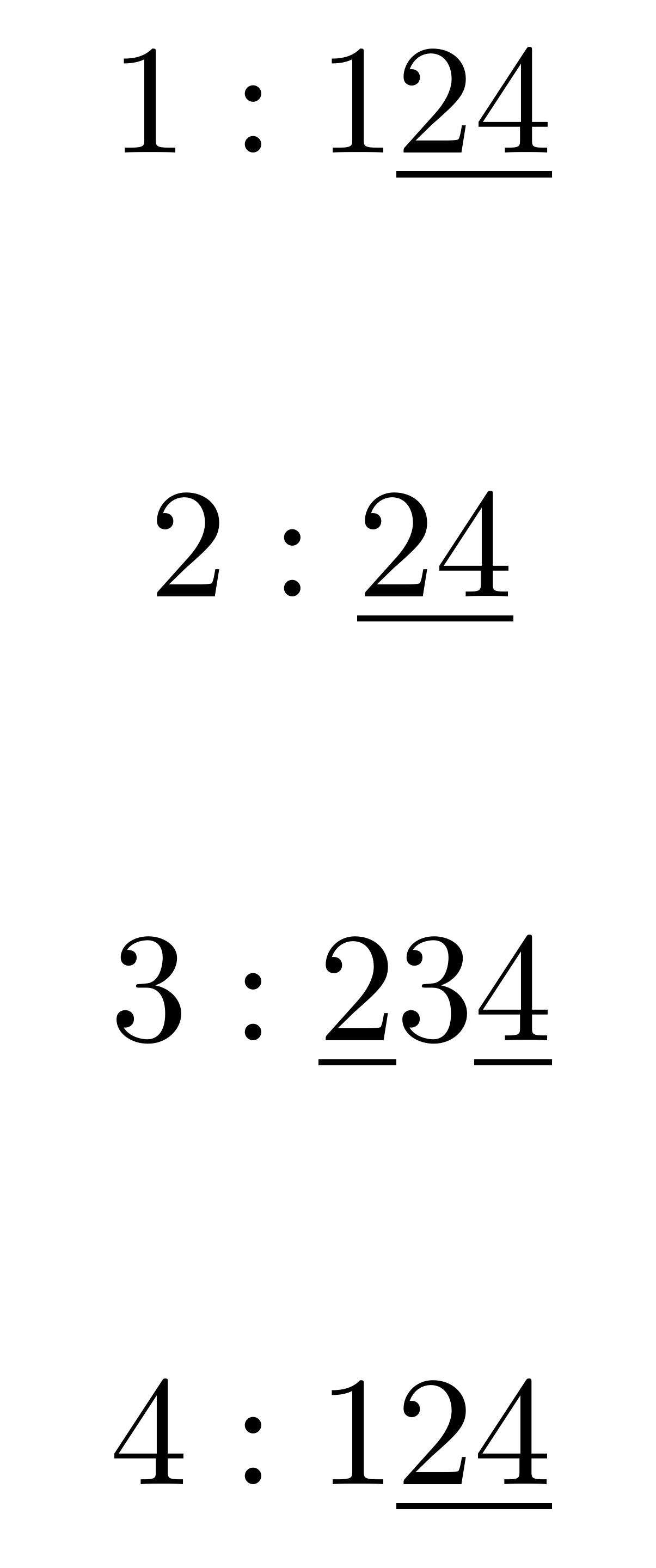}\\
   & Broadcast & Cover of size $k$& $k$-broadcast of $k$ processes \\
   & Example for $n=4$& Example for $n=4$ and $k=2$& Example for $n=4$ and $k=2$ \\
   [0.5ex]
   \hline
    \Tstrut \Bstrut Lower Bound:& \cite{schwarz2017linear}: $\left\lceil{\frac{3n-1}{2}}\right\rceil-2$  &
   $\left\lceil{\frac{3(n-k)}{2}}\right\rceil-1$&$\left\lceil{\frac{3(n-3k)}{2}}\right\rceil+2$\\
   [1ex]
   \hline
   \rule{0pt}{3ex} \rule[-1.6ex]{0pt}{0pt}
   & \cite{fugger2020radius}: $O(n\log\log n)$&&\\
   [1ex]
   \rule{0pt}{3ex} \rule[-1.6ex]{0pt}{0pt} Upper Bound:& $\ceil { (1+\sqrt{2})n}$&$\frac{\pi^2+6}{6}n+1$&$\ceil { (1+\sqrt{2})n}+k-1$\\
   \hline
\end{tabular}}
\end{center}

    \caption{Summary of models and results. The adversary chooses a sequence of graphs as stated in order to delay the objective as much as possible. In the objective examples, $x:y$ means that $y$ is an in-neighbour of $x$ at the round the objective is attained. Underlines are for emphasis purposes only.}
    \label{table}
\end{figure}

\paragraph*{Organization}

The remainder of this paper is organized as follows.
Section~\ref{tools} introduces some basic tools that will be useful throughout the paper, and presents first insights. In Section~\ref{tree}, we give the linear upper bound for broadcast time in our model. Section~\ref{multicover} and Section~\ref{multiroot} respectively showcase our results for the cover on $k$-forests and the $k$-broadcast on $k$-rooted networks. After reviewing related work in Section~\ref{sec:relwork}, we conclude and provide some future research directions in Section~\ref{sec:conclusion}. All the lower bound results can be found in appendix~\ref{appendix}, as well as other basic proofs.

\section{Basic Tools}\label{tools}

In this section, we define generalizations of the in- and out-neighborhoods in the product graph, and some basic properties these tools follow. The proofs of these properties being basic, we defer them to the appendix.

\begin{definition}
    The in-neighborhood of process $x$ between rounds $t$ and $t'$, $t,t'\geq 0$, denoted by $\I{t} {t'} x$, is defined as follows: If $t \leq t', \I{t} {t'} x$ is the in-neighborhood of process $x$ in the graph $G_t\circ\hdots\circ G_{t'}$. If $t=t'+1$, set $\I{t} {t'} x=\{x\}$. Otherwise, $\I{t} {t'} x=\varnothing$.
\end{definition}

\begin{definition}
    The out-neighborhood of process $x$ between rounds $t$ and $t'$, $t,t'\geq 0$, denoted by $\Out{t} {t'} x$, is defined as follows: If $t \leq t', \Out{t} {t'} x$ is the out-neighborhood of process $x$ in the graph $G_t\circ\hdots\circ G_{t'}$. If $t=t'+1$, set $\Out{t} {t'} x=\{x\}$. Otherwise, $\Out{t} {t'} x=\varnothing$.
\end{definition}

We prove in Appendix~\ref{appendix} that these generalized neighborhoods behave as expected:

\begin{restatable}{lemma}{first}\label{easy}
    Let $t,t' \geq 0, x,y \in [n]$. Then $x \in \Out {t} {t'} y \Leftrightarrow y \in \I {t} {t'} x$.
\end{restatable}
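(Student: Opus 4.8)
The plan is a straightforward case analysis on the relative position of $t$ and $t'$, matching the three cases in the definitions of $\I{t}{t'}\cdot$ and $\Out{t}{t'}\cdot$. Observe first that for any single fixed directed graph $H=([n],E)$ and any $u,v\in[n]$, the node $u$ is an out-neighbor of $v$ in $H$ if and only if $(v,u)\in E$ if and only if $v$ is an in-neighbor of $u$ in $H$; this is just the definition of a directed edge and requires no induction. Applying this observation to $H=G_t\circ\cdots\circ G_{t'}$ disposes of the main case.

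Concretely, I would argue as follows. If $t\leq t'$, then by definition $\Out{t}{t'}y$ is the out-neighborhood of $y$ in $H:=G_t\circ\cdots\circ G_{t'}$ and $\I{t}{t'}x$ is the in-neighborhood of $x$ in the same graph $H$ (here one uses that composition $\circ$ is associative, so $H$ is well defined); by the observation above, $x\in\Out{t}{t'}y\iff(y,x)\in E(H)\iff y\in\I{t}{t'}x$. If $t=t'+1$, then $\Out{t}{t'}y=\{y\}$ and $\I{t}{t'}x=\{x\}$, so $x\in\Out{t}{t'}y\iff x=y\iff y\in\I{t}{t'}x$. Finally, if $t>t'+1$, then both $\Out{t}{t'}y$ and $\I{t}{t'}x$ are empty, so both sides of the equivalence are false. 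Since these three cases are exhaustive, the claim follows.

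There is no real obstacle here: the statement is essentially the observation that ``in'' and ``out'' are dual notions in a directed graph, lifted through the (associative) product construction, together with a check that the two degenerate cases in the definition are symmetric in $x$ and $y$. The only point that needs a prior fact is the associativity/well-definedness of $G_t\circ\cdots\circ G_{t'}$, which I would either cite as a basic property of the product graph or note follows immediately by unfolding Definition~1 (an edge $(a,b)$ in a product of graphs exists iff there is a walk $a=z_0,z_1,\dots,z_m=b$ with $(z_{i-1},z_i)$ an edge of the $i$-th factor, a description manifestly independent of the bracketing).
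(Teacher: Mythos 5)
Your proposal is correct and follows essentially the same argument as the paper: the same three-way case split on $t\leq t'$, $t=t'+1$, and $t>t'+1$, with the main case reduced to the fact that $(y,x)$ being an edge of $G_t\circ\cdots\circ G_{t'}$ is exactly what both memberships mean. The added remark on associativity of the product is a harmless elaboration, not a different route.
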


\begin{restatable}[Transitivity]{lemma}{outout}\label{transitive}Let $t, t', t''\geq 0$, and $x,y,z \in [n]$. We have the following properties:

    \label{outout}\emph{i.} If $y \in \Out t {t'} x $ and $z \in \Out {t'+1} {t''} y$, then $z \in \Out t {t''} x$.
    
    \label{inout}\emph{ii.} If $x \in \I t {t'} y $ and $z \in \Out {t'+1} {t''} y$,then $z \in \Out t {t''} x$.
    
    \label{inin}\emph{iii.} If $x \in \I t {t'} y $ and $y \in \I {t'+1} {t''} z$, then $z \in \Out t {t''} x$.
   \end{restatable}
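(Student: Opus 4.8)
The plan is to reduce all three parts to a single ingredient — the associativity of the graph product $\circ$ — and then handle the degenerate index ranges in the definitions of $\I{\cdot}{\cdot}{\cdot}$ and $\Out{\cdot}{\cdot}{\cdot}$ directly.

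First I would record that $\circ$ is associative. Unwinding the definition of the product graph twice, $(x,w)$ is an edge of $(A\circ B)\circ C$ if and only if there exist nodes $z,y$ with $(x,z)\in A$, $(z,y)\in B$ and $(y,w)\in C$, and exactly the same characterization holds for $A\circ(B\circ C)$. Hence the iterated product $G_a\circ\cdots\circ G_b$ is unambiguous, and for every $a\le s<b$ we have $G_a\circ\cdots\circ G_b=(G_a\circ\cdots\circ G_s)\circ(G_{s+1}\circ\cdots\circ G_b)$.

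Next I would prove part i by a case split on the order of $t,t',t''$. If $t>t'+1$ then $\Out{t}{t'}{x}=\varnothing$, so the hypothesis is vacuous; if $t=t'+1$ then $\Out{t}{t'}{x}=\{x\}$ forces $y=x$, and then $z\in\Out{t'+1}{t''}{y}=\Out{t}{t''}{x}$ is already the conclusion. Symmetrically, the cases $t'>t''$ and $t'=t''$ are vacuous or immediate. The only substantive case is $t\le t'<t''$: there $y\in\Out{t}{t'}{x}$ means $(x,y)$ is an edge of $G_t\circ\cdots\circ G_{t'}$ and $z\in\Out{t'+1}{t''}{y}$ means $(y,z)$ is an edge of $G_{t'+1}\circ\cdots\circ G_{t''}$, so by the definition of the product graph $(x,z)$ is an edge of $(G_t\circ\cdots\circ G_{t'})\circ(G_{t'+1}\circ\cdots\circ G_{t''})$, which by the previous paragraph equals $G_t\circ\cdots\circ G_{t''}$; since $t\le t''$ this is precisely $z\in\Out{t}{t''}{x}$.

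Finally, parts ii and iii follow from part i via a single application of Lemma~\ref{easy}. For part ii, rewrite the hypothesis $x\in\I{t}{t'}{y}$ as $y\in\Out{t}{t'}{x}$ and apply part i. For part iii, additionally rewrite $y\in\I{t'+1}{t''}{z}$ as $z\in\Out{t'+1}{t''}{y}$ and then apply part i. I expect no real obstacle here: the entire mathematical content is the associativity of $\circ$, and the only thing that needs care is the bookkeeping of the $\{x\}$ and $\varnothing$ branches of the definitions at the boundary values $t=t'+1$ and $t'=t''$.
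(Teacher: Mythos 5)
Your proof is correct and follows essentially the same route as the paper's: the degenerate cases $t=t'+1$ and $t'=t''$ (and the vacuous ones) are handled via the $\{x\}$/$\varnothing$ conventions, the main case $t\le t'<t''$ uses the product-graph definition, and parts ii and iii are reduced to part i through Lemma~\ref{easy}. The only difference is that you make the associativity of $\circ$ explicit, which the paper leaves implicit; this is a harmless refinement, not a different argument.
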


And finally, we show that these neighborhoods only grow over time:

\begin{restatable}[Monotonicity]{lemma}{trivialinclusion}\label{trivialinclusion}
    If in each round, all nodes have a self-loop, then for any $t_1\leq t_2$ and $ t_3 \leq t_4$, for any process $x$ we have:
    
    \emph{i.} $\I {t_2} {t_3} x \subseteq \I {t_1} {t_4} x$.
        
    \emph{ii.} $\Out {t_2} {t_3} x \subseteq \Out {t_1} {t_4} x$.
\end{restatable}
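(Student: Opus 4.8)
The plan is to derive (ii) from (i) via Lemma~\ref{easy}, and to prove (i) by combining one auxiliary fact about self-loops with two applications of the transitivity lemma.

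I would first record the elementary fact that, when every node has a self-loop in every round, one has $y \in \I{a}{b}y$ (equivalently, by Lemma~\ref{easy}, $y \in \Out{a}{b}y$) for every process $y$ and all $a \le b+1$. For $a = b+1$ this is the definition $\I{a}{b}y = \{y\}$; for $a \le b$ it follows by a trivial induction on $b-a$ from the definition of the product graph, taking the intermediate node to be $y$ at every step. Note that this small fact must be proved directly here, since it is itself a special case of the monotonicity we are trying to establish, so invoking the latter would be circular.

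With that in hand, I would dispose of the degenerate cases of (i) first. If $t_2 > t_3 + 1$ then $\I{t_2}{t_3}x = \varnothing$ and there is nothing to prove; if $t_2 = t_3 + 1$ then $\I{t_2}{t_3}x = \{x\}$, and since $t_1 \le t_2 = t_3+1 \le t_4+1$ the auxiliary fact gives $x \in \I{t_1}{t_4}x$, as required. In the remaining case $t_2 \le t_3$ we have $t_1 \le t_2 \le t_3 \le t_4$, so all four generalized neighborhoods are honest in-neighborhoods in product graphs, and I would prove the two one-sided extensions separately. For the left extension, pick $y \in \I{t_2}{t_3}x$; the auxiliary fact gives $y \in \I{t_1}{t_2-1}y$ (valid because $t_1 \le t_2$), and feeding $y \in \I{t_1}{t_2-1}y$ together with $y \in \I{t_2}{t_3}x$ into part~iii of Lemma~\ref{transitive} yields $x \in \Out{t_1}{t_3}y$, i.e.\ $y \in \I{t_1}{t_3}x$ by Lemma~\ref{easy}. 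For the right extension, pick $y \in \I{t_1}{t_3}x$; the auxiliary fact gives $x \in \I{t_3+1}{t_4}x$ (valid because $t_3 \le t_4$), and feeding $y \in \I{t_1}{t_3}x$ together with $x \in \I{t_3+1}{t_4}x$ into part~iii of Lemma~\ref{transitive} yields $x \in \Out{t_1}{t_4}y$, i.e.\ $y \in \I{t_1}{t_4}x$. Chaining these gives $\I{t_2}{t_3}x \subseteq \I{t_1}{t_3}x \subseteq \I{t_1}{t_4}x$, proving (i); then (ii) is immediate, since $z \in \Out{t_2}{t_3}x \iff x \in \I{t_2}{t_3}z \subseteq \I{t_1}{t_4}z \iff z \in \Out{t_1}{t_4}x$.

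There is no real obstacle here; the only thing requiring attention is the bookkeeping with the boundary conventions ($\I{t}{t'}\cdot$ equals $\{\cdot\}$ when $t = t'+1$ and $\varnothing$ when $t > t'+1$), so that the index shift ``$t'+1$'' in the hypotheses of Lemma~\ref{transitive} lines up at each invocation. If one prefers to avoid the transitivity lemma altogether, an equally short alternative is to write $G_{t_1}\circ\cdots\circ G_{t_4} = A \circ (G_{t_2}\circ\cdots\circ G_{t_3}) \circ B$ with $A = G_{t_1}\circ\cdots\circ G_{t_2-1}$ and $B = G_{t_3+1}\circ\cdots\circ G_{t_4}$ (empty products being read as the identity relation), observe that $A$ and $B$ contain all self-loops, and route any edge $(y,x)$ of the middle factor as $y \to_A y \to x \to_B x$.
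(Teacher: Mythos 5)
Your proof is correct, and it hinges on the same single idea as the paper's proof---the self-loops supply $(y,y)$- and $(x,x)$-edges spanning the prefix rounds $[t_1,t_2-1]$ and the suffix rounds $[t_3+1,t_4]$, so any edge of the middle product survives in the longer product---but you package it differently. The paper treats i and ii in parallel and argues directly on product graphs: it takes the edge $(y,x)$ (resp.\ $(x,z)$) in $\graph G {t_2} {t_3}$ and pre-/appends self-loop edges, never invoking Lemma~\ref{easy} or Lemma~\ref{transitive}; your primary route instead proves i by two applications of part iii of Lemma~\ref{transitive}, fed with your explicitly proved auxiliary fact $y\in\I a b y$ for $a\le b+1$, and then deduces ii from i via the duality of Lemma~\ref{easy}. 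The ``alternative'' you sketch in your closing sentence is essentially the paper's argument. Your variant buys a little extra precision: by shifting to $\I {t_1} {t_2-1} y$ and $\I {t_3+1} {t_4} x$ you avoid the paper's slightly loose composition of $\graph G {t_1} {t_2}$ with $\graph G {t_2} {t_3}$ (which repeats the factor $G_{t_2}$), and you correctly flag that the auxiliary fact must be proved directly to avoid circularity; the paper's direct argument, in exchange, dispenses with your small induction. Your treatment of the degenerate cases coincides with the paper's.
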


\section{Broadcasting on Trees}\label{tree}

In this section, we focus on the fundamental problem of broadcasting on dynamic trees. We give an upper bound for the problem, before recalling a lower bound.

\subsection{The Upper Bound}

We will show that the key to understand how information propagate is to consider what the root knows -- or the in-neighbors of the root -- before the beginning of every round. We will show that the root must either have a lot of in-neighbors that were roots in previous rounds, or many in-neighbors in general. We will then show that any in-neighbor of the root before a round has at least one more out-neighbor after the round than before it. We will finally show that any adversary that tries to balance these two facts will fail to prevent broadcast for a time longer than linear.

\begin{definition}
    Let $t$ be a round. We denote by $r_t$ the root of $G_t$, and call it the root of the round $t$.
\end{definition}

Harnessing the fact that there always exists a path from a root to any other process in a network, we give the two following lemmas:

\begin{lemma}\label{propagation}
 Let $t, t'$ be rounds such that $t\leq t'$. We have that:
 \begin{enumerate}[i]
     \item \label{ingetsbig}If $x$ is a process such that $r_t \notin \I {t+1} {t'} x$, then $\card {\I t {t'} x}> \card {\I {t+1} {t'} x}$.
     \item \label{outgetsbig}If $x$ is a process such that $r_{t'} \in \Out {t} {t'-1} x$, then $\card {\Out t {t'} x}> \card {\Out {t} {t'-1} x}$, unless $\card {\Out {t} {t'-1} x}=n$.
 \end{enumerate}
\end{lemma}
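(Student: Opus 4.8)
The plan is to prove both parts by exhibiting an explicit new neighbor that must appear, using the defining property of roots — that there is a directed path from the root of any round to every node — together with the transitivity and monotonicity lemmas.

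\textbf{Part i.} Fix $x$ with $r_t \notin \I{t+1}{t'}x$. Monotonicity (Lemma~\ref{trivialinclusion}) already gives $\I{t+1}{t'}x \subseteq \I{t}{t'}x$, so it suffices to produce one node in $\I{t}{t'}x \setminus \I{t+1}{t'}x$. The natural candidate is $r_t$ itself, or more precisely a node reached by walking backwards from $x$ through round $t$. I would argue as follows: since $G_t \in \RT_n$ is rooted at $r_t$, there is a path from $r_t$ to $x$ in $G_t$; let $w$ be the last node on this path that is \emph{not} in $\I{t+1}{t'}x$ read ``from the front'' — more cleanly, consider the set $S = \I{t+1}{t'}x$, note $x \in S$ (self-loops), $r_t \notin S$, and walk along the $r_t$-to-$x$ path in $G_t$ until the first time it enters $S$, say the edge $(u,v) \in G_t$ with $u \notin S$, $v \in S$. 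Then $u \in \I{t}{t'}x$ because $u$ is an in-neighbor of $v$ in $G_t$ and $v \in \I{t+1}{t'}x$, so by transitivity (Lemma~\ref{transitive}, applied to the composition $G_t \circ (G_{t+1}\circ\cdots\circ G_{t'})$) $u \in \I{t}{t'}x$; but $u \notin \I{t+1}{t'}x$. Hence the inclusion is strict.

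\textbf{Part ii.} This is the ``out'' analogue and I would run the same argument on the reversed composition. Fix $x$ with $r_{t'} \in \Out{t}{t'-1}x$ and assume $\card{\Out{t}{t'-1}x} < n$. Monotonicity gives $\Out{t}{t'-1}x \subseteq \Out{t}{t'}x$, so again it suffices to find a new out-neighbor. Let $S = \Out{t}{t'-1}x$; we have $r_{t'} \in S$ and $S \neq [n]$, so pick any $z \notin S$. Since $G_{t'}$ is rooted at $r_{t'}$, there is a path in $G_{t'}$ from $r_{t'}$ to $z$; walking along it, let $(u,v) \in G_{t'}$ be the first edge leaving $S$, i.e. $u \in S$, $v \notin S$. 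Then $u \in \Out{t}{t'-1}x$ and $v \in \Out{t'}{t'}u$ (it is an out-neighbor of $u$ in $G_{t'}$), so by transitivity (Lemma~\ref{transitive}.i with $y=u$) $v \in \Out{t}{t'}x$, while $v \notin \Out{t}{t'-1}x$. This gives strict growth.

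\textbf{Main obstacle.} The argument itself is short; the care needed is purely in bookkeeping the index ranges so the transitivity lemma applies verbatim — the composition $G_t\circ\cdots\circ G_{t'}$ must be split as $G_t \circ (G_{t+1}\circ\cdots\circ G_{t'})$ for part i and as $(G_t\circ\cdots\circ G_{t'-1})\circ G_{t'}$ for part ii, and one must keep the degenerate conventions ($t = t'+1$, empty ranges) from the neighborhood definitions consistent so that the ``first edge leaving/entering $S$'' exists. I would also double-check the edge case $t = t'$ in part i (where $\I{t+1}{t'}x = \{x\}$), which the argument handles uniformly but is worth stating.
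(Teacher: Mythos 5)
Your proposal is correct and follows essentially the same argument as the paper: both parts are proved by taking the path from the round's root in $G_t$ (resp.\ $G_{t'}$), locating the boundary edge where the path crosses into (resp.\ out of) the relevant neighborhood set, and combining Transitivity with Monotonicity to get strict growth. The bookkeeping concerns you flag (the degenerate conventions and the split $G_t \circ (G_{t+1}\circ\cdots\circ G_{t'})$ versus $(G_t\circ\cdots\circ G_{t'-1})\circ G_{t'}$) are handled the same way in the paper's proof, so no gap remains.
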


\begin{proof}
    \ref{ingetsbig}. We will show that there exists a process $y \in \I {t+1} {t'} x$ that has an in-neighbor $z$ in $G_t$ such that $z \notin \I {t+1} {t'} x$. Then, by Transitivity (Lemma~\ref{transitive}), $z \in \I t {t'} x$. By Monotonicity (Lemma~\ref{trivialinclusion}), we have $\I {t+1} {t'} x \subset \I {t} {t'} x$, this will show that $\card {\I t {t'} x}> \card {\I {t+1} {t'} x}$.

    \noindent Let us now find such a $z$. Consider the path from $r_t$ to $x$ in $G_t$. Since $r_t \notin \I {t+1} {t'} x$, and trivially $x \in \I {t+1} {t'} x$, this path must include an edge $(z,y)$ such that $z \notin \I {t+1} {t'} x$, $y \in \I {t+1} {t'} x$.

    \ref{outgetsbig}. Let us look at the case $\card {\Out {t} {t'-1} x}<n$. We will show that there exists a process $y \in \Out {t} {t'-1} x$ that has an out-neighbor $z$ in $G_{t'}$ such that $z \notin \Out {t} {t'-1} x$. Then, by Transitivity (Lemma~\ref{outout}), $z \in \Out t {t'} x$. By Monotonicity (Lemma~\ref{trivialinclusion}), we have $\Out {t} {t'-1} x \subset \Out {t} {t'} x$, this will show that $\card {\Out t {t'} x}> \card {\Out {t} {t'-1} x}$.

    \noindent Let us now find such a $z$. Since $\card {\Out {t} {t'-1} x}<n$, there exists a process $a$ such that $ a \notin \Out {t} {t'-1} x$. Consider the path from $r_{t'}$ to $a$ in $G_{t'}$. Since $r_{t'} \in \Out {t} {t'-1} x$, and $a \notin \Out{t} {t'-1} x$, this path must include an edge $(y,z)$ such that $z \notin \Out {t} {t'-1} x$, $y \in \Out {t} {t'-1} x$.
\end{proof}

The following lemma will link the number of in-neighbors a node has to the number of in-neighbors it has among the roots of the preceding rounds:

\begin{lemma}\label{manyones}
    Let $x$ be a process, and $t_1, t_2$ be rounds such that $t_1 \leq t_2$. Then:
    $$
    \card{\{t: t_1\leq t\leq t_2, r_t \notin \I {t_1} {t_2} x\}}+1 \leq \card{\I {t_1} {t_2} x}
    $$
\end{lemma}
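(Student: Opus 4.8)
The plan is to induct on $t_2 - t_1$, using Lemma~\ref{propagation}(\ref{ingetsbig}) at the left endpoint $t_1$ to peel off one round at a time. For the base case $t_1 = t_2 + 1$ (or more naturally $t_1 > t_2$), the interval of rounds is empty so the left-hand count is $0$, while $\I{t_1}{t_2}{x} = \{x\}$ has size $1$, and $0 + 1 \le 1$ holds. For $t_1 = t_2$, either $r_{t_1} \in \I{t_1}{t_1}{x}$, in which case the left count is $0$ and we just need $1 \le \card{\I{t_1}{t_1}{x}}$, or $r_{t_1} \notin \I{t_1}{t_1}{x}$, in which case the left count is $1$ and $\I{t_1}{t_1}{x}$ must contain both $x$ and at least one other node (the one guaranteed by the path argument in Lemma~\ref{propagation}), giving size $\ge 2$.

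For the inductive step, suppose the claim holds for all intervals strictly shorter than $[t_1, t_2]$, and consider $x$ with $t_1 \le t_2$. I distinguish two cases according to whether $r_{t_1} \in \I{t_1}{t_2}{x}$ or not.

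\emph{Case 1: $r_{t_1} \notin \I{t_1}{t_2}{x}$.} Then by Lemma~\ref{propagation}(\ref{ingetsbig}) (applied with $t = t_1$, $t' = t_2$), we get $\card{\I{t_1}{t_2}{x}} \ge \card{\I{t_1+1}{t_2}{x}} + 1$. Meanwhile, by Monotonicity (Lemma~\ref{trivialinclusion}), $\I{t_1+1}{t_2}{x} \subseteq \I{t_1}{t_2}{x}$, so the set $\{t : t_1+1 \le t \le t_2, r_t \notin \I{t_1+1}{t_2}{x}\}$ has size at least $\card{\{t : t_1+1 \le t \le t_2, r_t \notin \I{t_1}{t_2}{x}\}}$, which is at least $\card{\{t : t_1 \le t \le t_2, r_t \notin \I{t_1}{t_2}{x}\}} - 1$ (we lose at most the single round $t = t_1$). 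Applying the induction hypothesis to the interval $[t_1+1, t_2]$ and chaining these inequalities yields the claim. The one delicate point is that monotonicity gives us the containment needed to transfer ``$r_t$ is missing from the $[t_1,t_2]$-in-neighborhood'' to ``$r_t$ is missing from the $[t_1+1,t_2]$-in-neighborhood'', so the count of rounds with a missing root does not shrink by more than one.

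\emph{Case 2: $r_{t_1} \in \I{t_1}{t_2}{x}$.} Now $t_1$ does not contribute to the left-hand count, so $\card{\{t : t_1 \le t \le t_2, r_t \notin \I{t_1}{t_2}{x}\}} = \card{\{t : t_1+1 \le t \le t_2, r_t \notin \I{t_1}{t_2}{x}\}} \le \card{\{t : t_1+1 \le t \le t_2, r_t \notin \I{t_1+1}{t_2}{x}\}}$, again using monotonicity for the last step. Then the induction hypothesis on $[t_1+1, t_2]$ gives $\card{\{t : t_1+1 \le t \le t_2, r_t \notin \I{t_1+1}{t_2}{x}\}} + 1 \le \card{\I{t_1+1}{t_2}{x}} \le \card{\I{t_1}{t_2}{x}}$, and we are done.

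The main obstacle I anticipate is bookkeeping the interaction between the two monotonicity containments and the round-counting sets: one has to be careful that passing from interval $[t_1,t_2]$ to $[t_1+1,t_2]$ does not spuriously increase the count of ``missing-root'' rounds in a way that breaks the bound — but in fact the containment $\I{t_1+1}{t_2}{x} \subseteq \I{t_1}{t_2}{x}$ works in our favor, since a root missing from the larger set is also missing from the smaller one. A secondary subtlety is making sure the induction is set up so that Lemma~\ref{propagation}(\ref{ingetsbig}) is always applicable in Case 1, which it is, since its only hypothesis is exactly $r_{t_1} \notin \I{t_1+1}{t_2}{x}$ — wait, one must check the indices: the lemma requires $r_t \notin \I{t+1}{t'}{x}$, so with $t=t_1, t'=t_2$ we need $r_{t_1} \notin \I{t_1+1}{t_2}{x}$, which follows from our Case 1 hypothesis $r_{t_1} \notin \I{t_1}{t_2}{x}$ together with monotonicity. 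This is the kind of index-chasing that needs care but presents no conceptual difficulty.
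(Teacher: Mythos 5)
Your proof is correct and is essentially the same argument as the paper's: both rest on Lemma~\ref{propagation}.i together with Monotonicity (Lemma~\ref{trivialinclusion}), the paper simply enumerating the rounds $t$ with $r_t \notin \I{t_1}{t_2}{x}$ and chaining the resulting strict increases into one telescoping sequence of cardinalities, while your induction on $t_2-t_1$ unrolls that same chain one round at a time. The bookkeeping in your two cases (transferring the ``missing-root'' count to the interval $[t_1+1,t_2]$ via the containment $\I{t_1+1}{t_2}{x}\subseteq\I{t_1}{t_2}{x}$) is exactly the step the paper handles with its non-strict inequalities, so there is no gap.
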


This will be proven using Lemma~\ref{propagation}.\ref{ingetsbig}, since every time $r_t \notin \I {t_1} {t_2} x$, $\I {t_1} {t_2} x$ gets larger:

\begin{proof}
    Let $A=\{t: t_1\leq t\leq t_2, r_t \notin \I {t_1} {t_2} x\}$. Then, in particular, for any $ t \in A$, we have $r_t \notin \I {t+1} {t_2} x$. Then, for all $t \in a$, applying Lemma~\ref{propagation}.\ref{ingetsbig}, we have $\card{\I {t} {t_2} x} > \card{ \I {t+1} {t_2} x }$. Let $A=\{t^1, \hdots, t^k\}$, with $t^i<t^{i+1}$ for any $1\leq i<k$. Then:
    $$
    \card{\I {t_1} {t_2} x} \geq \card{\I {t^1} {t_2} x } >\card{\I {t^1+1} {t_2} x } \geq \card{\I {t^2} {t_2} x }>\card{\I {t^2+1} {t_2} x }\geq \hdots >\card{\I {t^k+1} {t_2} x }\geq 1
    $$

    Where the non-strict inequalities derive by Monotonicity(Lemma~\ref{trivialinclusion}), and the last one from the fact that $t^k+1 \leq t_2+1 \Rightarrow x \in \I {t^k+1} {t_2} x$. There are $k=\card{A}$ strict inequalities over integers, which concludes the proof.
\end{proof}

We now define the \emph{rounds graph}, which will keep track of the information -- the in-neighbors -- the root of each round has:

\begin{definition}
    We define the \emph{rounds graph} as follows:

    The graph has $2n+\ceil{\sqrt2 n}$ nodes: one node representing each process, and one node for each of the first $\ceil{(1+\sqrt{2})n}$ rounds.

    And it has the directed edges: one edge from a process $p$ to a round $t$ if $p \in \I 1 {t-1} {r_t}$, and one edge from a round $t<\ceil{\sqrt 2 n}$ to a round $t'>t$ if $r_t \in \I 1 {t'-1} {r_{t'}}$.
\end{definition}

We will now show that there is at least a node of out-degree $n$ in that graph, which will translate into a process that has broadcast its piece of information to everyone:

\begin{lemma}\label{lem:pig1}
    In the rounds graph, there is at least a node of out-degree $n$.
\end{lemma}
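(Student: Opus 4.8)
The plan is to count edges in the rounds graph and apply pigeonhole. The graph has $2n + \ceil{\sqrt 2 n}$ nodes: $n$ process-nodes, and $\ceil{(1+\sqrt 2)n}$ round-nodes (note $\ceil{(1+\sqrt 2)n} = n + \ceil{\sqrt 2 n}$, so the count is consistent). The only nodes that can have positive out-degree are the $n$ process-nodes and the round-nodes $t < \ceil{\sqrt 2 n}$, so there are at most $n + \ceil{\sqrt 2 n}$ nodes with outgoing edges. If I can show the total number of edges is strictly more than $(n-1)(n + \ceil{\sqrt 2 n})$, then some node has out-degree at least $n$, which (since out-degrees are integers and the codomain of each edge type is bounded appropriately) forces out-degree exactly $n$ — well, at least $n$; I'd want to argue it is exactly $n$, or rather that out-degree $\geq n$ already suffices for the later argument, so showing $> (n-1)(\#\text{nodes with out-edges})$ is the goal.

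First I would lower-bound the in-degree of each round-node $t$. By definition, the edges into round $t$ come from process-nodes $p \in \I 1 {t-1} {r_t}$ and from round-nodes $t' < t$ with $r_{t'} \in \I 1 {t-1} {r_t}$. Apply Lemma~\ref{manyones} with $x = r_t$, $t_1 = 1$, $t_2 = t-1$: writing $A = \{t' : 1 \leq t' \leq t-1, r_{t'} \notin \I 1 {t-1} {r_t}\}$, we get $\card A + 1 \leq \card{\I 1 {t-1} {r_t}}$. Now every $t' \in \{1,\dots,t-1\}$ contributes one in-edge to $t$: if $t' \notin A$ then $r_{t'} \in \I 1 {t-1} {r_t}$, giving a round-to-round edge; the processes in $\I 1 {t-1} {r_t}$ give process-to-round edges. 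I must be slightly careful that these two sources can overlap (a process $r_{t'}$ counted both as a process and as a round), but the clean statement is: the in-degree of round-node $t$ is at least $(t-1) - \card A + \card{\I 1 {t-1} {r_t}} \geq (t-1) - \card A + \card A + 1 = t$. Hmm — I need to recheck whether the process-endpoint and round-endpoint of an edge are distinct nodes in the rounds graph (they are: one is a "process" node, one is a "round" node), so there is no double counting between the two edge classes. Thus $\mathrm{indeg}(t) \geq t$ for every round $t \leq \ceil{(1+\sqrt 2)n}$; summing gives total edge count at least $\sum_{t=1}^{\ceil{(1+\sqrt 2)n}} t = \binom{\ceil{(1+\sqrt2)n}+1}{2}$.

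Then I would compare this against $(n-1)(n + \ceil{\sqrt 2 n})$, the maximum possible total out-degree if no node reached out-degree $n$. Setting $m = \ceil{\sqrt 2 n}$ so the round count is $n+m$, the inequality to check is $\frac{(n+m)(n+m+1)}{2} > (n-1)(n+m)$, i.e. $n + m + 1 > 2(n-1)$, i.e. $m > n - 3$, i.e. $m \geq n-2$. Since $m = \ceil{\sqrt 2 n} \geq \sqrt 2 n > n$ for $n \geq 1$, this holds comfortably. Hence some node has out-degree $\geq n$, and since every node's out-neighborhood is a subset of the $n$ round-nodes (for process-nodes) or... wait, round-nodes point to round-nodes, so a round-node's out-degree is bounded by the number of round-nodes, which exceeds $n$; to pin it at exactly $n$ I should note that the relevant "broadcast has happened" conclusion only needs out-degree $\geq n$ restricted appropriately — so I'll state the lemma's conclusion as "out-degree at least $n$" and rely on the more refined structure in the next lemma. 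The main obstacle I anticipate is the bookkeeping around whether in-edges into a round-node might be double-counted or whether the process-vs-round node identification causes an off-by-one; getting the clean bound $\mathrm{indeg}(t) \geq t$ (rather than $t-1$ or $t+1$) is where the argument must be executed with care, and it is exactly the place where Lemma~\ref{manyones}'s "$+1$" is consumed.
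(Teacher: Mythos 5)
Your overall strategy---lower-bound the in-degrees of the round nodes via Lemma~\ref{manyones}, sum the in-degrees, and apply pigeonhole over the nodes that can have out-edges---is the same as the paper's, but your in-degree bookkeeping contains a genuine error. You claim $\mathrm{indeg}(t)\geq t$ for \emph{every} round node $t$ up to $\ceil{(1+\sqrt 2)n}$, arguing that each $t'\leq t-1$ with $r_{t'}\in \I 1 {t-1} {r_t}$ contributes a round-to-round edge. But by the definition of the rounds graph, round-to-round edges only emanate from rounds $t'<\ceil{\sqrt 2 n}$; a round $t'\geq \ceil{\sqrt 2 n}$ contributes no edge to $t$ regardless of whether $r_{t'}\in\I 1 {t-1}{r_t}$. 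You in fact \emph{use} this restriction when you bound the number of nodes with positive out-degree by $n+\ceil{\sqrt 2 n}$, so you cannot silently drop it when counting in-edges. The claimed bound is not merely unjustified but false for late rounds: the last round node $t=\ceil{(1+\sqrt2)n}=n+\ceil{\sqrt2 n}$ has at most $n+\ceil{\sqrt2 n}-1<t$ potential in-neighbors. Consequently your edge total $\binom{\ceil{(1+\sqrt2)n}+1}{2}\approx 2.91n^2$ overstates what the graph guarantees. The correct statement (and what the paper proves, again via Lemma~\ref{manyones} but restricting the count of ``roots seen'' to $t'<\ceil{\sqrt2 n}$) is $\mathrm{indeg}(t)\geq t$ only for $t\leq\ceil{\sqrt2 n}$, and $\mathrm{indeg}(t)\geq\ceil{\sqrt2 n}$ for all later rounds.

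The error is local and repairable: with $m=\ceil{\sqrt 2 n}$ the corrected guarantee gives $\card E\geq \frac{m(m+1)}{2}+nm$, and your pigeonhole threshold $(n-1)(n+m)$ is still cleared, since the inequality $\frac{m(m+1)}{2}+nm>(n-1)(n+m)$ reduces to $m^2+3m+2n>2n^2$, which holds because $m\geq\sqrt2 n$. This is essentially the computation in the paper (stated there as $\card E>\ceil{(1+\sqrt2)n}\cdot n$ against fewer than $\ceil{(1+\sqrt2)n}$ nodes with out-edges), and it is exactly why the constant $1+\sqrt2$ appears: the margin is thin, and the split between ``in-degree at least $t$'' for early rounds and the cap $\ceil{\sqrt2 n}$ for late rounds is where the argument genuinely lives. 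Your worry about pinning the out-degree at exactly $n$ is a non-issue; the lemma and the theorem that uses it only need out-degree at least $n$.
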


\begin{proof}
    Let us look at round $t$. If $t\leq\ceil{\sqrt 2 n}$, then $t$ has in-degree at least $t$. Indeed, it has in-degree:
    \begin{multline*}
    \card{\{t': 1\leq t'\leq t-1, {r_{t'} }\in  \I {1} {t-1} {r_t}\}}+ \card{\I {1} {t-1} x}\\ \geq 1+\card{\{t': 1\leq t'\leq t-1, {r_{t'} }\in \I {1} {t-1} {r_t}\}}+\card{\{t': 1\leq t'\leq t-1, {r_{t'} }\notin \I {1} {t-1} {r_t}\}}=t
    \end{multline*}
    where we used Lemma~\ref{manyones} for the inequality.
    Similarly, if $t>\ceil{\sqrt 2 n}$, then $t$ has in-degree at least $\ceil{\sqrt 2 n}$. Indeed, it has in-degree:
    \begin{multline*}
    \card{\{t': 1\leq t'< \ceil{\sqrt 2 n}, {r_{t'} }\in  \I {1} {t-1} {r_t}\}}+ \card{\I {1} {t-1} {r_t}}\\ \geq 1+\card{\{t': 1\leq t'< \ceil{\sqrt 2 n}, {r_{t'} }\in \I {1} {t-1} {r_t}\}}+\card{\{t': 1\leq t'\leq t-1, {r_{t'} }\notin \I {1} {t-1} {r_t}\}}\\
    \geq 1+\card{\{t': 1\leq t'< \ceil{\sqrt 2 n}, {r_{t'} }\in \I {1} {t-1} {r_t}\}}+\card{\{t': 1\leq t'<\ceil{\sqrt 2 n}, {r_{t'} }\notin \I {1} {t-1} {r_t}\}}\\=\ceil{\sqrt 2 n}
    \end{multline*}
    where we used Lemma~\ref{manyones} for the first inequality.

    Summing the in-degrees over all the rounds, we get that the number of edges $\card{E}$ is at least:

    $$
    \card{E}\geq \sum_{t=1}^{\ceil{\sqrt 2 n}}t+n\times \ceil{\sqrt 2 n} = \frac {\ceil{\sqrt 2 n}\left(\ceil{\sqrt 2 n}+1\right)} 2 + n \ceil{\sqrt 2 n}> \ceil{(1+\sqrt 2) n}n
    $$

    But only the $n$ nodes representing the processes and the nodes representing the first $\ceil{\sqrt 2 n}-1$ rounds have out edges. The pigeonhole principle asserts then that one of those nodes has an out-degree of at least $n$.
\end{proof}

\begin{theorem}
    $\T(\RT_n) \leq \ceil{(1+\sqrt 2)n}$
\end{theorem}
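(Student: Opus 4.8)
The plan is to feed the conclusion of Lemma~\ref{lem:pig1} --- a vertex of out-degree $n$ in the rounds graph --- into a short monotonicity argument. Only process-vertices and round-vertices $t_0<\ceil{\sqrt 2 n}$ have outgoing edges in the rounds graph, so I split into those two cases, but in each I extract the same object: a process $x$ together with $n$ pairwise distinct rounds $t^{(1)}<\cdots<t^{(n)}$, all at most $\ceil{(1+\sqrt 2)n}$, such that $r_{t^{(i)}}\in\Out 1 {t^{(i)}-1} x$ for every $i$. If the high-out-degree vertex is a process $p$, its $n$ out-neighbours are rounds $t$ with $p\in\I 1 {t-1} {r_t}$, equivalently $r_t\in\Out 1 {t-1} p$ by Lemma~\ref{easy}; put $x=p$. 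If it is a round $t_0$, its $n$ out-neighbours are rounds $t>t_0$ with $r_{t_0}\in\I 1 {t-1} {r_t}$, equivalently $r_t\in\Out 1 {t-1} {r_{t_0}}$ by Lemma~\ref{easy}; put $x=r_{t_0}$. In both cases the $n$ rounds are distinct and bounded by $\ceil{(1+\sqrt 2)n}$ because that is how many round-vertices the rounds graph has.

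Next I claim $x$ has broadcast by round $t^{(n)}$. Suppose not, so $\Out 1 {t^{(i)}} x\neq[n]$ for each $i$, hence by Monotonicity (Lemma~\ref{trivialinclusion}) also $\card{\Out 1 {t^{(i)}-1} x}\leq\card{\Out 1 {t^{(i)}} x}<n$. Then Lemma~\ref{propagation}.\ref{outgetsbig}, applied with the hypothesis $r_{t^{(i)}}\in\Out 1 {t^{(i)}-1} x$, gives $\card{\Out 1 {t^{(i)}} x}\geq\card{\Out 1 {t^{(i)}-1} x}+1$, and Monotonicity gives $\card{\Out 1 {t^{(i)}-1} x}\geq\card{\Out 1 {t^{(i-1)}} x}$ since $t^{(i)}-1\geq t^{(i-1)}$. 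Starting from $\card{\Out 1 {t^{(1)}-1} x}\geq 1$ (the set contains $x$), induction gives $\card{\Out 1 {t^{(i)}} x}\geq i+1$, so $\card{\Out 1 {t^{(n)}} x}\geq n+1$, a contradiction. Therefore $\Out 1 {t^{(j)}} x=[n]$ for some $j\leq n$, i.e.\ $x$ has an out-edge to every node of $G_1\circ\cdots\circ G_{t^{(j)}}$, so broadcast has happened by round $t^{(j)}\leq\ceil{(1+\sqrt 2)n}$.

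The rounds graph was built from an arbitrary admissible sequence $G_1,G_2,\dots$ with every $G_i\in\RT_n$, so this shows $\T(G_1,G_2,\dots)\leq\ceil{(1+\sqrt 2)n}$ for every such sequence, and taking the maximum over sequences gives $\T(\RT_n)\leq\ceil{(1+\sqrt 2)n}$. I do not expect a real obstacle here: the quantitative heart of the argument is already inside Lemma~\ref{lem:pig1} (the in-degree accounting and the pigeonhole count that produce the constant $1+\sqrt 2$), and what remains is decoding. The only points that need care are the base case $\card{\Out 1 {t^{(1)}-1} x}\geq 1$ (valid because rounds start at $1$), the realization that the ``unless $=n$'' clause of Lemma~\ref{propagation}.\ref{outgetsbig} is precisely what closes the argument rather than blocking it, and checking that both the distinctness and the bound $t^{(i)}\leq\ceil{(1+\sqrt 2)n}$ of the extracted rounds are genuinely supplied by the rounds graph.
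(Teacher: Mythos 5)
Your proposal is correct and follows essentially the same route as the paper: extract the out-degree-$n$ vertex from Lemma~\ref{lem:pig1}, convert it to a single process with $n$ rounds whose roots lie in its out-set, and chain Lemma~\ref{propagation}.ii with Monotonicity to force the out-neighbourhood past size $n$. The only cosmetic difference is that you localize the contradiction to the extracted process (checking the ``unless $=n$'' clause via monotonicity) rather than assuming globally, as the paper does, that no process has broadcast by round $\ceil{(1+\sqrt 2)n}$.
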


\begin{proof}
    Assume it is not the case, that is, for every process $x \in [n]$, for every round $t\leq \ceil{(1+\sqrt 2)n}$, $\card{\Out 1 t x}<n$. We know that in the rounds graph, there is a node $y$ of degree at least $n$. Define $z$ as follows: if $y$ represents a process, let $z$ be that process. If $y$ represents a round, let $z$ be the root of that round. We will show that $z$ must have broadcast before $\ceil{(1+\sqrt 2)n}$ rounds.

    Let $t_1< \hdots < t_n$ be the rounds $y$ has out-edges to. By definition, this means that $z\in \I 1 {t_i-1} {r_{t_i}} \Leftrightarrow r_{t_i} \in \Out 1 {t_i-1} z$ for every $i \in [n]$. By Lemma~\ref{propagation}\ref{outgetsbig}, we thus have, for every $i \in [n]$, that $\card{\Out 1 {t_i} z} >\card{ \Out 1 {t_i-1} z}$. Then, using  Monotonicity (Lemma~\ref{trivialinclusion}) for non-strict inequalities:
    $$
    \card{\Out 1 {t_n} z} >\card{ \Out 1 {t_n-1} z}\geq \card{\Out 1 {t_{n-1}} z} >\card{ \Out 1 {t_{n-1}-1} z}\geq \hdots \geq\card{\Out 1 {t_1} z} >\card{ \Out 1 {t_1-1} z}\geq 0
    $$

    We have $n$ strict inequalities over non-negative integers, the largest one must be at least $n$, which is a contradiction.    
\end{proof}

\subsection{The Lower Bound}
A lower bound for this problem has been given by Zeiner, Schwarz, and Schmid~\cite{schwarz2017linear}:

\begin{restatable}{theorem}{lowerb}
   $\T(\RT_n) \geq \ceil{\frac{3n-1}{2}-2}$
\end{restatable}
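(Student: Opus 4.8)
The plan is to construct an explicit adversarial sequence of rooted trees that delays broadcast for roughly $\frac{3n}{2}$ rounds. The key design principle is to keep the set of "already-broadcast-to" nodes small for each candidate root for as long as possible, by exploiting the fact that a rooted tree in $\RT_n$ must have a single root whose identifier can only reach its descendants. The standard trick in these constructions is to arrange the trees so that in each round, the identifier of the current root propagates to only one new node, and moreover the roots are rotated among a pool of nodes so that no single node accumulates out-neighbors quickly in the product graph. Concretely, I would partition $[n]$ into a small "core" of nodes that take turns being roots and a "reservoir" of nodes that receive information slowly; the adversary alternates between a phase that lets a root pass its id to exactly one fresh reservoir node and a phase that "recycles" the core so that the next root starts almost from scratch.

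The main steps, in order, would be: (1) describe the tree used in each round $t$ as a function of the current state — i.e., of the out-neighborhood $\Out 1 {t-1} x$ of each node $x$ in the product graph so far — choosing the root and the edges so that for every node $x$, $\card{\Out 1 t x} \leq \card{\Out 1 {t-1} x} + 1$, and equality holds for at most the current root and one other node; (2) verify that this choice is always realizable as a genuine rooted tree with self-loops in $\RT_n$ (this is where one must be careful: the tree must be connected from a single root, so the "recycling" edges have to be threaded consistently); (3) track a potential function, e.g. $\Phi(t) = \max_x \card{\Out 1 t x}$ or the sum $\sum_x \card{\Out 1 t x}$, and show it grows by at most a constant — ideally $2$ every $3$ rounds on average, which is exactly what yields the $\frac{3n}{2}$ bound; (4) conclude that $\Phi(t) < n$ for all $t < \ceil{\frac{3n-1}{2}}-2$, so no node has broadcast by then. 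The constant $\frac32$ comes from a three-round gadget: two rounds to move the root's information one step into the reservoir, and one round to reset, so that each node gains at most $2$ out-neighbors per $3$ rounds.

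The hard part will be step (2) together with the accounting in step (3): designing the per-round trees so that the growth is genuinely capped at the $2$-per-$3$-rounds rate for *every* node simultaneously, not just for one distinguished candidate. Since a rooted tree forces the root's reachable set to only increase, and the product graph accumulates edges monotonically (Lemma~\ref{trivialinclusion}), the adversary has limited freedom; the cleverness is in rotating which node is the root and in structuring the tree as essentially a path with a controlled attachment point, so that the "new edge" guaranteed by the footnote argument is always the *same* edge one would have added anyway. I would expect to handle small additive discrepancies (the $-2$ and the ceiling) by treating the first few and last few rounds separately, and to present the bulk of the argument as an induction on the round number with the invariant "$\card{\Out 1 t x} \leq \floor{\frac{2t}{3}} + c$ for all $x$" for a suitable small constant $c$. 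Since the excerpt says this theorem is due to Zeiner, Schwarz, and Schmid~\cite{schwarz2017linear} and defers it to the appendix, I would ultimately cite their construction, but the above is the route I would take to reconstruct it.
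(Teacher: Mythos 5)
Your proposal never actually produces the object the proof needs: the explicit sequence of rooted trees. Steps (2) and (3) of your plan — realizability of the per-round trees and the simultaneous accounting for all nodes — are precisely where the whole difficulty sits, you acknowledge them as open, and you end by deferring to the construction of Zeiner, Schwarz and Schmid~\cite{schwarz2017linear}. That deferral is in fact what the paper itself does (its proof consists of displaying the explicit three-phase example of~\cite{schwarz2017linear} in a figure and citing it), but as a standalone argument your sketch is not a proof, and the reconstruction route you outline would not work as stated.

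Concretely, the design constraints in your step (1) are mutually inconsistent. In round $1$ every node's out-set is just $\{x\}$, so a node's gain in that round equals its number of children in the chosen tree; capping every node's gain at $+1$ forces every internal node to have exactly one child, i.e.\ the tree is a path, and then $n-1$ nodes gain an out-neighbor simultaneously — so ``equality holds for at most the current root and one other node'' is unachievable for $n\geq 4$, and any potential of the form $\sum_x \card{\Out 1 t x}$ necessarily jumps by $n-1$ in the first round, ruling out constant growth. The uniform invariant $\card{\Out 1 t x}\leq \floor{2t/3}+c$ for all $x$ is also not how the tight example behaves: there the adversary lets one distinguished node's out-degree grow by one per round for roughly $n/2$ rounds (a path with a fixed root), and the bound comes from then \emph{stalling} that node — switching to trees in which the boundary of its out-set has no outgoing edges, so it gains nothing further while a fresh root, whose identifier has reached almost nobody, must traverse nearly the whole network — rather than from a rate cap holding for every node in every round. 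To complete the proof you would have to exhibit such a phase construction explicitly (or reproduce the one in~\cite{schwarz2017linear}) and verify round by round that no node attains out-degree $n$ before round $\ceil{\frac{3n-1}{2}}-2$; the max-based or sum-based amortized bookkeeping you propose does not substitute for that verification.
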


A figure of that lower bound can be found in Appendix~\ref{appendix}.

\section{\texorpdfstring{Covering on $k$-Forests}{Covering on k-Forests}}\label{multicover}

In this section, we study an adversary that has to choose a communication network in each round that is a union of $k$ trees. In this setting, we cannot ensure broadcast, so we look at the time when there exists a cover of size $k$: $k$ processes such that any other process has heard of at least one of them. 

\subsection{The Upper Bound}

Even though the problem is very similar to broadcasting on trees, the proofs of Section~\ref{tree} do not translate in a straightforward way into an upper bound for covering on $k$-forests. We thus have a completely different proof for this problem.

The intuition of our approach is as follows: We will start with a cover of size $n$ at some time $t'=\expo t n$ that is large enough, and then go back in time until we find a process that can reach two other processes, say, $x$ and $y$, of that cover. Calling this process $p_{n-1}$ and the corresponding time $\expo t {n-1}$, we thus have $p_{n-1} \in \I {\expo t {n-1}} {\expo t n} x$ and $p_{n-1} \in \I {\expo t {n-1}} {\expo t n} y$. When repeating the process, we then remove $x$ and $y$ from our set of processes to cover, add $p_{n-1}$, and start over, until the cover has size $k$. 
We need to be careful to guarantee that rounds do not overlap.

Indeed, to remove $p_{n-1}$ from our set of processes to cover, we have to reach it before round $\expo t {n-1}$, otherwise we will not be guaranteed to reach $x$ or $y$ at time $\expo t n$. Thus, we will store with each process $x$ of the cover the corresponding round $t$ such that $x$ has to be reached by round $t$ by the process that replaces $x$ in the cover.
More specifically, we model the cover by a series of sets $(A_u)_{k\leq u\leq n}$, where each $A_u$ is a collection of $u$ pairs $(p,t)$, where $p$ is a process, and $t$ is a round. To compute $A_{u-1}$ from $A_u$, we have to find a process that can reach two processes $p_1$, $p_2$ by rounds $t_1$, $t_2$, such that $(p_1, t_1), (p_2, t_2) \in A_u$ and then we replace these two pairs by a new pair, creating the cover $A_{u-1}$.

In this section we first state the definitions and results of this section, before giving the full proof to each of our claims.
We first define what it means for a set $A$ of pairs $(p,t)$ to be a cover.

\begin{definition}
    A set $A=\{(a_1, t_1), \hdots , (a_s,t_s)\}$ of $s$ (process, round) pairs is a {\em cover of a set $B$ of processes for round $t\geq\max_s \{t_s\}$} if for every $b\in B$, there exists an $i \in [s]$ such that $b \in \Out {t_i+1} {t} {a_i}$.
\end{definition}

We next couple the cover property of set $A$ with \emph{strictness}, which %is a property that 
indicates 
that we did not (yet) go back enough in time to find a process that reaches two different processes in $A$.

\begin{definition}
    A set $A=\{(a_1, t_1), \hdots , (a_s,t_s)\}$ of $s$ (process, round) pairs is {\em strict at round $t$} if there exists no process $p\in [n]$ and $i,j \in [s], i \neq j$ such that $p \in \I t {t_i} {a_i}$ and $p \in \I t {t_j} {a_j}$.
\end{definition}

As we consider earlier and earlier rounds, the sets $\I t {t_i} {a_i}$ will get larger and larger, and $A$ will lose its strictness. Thus, we then define the following sequence of covers of $[n]$, and analyze their strictness over time. We carefully choose our set $A_s$ so that it has cardinality $s$. This means that $A_k$ has cardinality $k$, which is our goal.

\begin{restatable}{definition}{setsdefinition}\label{thesets}
    Let $t'$ be a large enough round. For every $k\leq s \leq n$, we define a sequence of strict sets $A_s$ and rounds $\expo t s$ as follows:

    Define $A_n=\{(i, t'): i \in [n]\}$, $\expo t n=t'$.

    Define $\expo t s=\max_{j\in\N}\{A_{s+1} \text{ is not strict at round } j\}$. 
    As $A_{s+1}$ is not strict at round $\expo t s$, there exist $i,j\in [s+1]$ and a process $p_s$ such that $p_s \in \I {\expo t s} {t_i} {a_i} \inter \I {\expo t s} {t_j} {a_j}$. If $(p_s, \expo t s-1) \in A_{s+1}$, 
    we define $A_{s}=A_{s+1}\setminus\{(a_i, t_i)\}$, else
    we define 
     $A_{s}=(A_{s+1}\setminus\{(a_i, t_i), (a_j,t_j)\})\union \{(p_s, \expo t s-1)\}$.

     Define  $\Delta_s=\expo t s-\expo t {s-1}$  for $k+1 \le s \le n$.
\end{restatable}

\noindent 
Recall that our goal is to upper bound $t'$, which can be done if
we upper bound $\sum_{s = k+1}^n \Delta_s$.
The strictness of a set is a key notion as a strict set has the following very useful property:

\begin{restatable}[Strict Increments]{lemma}{strictincrements}\label{bigin}
    Let $s \ge k + 1$ and let $A=\{(a_1, t_1), \hdots , (a_s, t_s)\}$ be a strict set of size $s$ at round $t$. Let $I\subseteq \{ i \in [s]: t \leq t_i+1\}$. Then there exists a set of indices $J\subseteq I$, $\card J \geq \card I-k$, such that for every $i \in J$, $\card {\I {t-1} {t_i} {a_i}} > \card {\I {t} {t_i} {a_i}}$.
\end{restatable}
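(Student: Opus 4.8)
We have a strict set $A=\{(a_1,t_1),\dots,(a_s,t_s)\}$ of size $s \ge k+1$ at round $t$, and $I \subseteq \{i \in [s]: t \le t_i+1\}$. We want $J \subseteq I$ with $\card J \ge \card I - k$ such that for each $i \in J$, passing from round $t$ back to round $t-1$ strictly enlarges the in-neighborhood of $a_i$ taken up to $t_i$, i.e. $\card{\I{t-1}{t_i}{a_i}} > \card{\I{t}{t_i}{a_i}}$.

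**Plan.** The natural tool is Lemma~\ref{propagation}.\ref{ingetsbig}: that lemma says that if the root $r_{t-1}$ of round $t-1$ is \emph{not} already in $\I{t}{t_i}{a_i}$, then the in-neighborhood grows when we prepend round $t-1$, namely $\card{\I{t-1}{t_i}{a_i}} > \card{\I{t}{t_i}{a_i}}$ (applying that lemma with its ``$t$'' set to $t-1$ and its ``$t'$'' set to $t_i$; note the hypothesis $t \le t_i+1$ is exactly what makes $\I{t}{t_i}{a_i}$ the ordinary in-neighborhood rather than $\{a_i\}$ or $\varnothing$). So the plan is: take $J = \{i \in I : r_{t-1} \notin \I{t}{t_i}{a_i}\}$. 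Then every $i \in J$ has the desired increment property for free. It remains to bound the number of ``bad'' indices $I \setminus J = \{i \in I : r_{t-1} \in \I{t}{t_i}{a_i}\}$.

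**Bounding the bad indices via strictness.** Here is where strictness enters. Suppose $i, j$ are two distinct bad indices. Then $r_{t-1} \in \I{t}{t_i}{a_i}$ and $r_{t-1} \in \I{t}{t_j}{a_j}$. But the definition of strictness at round $t$ says precisely that there is \emph{no} process $p$ and distinct $i,j \in [s]$ with $p \in \I{t}{t_i}{a_i}$ and $p \in \I{t}{t_j}{a_j}$. Taking $p = r_{t-1}$, this is an immediate contradiction. Hence there is \emph{at most one} bad index, so $\card{I \setminus J} \le 1$, which gives $\card J \ge \card I - 1 \ge \card I - k$ since $k \ge 1$. (In fact this shows something stronger than claimed — the ``$-k$'' slack is generous — but the statement as written follows a fortiori.)

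**Anticipated obstacles.** The argument is short, and the only real care needed is bookkeeping about the neighborhood conventions: I must check that the index shift in applying Lemma~\ref{propagation}.\ref{ingetsbig} is legitimate, i.e. that with ``round $t-1$ in the role of the lemma's $t$'' the hypothesis $t-1 \le t_i$ holds — which is guaranteed by $i \in I$, since $I \subseteq \{i: t \le t_i + 1\}$ means $t - 1 \le t_i$. A secondary subtlety: if $t - 1 = t_i + 1$, i.e. $t = t_i + 2$, then $\I{t}{t_i}{a_i} = \varnothing$ by definition and the inequality $\card{\I{t-1}{t_i}{a_i}} > \card{\I{t}{t_i}{a_i}} = 0$ would need $\I{t-1}{t_i}{a_i} = \{a_i\} \ne \varnothing$, which holds; but actually $i \in I$ forces $t \le t_i+1$, so this edge case does not arise and $\I{t}{t_i}{a_i}$ is a genuine in-neighborhood containing $a_i$. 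So the only possible pitfall is a miscount in the strictness contradiction, and that is handled by the pigeonhole observation that two bad indices would both witness $r_{t-1}$ violating strictness. I therefore expect no genuine difficulty here; the lemma is a direct consequence of Lemma~\ref{propagation}.\ref{ingetsbig} together with the definition of strictness.
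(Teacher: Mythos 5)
There is a genuine gap: you have proved the lemma in the wrong model. The Strict Increments Lemma belongs to Section~\ref{multicover}, where the adversary picks graphs from $\F_n^k$, i.e.\ each round's graph is a union of $k$ rooted trees. There is therefore no single root $r_{t-1}$ of round $t-1$, but rather $k$ of them, one per tree, and Lemma~\ref{propagation}.\ref{ingetsbig} (which is stated and proved for the single-rooted-tree adversary of Section~\ref{tree}; its proof follows the path from \emph{the} root of $G_t$ to $x$) cannot be invoked as you do. In a $k$-forest a given root only reaches the nodes of its own tree, so the correct hypothesis for the increment is that $r_{t-1}(a_i)$, the root of the tree of $G_{t-1}$ that contains $a_i$, lies outside $\I{t}{t_i}{a_i}$; one then reruns the path argument inside that tree and uses Transitivity (Lemma~\ref{transitive}) to get $\card{\I{t-1}{t_i}{a_i}}>\card{\I{t}{t_i}{a_i}}$. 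Your ``bad set'' $\{i\in I: r_{t-1}\in \I{t}{t_i}{a_i}\}$ is not even well defined in this setting.

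Moreover, your stronger conclusion $\card J\ge \card I-1$ --- which you present as evidence that the ``$-k$'' is generous slack --- is false in this model, and should itself have been a warning sign. Strictness only forbids a \emph{single} process from lying in two of the sets $\I{t}{t_i}{a_i}$; with $k$ distinct roots in round $t-1$, each root can lie in a different $\I{t}{t_i}{a_i}$, so up to $k$ indices of $I$ can have the relevant root inside their in-neighborhood. That is exactly where the paper's bound comes from: each of the at most $k$ roots eliminates at most one index by strictness, so for at least $\card I-k$ indices no root of round $t-1$ is in $\I{t}{t_i}{a_i}$, and in particular $r_{t-1}(a_i)\notin\I{t}{t_i}{a_i}$, after which the path from $r_{t-1}(a_i)$ to $a_i$ in $G_{t-1}$ must cross the boundary of $\I{t}{t_i}{a_i}$ and the cardinality strictly increases. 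Your strictness/pigeonhole instinct is the right one, but it must be applied once per root of the forest, not to a unique root of the round.
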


\begin{proof}  Consider a root $r$ of round $t-1$. As $A$ is strict at round $t$ is follows that 
    $r \in \I {t} {t_i} {a_i}$ for  at most one $a_i$ with $i \in I$. As there are at most
    $k$ roots in round $t-1$, it follows that there are at least $\card I-k$ values $i \in  I $ such that none of the roots of round $t-1$ is in $\I {t} {t_i} {a_i}$. Let $J$ be the set of all such values of $i$. Let us denote for any $p \in [n]$  the root of the tree that contains $p$ in round $t$ by the value $r_t(p)$. It follows that in particular
    $r_{t-1}(a_i) \notin \I {t} {t_i} {a_i}$. Since $t\leq t_i+1$, we have that $a_i \in \I {t} {t_i} {a_i}$.
    Now for each such $i$ the path from $r_{t-1}(a_i)$ to $a_i$ in $G_{t-1}$ must contain an edge $(x,y)$ such that $x \notin \I {t} {t_i} {a_i}$, and $y \in \I {t} {t_i} {a_i}$. By Transitivity (Lemma~\ref{transitive}), it holds that $(\I{t} {t_i} {a_i}\union\{x\}) \subseteq \I{t-1} {t_i} {a_i}$, which implies that $\card {\I{t-1} {t_i} {a_i}} > \card {\I{t} {t_i} {a_i}}$.
\end{proof}

\noindent It is not hard to show that \emph{all}
$a_i$ of $A_s$ fulfill $\expo t s \le t_i+1$. The following lemma helps us find more sets that satisfy the conditions of the Strict Increments Lemma. This essentially follows from the fact that, by construction,  at least $2s-u$ elements from $A_u$ are shared with $A_s$.

\begin{restatable}{lemma}{largesetoft}\label{largesetoft}
    Let $s,u\in \{k, \hdots , n\}$ such that $s\leq u$. Let $A_s=\{(a_1, t_1), \hdots, (a_s, t_s)\}$. Let $I=\{i \in [s]: \expo t u \leq t_i+1\}$. Then it holds that $\card I \geq 2s-u$.
\end{restatable}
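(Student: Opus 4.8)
The plan is to prove the inequality by downward induction on $s$, running from $s=u$ down to the given value, exploiting the purely combinatorial fact that the recursion of Definition~\ref{thesets} discards at most two pairs (while inserting at most one) when it passes from $A_{s+1}$ to $A_s$. In particular, one never needs to track \emph{which} pairs survive, only how many are forgotten at each step.

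For the base case $s=u$ I would invoke the auxiliary fact recorded just before the lemma --- that every $a_i$ of $A_s$ satisfies $\expo t s \le t_i+1$ --- applied to the set $A_u$. It says that \emph{every} index $i\in[u]$ already belongs to $I=\{i:\expo t u \le t_i+1\}$, hence $\card I = u = 2u-u$, which is exactly the desired bound. (If a self-contained argument is wanted, this fact is itself a consequence of the monotonicity of $s\mapsto\expo t s$: an ``original'' pair of $A_u$ carries the time stamp $t'=\expo t n\ge \expo t u$, while a pair inserted at some step $s''\ge u$ carries the time stamp $\expo t {s''}-1\ge \expo t u-1$, and the most recently inserted pair $(p_u,\expo t u -1)$ meets the bound with equality.)

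For the inductive step, assume $s<u$ and that the claim holds for $s+1$; writing $A_{s+1}=\{(a_1,t_1),\dots,(a_{s+1},t_{s+1})\}$ this means $\card{\{i\in[s+1]:\expo t u\le t_i+1\}}\ge 2(s+1)-u$. By Definition~\ref{thesets}, $A_s$ is obtained from $A_{s+1}$ either by deleting one pair $(a_i,t_i)$, or by deleting two pairs $(a_i,t_i),(a_j,t_j)$ and inserting the single pair $(p_s,\expo t s-1)$. In both cases $A_s$ retains all but at most two of the pairs of $A_{s+1}$, and in particular all but at most two of those whose time stamp is at least $\expo t u-1$; the newly inserted pair, if present, can only increase the count. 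Hence $A_s$ contains at least $2(s+1)-u-2 = 2s-u$ pairs $(a_\ell,t_\ell)$ with $\expo t u\le t_\ell+1$, which is the claim for $s$. Iterating this down from $s=u$ finishes the proof.

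The only real subtlety is the base case, which rests entirely on the auxiliary fact that the time stamps occurring in $A_u$ are all at least $\expo t u-1$ (equivalently, on the monotonicity of the rounds $\expo t s$); everything after that is bookkeeping about how many pairs a single step of the construction can delete.
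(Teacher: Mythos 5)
Your proposal is correct and takes essentially the same route as the paper: the paper packages your downward induction as Lemma~\ref{biginter} ($\card{A_s\inter A_u}\geq 2s-u$, proved by exactly your ``at most two pairs discarded per step'' bookkeeping) combined with Corollary~\ref{nicelyordered} (every time stamp in $A_u$ satisfies $\expo t u\leq t+1$), which is precisely your base case. The only cosmetic difference is that the paper tracks the surviving intersection $A_s\inter A_u$ while you track the count of pairs meeting the time-stamp condition directly; the underlying argument is identical.
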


\noindent We now define the strict rounds graph, which we use to analyze the values of $\Delta_s$ as $s$ varies from $n$ to $k$. A depiction of that graph can be seen in \figref{fig:strict}.

\begin{restatable}{definition}{strictrounds}
    The \emph{strict rounds graph} $(V,E)$ consists of $n-k$ vertices, labeled from $k+1$ to $n$, where each vertex $s$ has weight $\Delta_s$.
    There exists a directed edge from vertex $u$ to vertex $s$  if $s\leq u \le \min\{2s-k-1,n\}$, and its weight is $w(u,s)=2s-k-u$.
\end{restatable}

\begin{figure}
    \centering
    \includegraphics[width=\linewidth]{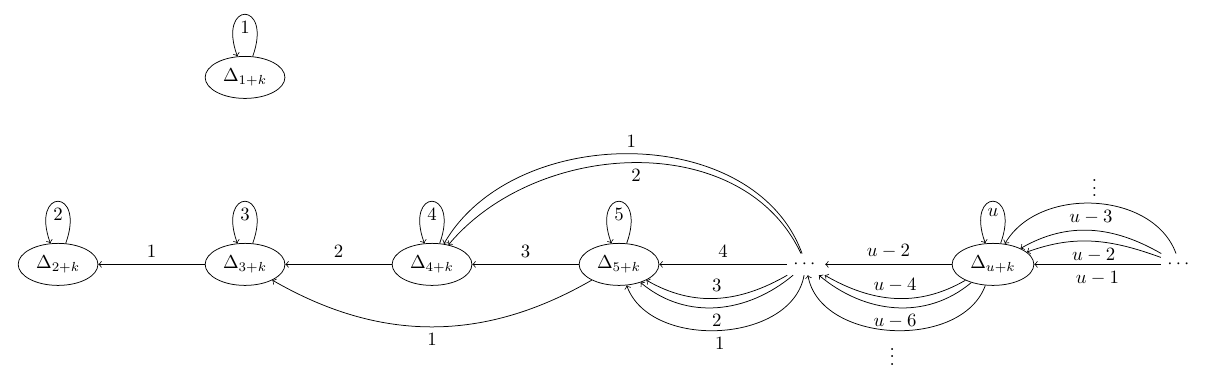}
    \caption{The strict rounds graph. Vertex $u$'s weight represents $\Delta_{u}=\expo t u-\expo t {u-1}$. The weight of an edge $(u,s)$ represents how much a round $t \in [\expo t {u-1}+1, \expo t u]$ contributes at least to $S_s$. Because of the strictness of $A_{s}$, the sum of the weights of the in-neighbors (multiplied by the edges' weight) of a node $s$, which is smaller than $S_s$, should not exceed $n$.}
    \label{fig:strict}
\end{figure}

\noindent The next lemma is the crucial lemma: To bound $t'$ we first bound the following ``weighted volumes''. If we define $\alpha_s$ to be the weighted out-degree of a node $s$, and the \emph{volume} of $s$ to be $\alpha_s$  multiplied by its own weight, then in the strict rounds graph, the cumulative sum over the volumes of the first $j$ vertices is at most $j \cdot n$. 

\begin{restatable}{lemma}{corsum}\label{cor:sum}
  Let $u\in \{k+1, \hdots , n\}$.
  Then
  $\sum_{s \le u} \Delta_s \alpha_s \le (u-k) \cdot n$.
\end{restatable}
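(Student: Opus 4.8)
The plan is to derive \textbf{Lemma~\ref{cor:sum}} from a single \emph{per-vertex estimate} — proved by hand — together with a routine double-counting step. The per-vertex estimate is: for every $s\in\{k+1,\dots,n\}$,
\[
\sum_{u\,:\,(u,s)\in E}\Delta_u\,w(u,s)\ \le\ n,
\]
i.e.\ the weighted volume entering a vertex $s$ along the edges of the strict rounds graph is at most $n$ (this is the informal statement in the caption of \figref{fig:strict}). To prove it, fix $s$ and write $A_s=\{(a_1,t_1),\dots,(a_s,t_s)\}$. By Definition~\ref{thesets}, $\expo t{s-1}$ is the largest round at which $A_s$ is not strict, and since strictness is preserved as the round increases (the sets $\I{\cdot}{t_i}{a_i}$ only shrink as the round grows, by Monotonicity, Lemma~\ref{trivialinclusion}), $A_s$ is strict at every round $>\expo t{s-1}$; in particular the $s$ sets $\I{\expo t{s-1}+1}{t_i}{a_i}$ are pairwise disjoint subsets of $[n]$, so
\[
S_s\ :=\ \sum_{i\in[s]}\card{\I{\expo t{s-1}+1}{t_i}{a_i}}\ \le\ n .
\]

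The next step is to lower-bound $S_s$. For each $i$ we have $\I{t_i+1}{t_i}{a_i}=\{a_i\}$ and $\card{\I{t}{t_i}{a_i}}$ is non-increasing in $t$, so telescoping over $t$ from $t_i+1$ down to $\expo t{s-1}+1$ and summing over $i$ gives
\[
S_s\ =\ s\ +\ \sum_{t\,\ge\,\expo t{s-1}+2}\ \ \sum_{\substack{i\in[s]:\\ t\le t_i+1}}\Bigl(\card{\I{t-1}{t_i}{a_i}}-\card{\I{t}{t_i}{a_i}}\Bigr).
\]
Fix a round $t\ge\expo t{s-1}+2$; then $A_s$ is strict at $t$, so the Strict Increments Lemma (Lemma~\ref{bigin}), applied with $I=\{i\in[s]:t\le t_i+1\}$, shows that at least $\card I-k$ of the inner differences are positive, so round $t$ contributes at least $\card I-k$ to $S_s$. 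If moreover $t$ lies in the interval $(\expo t{u-1},\expo t u]$ for some vertex $u\ge s$, then $\{i\in[s]:\expo t u\le t_i+1\}\subseteq I$ and the Large Set Lemma (Lemma~\ref{largesetoft}) gives $\card I\ge 2s-u$, hence the contribution of $t$ is at least $(2s-u)-k=w(u,s)$. The rounds $>\expo t{s-1}+1$ split into the $\Delta_s-1$ such rounds of the interval of $s$ (each contributing $\ge w(s,s)=s-k$), the $\Delta_u$ rounds of the interval of each larger vertex $u\le\min\{2s-k-1,n\}$ (each contributing $\ge w(u,s)$), and finitely many further rounds whose contribution is merely non-negative, so
\[
S_s\ \ge\ s+(\Delta_s-1)(s-k)+\!\!\!\sum_{s<u\le\min\{2s-k-1,n\}}\!\!\!\Delta_u\,w(u,s),
\]
and rearranging (the in-neighbours of $s$ being exactly $\{u:s\le u\le\min\{2s-k-1,n\}\}$) yields $\sum_{u:(u,s)\in E}\Delta_u w(u,s)\le S_s-k\le n$. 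Here I use the well-formedness facts about Definition~\ref{thesets} — that $A_s$ has size $s$, that $\Delta_u\ge1$, and that $\expo t{s-1}\le t_i$ for each $(a_i,t_i)\in A_s$ — so that the telescoping is legitimate.

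Given the per-vertex estimate, the cumulative bound follows by swapping the order of summation. Recall $\alpha_s=\sum_{s':(s,s')\in E}w(s,s')$ and that every edge $(s,s')$ of the strict rounds graph satisfies $s'\le s$; hence, for $u\in\{k+1,\dots,n\}$,
\begin{multline*}
\sum_{s=k+1}^{u}\Delta_s\alpha_s=\sum_{s=k+1}^{u}\ \sum_{s':(s,s')\in E}\Delta_s\,w(s,s')=\sum_{s'=k+1}^{u}\ \sum_{\substack{s:(s,s')\in E\\ s\le u}}\Delta_s\,w(s,s')\\
\le\ \sum_{s'=k+1}^{u}\ \sum_{u':(u',s')\in E}\Delta_{u'}\,w(u',s')\ \le\ \sum_{s'=k+1}^{u}n\ =\ (u-k)\,n ,
\end{multline*}
where the first inequality drops the constraint $s\le u$ (only non-negative terms are added, since $\Delta_s\ge0$ and $w(s,s')\ge1$ on edges) and the second is the per-vertex estimate applied at each $s'$, with the summation index $s$ relabelled as an in-neighbour $u'$ of $s'$.

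I expect the main obstacle to be the per-vertex estimate: it is where the combinatorics of the $A_s$-construction — strictness, Strict Increments, and the Large Set Lemma — all come together, and the delicate part is the bookkeeping that assigns each round $t>\expo t{s-1}+1$ to the interval of a unique vertex $u\ge s$ and correctly handles the boundary rounds near $\expo t{s-1}+1$ and near $\expo t n+1$. Once that estimate is in place, the double-counting step is purely formal.
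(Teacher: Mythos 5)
Your proposal is correct and takes essentially the same route as the paper: your per-vertex estimate is exactly Lemma~\ref{atmostndegree}, proved the same way (upper bound $S_s\le n$ from strictness at round $\expo t {s-1}+1$, lower bound by telescoping combined with the Strict Increments Lemma~\ref{bigin} and Lemma~\ref{largesetoft}), including the same treatment of the boundary rounds that the paper handles via the index $x$ of the first nonzero $\Delta_u$. Your final summation swap is the paper's Lemma~\ref{lem:strict} (the observation $E_u^{out}\subseteq E_u^{in}$) together with the identification of $\alpha_s$ as the weighted out-degree (Lemma~\ref{alphas}).
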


We briefly sketch the proof of this lemma. We first prove that  for every $k+1 \le w \le u$ the sum
$\sigma_w := \sum_{v=w}^{\min\{2w-k-1, n\}}\left(2w-k - v\right)\Delta_{v}$ is at most  $n$ as follows.
Since for every $k+1 \le w \le u$ the set $A_w$ is strict at all rounds 
$t \ge \expo t {w-1}+1$, it follows that $S_w := \sum_{(a_i, t_i) \in A_w} \card {\I {\expo t {w-1}+1} {t_i} {a_i}}$ is at most $n$. 
We then lower bound $S_w$ by  $\sum_{v=w}^{\min\{2w-k-1, n\}}\left(2w-k - v\right)\Delta_{v}$ in two steps: First, Lemma~\ref{largesetoft} allows us to find a set $I$ of cardinality larger than $2w-v$ for each round in the interval $[\expo t {v-1} +1, \expo t v]$ that fits the Strict Increments Lemma's conditions. We then use the Strict Increments Lemma to show that $S_w$ increases by $2w-v-k$ in each of those rounds, which results in a lower bound of $\sigma_w$ for $S_w$, and thus the upper bound $\sigma_w\le n$. Summing this inequality over all $u-k$ nodes $w$ with $w \le u$ gives $\sum_{w\leq u}  \sigma_w \leq (u-k)\cdot n$.
 
Next note that $\sigma_w$  is the weighted in-degree of node $w$ in the strict rounds graph where each edge is weighted by the product of its edge weight and the weight of its tail.
By definition, the tail of every (directed) edge has a higher label than its head and, thus, every outgoing edge of a node $s \le u$ is also an incoming edge of a node $w \le u$. This allows us to argue that
$\sum_{w \le u} \sigma_w$ is at least $\sum_{s \le u} \Delta_s \alpha_s$, which leads to the final result.

We use Lemma~\ref{cor:sum} to bound $t'$ as follows. We first show that any vertex weight distribution on the strict rounds graph following the volume bound must fulfill the following property. 

\begin{restatable}{lemma}{littledeltas}\label{littledeltas}
    Let $\{\delta_s\}_{k+1\leq s\leq n}$ with $\delta_s \in \mathbb{R}$ be a vertex weight distribution  over the strict rounds graph such that for every $u\in \{k+1, \hdots n\}$, $    \sum_{s \leq u} \delta_s \alpha_s \leq (u-k)\cdot n   $. Then $\sum_{s=k+1}^{n}\delta_s\leq \frac{\pi^2+6}{6} n$. 
\end{restatable}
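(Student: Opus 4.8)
The plan is to first reduce the claim to a purely arithmetic estimate on the numbers $\alpha_s$ via summation by parts, and then to evaluate the resulting series. Write $P(u):=\sum_{s=k+1}^{u}\delta_s\alpha_s$ with the convention $P(k):=0$, so that the hypothesis reads $P(u)\le (u-k)\cdot n$ for every $u\in\{k+1,\dots,n\}$, and set $f(s):=1/\alpha_s$. I would expand $\sum_{s=k+1}^n \delta_s=\sum_{s=k+1}^n\bigl(P(s)-P(s-1)\bigr)f(s)$ and apply Abel's summation-by-parts identity to get $\sum_{s=k+1}^n \delta_s=P(n)f(n)+\sum_{s=k+1}^{n-1}P(s)\bigl(f(s)-f(s+1)\bigr)$. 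This step is a purely algebraic identity, hence valid even though the $\delta_s$ may be negative; the inequality enters only afterwards. Provided $f$ is non-increasing (equivalently, $\alpha_s$ is non-decreasing in $s$), all the coefficients $f(s)-f(s+1)$ and $f(n)$ are non-negative, so substituting $P(s)\le (s-k)n$ gives $\sum_{s=k+1}^n \delta_s\le n\bigl[(n-k)f(n)+\sum_{s=k+1}^{n-1}(s-k)\bigl(f(s)-f(s+1)\bigr)\bigr]$, and a second telescoping (Abel again, using $(s-k)-(s-1-k)=1$) collapses the bracket to $\sum_{s=k+1}^n f(s)$. Thus it remains to show $\sum_{s=k+1}^n \frac{1}{\alpha_s}\le \frac{\pi^2+6}{6}$.

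The next step is to compute $\alpha_s$ explicitly. By definition $\alpha_s$ is the weighted out-degree of $s$, i.e. the sum of the weights $w(s,s')=2s'-k-s$ over all edges $(s,s')$, which by the edge condition exist exactly for $\lceil (s+k+1)/2\rceil\le s'\le s$; note the cap by $n$ in the edge definition is vacuous here, since every out-neighbour of $s$ has label at most $s\le n$. Substituting $j:=s-k$ and $j':=s'-k$ turns this into $\alpha_{k+j}=\sum_{j'=\lceil (j+1)/2\rceil}^{j}(2j'-j)$, an arithmetic progression; a short computation splitting on the parity of $j$ gives $\alpha_{k+j}=p(p+1)$ when $j=2p$ and $\alpha_{k+j}=(p+1)^2$ when $j=2p+1$, equivalently $\alpha_{k+j}=\lfloor (j+1)^2/4\rfloor$. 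In particular $\alpha_s$ is strictly increasing in $s$, which retroactively justifies the monotonicity of $f$ used in the first step, and $\alpha_{k+1}=1>0$, so $f$ is well defined.

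Finally I would bound the series. Since all terms are positive it suffices to bound the infinite sum: grouping $j=2p$ with $j=2p+1$ for $p\ge 1$ and peeling off $j=1$,
$$
\sum_{j\ge 1}\frac{1}{\alpha_{k+j}}=\frac{1}{\alpha_{k+1}}+\sum_{p\ge 1}\Bigl(\frac{1}{p(p+1)}+\frac{1}{(p+1)^2}\Bigr)=1+1+\Bigl(\tfrac{\pi^2}{6}-1\Bigr)=\frac{\pi^2+6}{6},
$$
where $\sum_{p\ge 1}\frac{1}{p(p+1)}=1$ by telescoping and $\sum_{p\ge 1}\frac{1}{(p+1)^2}=\sum_{q\ge 2}\frac{1}{q^2}=\frac{\pi^2}{6}-1$ is the Basel sum. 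Combining the three steps yields $\sum_{s=k+1}^n \delta_s\le n\sum_{s=k+1}^n \frac{1}{\alpha_s}\le \frac{\pi^2+6}{6}\,n$, as required.

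I expect the only genuinely delicate points to be bookkeeping ones: arranging the summation-by-parts so that only one-sided partial-sum bounds $P(u)\le (u-k)n$ and the sign of $f(s)-f(s+1)$ are used (so that possibly-negative $\delta_s$ cause no trouble), and carefully verifying the closed form for $\alpha_s$ — in particular that the $\min\{\,\cdot\,,n\}$ truncation in the strict rounds graph never shrinks an out-neighbourhood, so that $\alpha_s=\lfloor (s-k+1)^2/4\rfloor$ holds for all $s$ up to $n$. Once these are settled, recognizing the residual series as a telescoping part plus a shifted Basel series is routine, and it also shows the constant $\frac{\pi^2+6}{6}$ is the best possible for this argument (the bound is approached as $n-k\to\infty$).
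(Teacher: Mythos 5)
Your proof is correct, and it takes a genuinely different route from the paper's. The paper argues by an exchange (extremal) argument: it considers weights maximizing $\sum_s \delta_s$, shows that if some $\delta_v<\alpha_v^{-1}n$ one can shift mass from $\delta_{v+1}$ to $\delta_v$ without violating any prefix constraint while increasing the objective (since $\alpha_s^{-1}$ is decreasing in $s$), concludes that the pointwise choice $\delta_s=\alpha_s^{-1}n$ is optimal, and then bounds $\sum_s \alpha_s^{-1}n$ by the infinite series $n\sum_{\ell}\ell^{-2}+n\sum_{\ell}(\ell^2+\ell)^{-1}=\frac{\pi^2+6}{6}n$. You instead convert the prefix constraints $P(u)=\sum_{s\le u}\delta_s\alpha_s\le (u-k)n$ directly into the bound via Abel summation: the identity $\sum_s\delta_s=P(n)f(n)+\sum_s P(s)\bigl(f(s)-f(s+1)\bigr)$ with $f=1/\alpha$, the monotonicity of $\alpha_s$ (so all coefficients are nonnegative), and a second telescoping yield $\sum_s\delta_s\le n\sum_s \alpha_s^{-1}$, after which the series evaluation (your pairing of $j=2p$ with $j=2p+1$, telescoping plus Basel) is the same computation as the paper's. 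Your route has the advantage of being a purely algebraic one-pass estimate: it sidesteps the existence-of-a-maximizer point that the paper's exchange argument leaves implicit, and it handles possibly negative $\delta_s$ transparently, since only the one-sided bounds on $P(s)$ and the signs of $f(s)-f(s+1)$ are used; the paper's argument, in exchange, makes the extremal weight profile $\delta_s=\alpha_s^{-1}n$ explicit. Your closed form $\alpha_{k+j}=\lfloor (j+1)^2/4\rfloor$ agrees with the paper's Lemma on the weighted out-degrees (odd $j$: $((j+1)/2)^2$, even $j$: $(j/2)^2+j/2$), and your observation that the $\min\{\cdot,n\}$ truncation never affects out-edges matches the paper's description of the strict rounds graph, so both ingredients you rely on are sound.
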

\noindent Lemma~\ref{cor:sum} and Lemma~\ref{littledeltas} give the desired bound on  $\sum_{s=k+1}^{n}\Delta_s$, which in turn bounds $t'$.
\begin{restatable}{corollary}{atmostbeta} \label{corollary}
    $\sum_{s=k+1}^{n}\Delta_s\leq \frac{\pi^2+6}{6} n$
\end{restatable}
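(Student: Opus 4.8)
The plan is to read Corollary~\ref{corollary} off as a one-step consequence of Lemma~\ref{cor:sum} and Lemma~\ref{littledeltas}. Lemma~\ref{cor:sum} asserts exactly that the tuple $(\Delta_s)_{k+1\le s\le n}$ arising from Definition~\ref{thesets} is a vertex weight distribution on the strict rounds graph satisfying $\sum_{s\le u}\Delta_s\alpha_s\le (u-k)\cdot n$ for every $u\in\{k+1,\dots,n\}$. These $\Delta_s$ are real numbers, so the hypothesis of Lemma~\ref{littledeltas} is met verbatim with $\delta_s:=\Delta_s$, and Lemma~\ref{littledeltas} then gives $\sum_{s=k+1}^{n}\Delta_s\le\frac{\pi^2+6}{6}n$. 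That is the whole argument for the corollary; no new idea is needed at this level, and there is essentially no obstacle here.

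The substance that pins down the particular constant $\frac{\pi^2+6}{6}$ lives inside Lemma~\ref{littledeltas}, and it is worth recording how I expect that to go, since it explains the number. View the constraints $\sum_{s\le u}\delta_s\alpha_s\le(u-k)n$ as a linear program with objective $\sum_s\delta_s$; the greedy point $\delta_s=n/\alpha_s$ makes every constraint an equality, and it is optimal (a matching dual certificate exists because the weighted out-degrees $\alpha_s$ are nondecreasing in $s$, using $y_s=1/\alpha_s-1/\alpha_{s+1}\ge 0$ and Abel summation). Hence the bound reduces to evaluating $n\sum_{s=k+1}^{n}1/\alpha_s$. Writing $s=k+m$ and summing the edge weights $2s'-k-s$ over the out-neighbours $s'$ of $s$, the weighted out-degree comes out to $\alpha_{k+m}=j^2$ when $m=2j-1$ and $\alpha_{k+m}=j(j+1)$ when $m=2j$; therefore
\[
\sum_{s=k+1}^{n}\frac1{\alpha_s}\;\le\;\sum_{j\ge 1}\frac1{j^2}+\sum_{j\ge 1}\frac1{j(j+1)}\;=\;\frac{\pi^2}{6}+1\;=\;\frac{\pi^2+6}{6},
\]
which is the claimed constant.

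The only genuinely delicate step upstream is Lemma~\ref{cor:sum} itself, whose proof is sketched around Figure~\ref{fig:strict}: one uses the strictness of each $A_w$ at all rounds $\ge \expo t {w-1}+1$ to get $S_w\le n$, invokes Lemma~\ref{largesetoft} to produce, for every round in the interval $[\expo t {v-1}+1,\expo t v]$, an index set of size at least $2w-v$ satisfying the Strict Increments Lemma's (Lemma~\ref{bigin}) hypotheses, applies that lemma to show $S_w$ grows by at least $2w-v-k$ on each such round, and finally re-indexes and double-counts to rewrite $\sum_{w\le u}\sigma_w$ as (an upper bound on) $\sum_{s\le u}\Delta_s\alpha_s$. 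Being careful with the edge weights $w(u,s)=2s-k-u$ and the truncation $\min\{2w-k-1,n\}$ in that last rewriting is where I would expect to spend most of the effort. But once Lemmas~\ref{cor:sum} and~\ref{littledeltas} are established, Corollary~\ref{corollary} follows immediately by plugging $\delta_s=\Delta_s$.
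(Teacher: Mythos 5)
Your proposal is correct and matches the paper's own argument: the corollary is obtained exactly by instantiating Lemma~\ref{littledeltas} with $\delta_s=\Delta_s$, whose hypothesis is supplied verbatim by Lemma~\ref{cor:sum}. Your side remarks on how the constant $\frac{\pi^2+6}{6}$ arises inside Lemma~\ref{littledeltas} (optimality of $\delta_s=n/\alpha_s$ and the split of $\sum 1/\alpha_s$ into $\sum 1/j^2+\sum 1/(j(j+1))$) also agree with the paper's proofs of those lemmas, so nothing is missing.
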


\begin{restatable}{theorem}{bigtheorem}
    $t^{c}_k(\F_n^k) \leq \frac{\pi^2+6}{6}n+1$
\end{restatable}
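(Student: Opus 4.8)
The plan is to instantiate the backward construction of Definition~\ref{thesets} at a concrete starting round and read a cover off the set $A_k$. Fix an arbitrary adversary sequence $G_1, G_2, \dots$ with every $G_i \in \F_n^k$, put $t' := \floor{\frac{\pi^2+6}{6}n} + 1$, and run the construction with $\expo t n := t'$. I will show that $A_k$ yields $k$ processes of which every process has heard after $t'$ rounds; since the adversary sequence is arbitrary and $t^c_k(\F_n^k)$ is an integer, this gives $t^c_k(\F_n^k) \le t' \le \frac{\pi^2+6}{6}n + 1$.

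First I would check that the construction is well-defined all the way down to $A_k$ for this choice of $t'$. This requires, at each step $s = n-1, \dots, k$, that $A_{s+1}$ fails to be strict at some round, and that the round indices produced stay $\ge 1$. The former holds once $t'$ is large enough: a size-$(s+1)$ cover of $[n]$ with $s+1 > k$ cannot remain strict as we rewind, since by Monotonicity (Lemma~\ref{trivialinclusion}) the sets $\I{t}{t_i}{a_i}$ only grow as $t$ decreases, and $k$ trees cannot keep $s+1>k$ growing, non-empty subsets of $[n]$ pairwise disjoint indefinitely; and $t' = \floor{\frac{\pi^2+6}{6}n}+1$ is large enough. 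The latter is exactly what Corollary~\ref{corollary} provides: since $\expo t k = \expo t n - \sum_{s=k+1}^{n}\bigl(\expo t s - \expo t {s-1}\bigr) = t' - \sum_{s=k+1}^{n}\Delta_s$ and every $\Delta_s$ is a non-negative integer, Corollary~\ref{corollary} gives $\sum_{s=k+1}^{n}\Delta_s \le \floor{\frac{\pi^2+6}{6}n}$, so $\expo t k \ge t' - \floor{\frac{\pi^2+6}{6}n} = 1$. Hence every round stored in a pair of any $A_s$ lies in $\{0,1,\dots,t'\}$, and the construction goes through.

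Next I would verify that $A_k$ really gives a cover. Each step of the construction lowers the size by exactly one, so $\card{A_k}=k$. The property ``$A_s$ is a cover of $[n]$ for round $t'$'' is preserved inductively: when a step replaces $(a_i,t_i),(a_j,t_j)$ by $(p_s, \expo t s - 1)$, the relation $p_s \in \I{\expo t s}{t_i}{a_i}$ means $a_i \in \Out{\expo t s}{t_i}{p_s}$ (Lemma~\ref{easy}), so by Transitivity (Lemma~\ref{transitive}) $\Out{t_i+1}{t'}{a_i} \subseteq \Out{\expo t s}{t'}{p_s}$, and likewise for $(a_j,t_j)$; since the new pair's start round is $\expo t s - 1$, this says the new pair covers everything the two removed pairs covered. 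Writing $A_k = \{(a_1,t_1),\dots,(a_k,t_k)\}$, for each $b\in[n]$ there is an $i$ with $b \in \Out{t_i+1}{t'}{a_i}$; as $t_i \ge \expo t k - 1 \ge 0$, Monotonicity (Lemma~\ref{trivialinclusion}) gives $b \in \Out{1}{t'}{a_i}$, i.e.\ $(a_i, b) \in G_1\circ\dots\circ G_{t'}$. So $\{a_1,\dots,a_k\}$ is a set of at most $k$ processes such that every process has heard of one of them after $t'$ rounds, and therefore $t^c_k(G_1,G_2,\dots)\le t'$, which completes the proof.

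The step I expect to be the real crux is the inequality $\expo t k \ge 1$ in the validity check: it is the only place where the precise value $\frac{\pi^2+6}{6}n$ of the bound matters rather than merely some linear bound, and it is what licenses starting the rewind only $\frac{\pi^2+6}{6}n+1$ rounds in the future while still having enough room for all $n-k$ rewinding steps without round indices dropping below $1$. Everything else — the size count, the Transitivity/Monotonicity chains, the translation of ``cover of $[n]$ for round $t'$'' into an actual cover attained by round $t'$, and the auxiliary fact that a size-$(>k)$ cover cannot remain strict under unbounded rewinding — is routine bookkeeping, provided one is careful about which start rounds $t_i$ can appear in the pairs.
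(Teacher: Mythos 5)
Your proposal follows the paper's own route: run the rewind construction of Definition~\ref{thesets}, control $t'-\expo t k$ via Corollary~\ref{corollary}, use Lemma~\ref{cover} to see that $A_k$ is a cover of $[n]$ for round $t'$, and finish with Monotonicity to pull the cover back to round $1$. The size count, the cover-preservation argument, the integrality step and the final translation into $t^c_k(\F_n^k)\leq t'$ are all exactly as in the paper and are fine. The one place where you go beyond the paper is also where there is a genuine gap: the claim that the construction is well-defined for the concrete choice $t'=\floor{\frac{\pi^2+6}{6}n}+1$. Note that $\expo t s$ is defined as a maximum over rounds $j\in\N$, so whenever it exists it is automatically at least $1$; the real issue is \emph{existence}, i.e.\ that each $A_{s+1}$ (of size $>k$) fails to be strict at some round. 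Your justification --- the sets $\I{t}{t_i}{a_i}$ grow as $t$ decreases and $k$ trees cannot keep more than $k$ disjoint growing sets disjoint ``indefinitely'' --- does not work here, because there are only $t'\approx 2.6n$ rounds to rewind through: quantitatively, strictness of $A_{s+1}$ at \emph{all} rounds $\geq 1$ is excluded (by the Strict Increments Lemma~\ref{bigin} counting) only when $\min_i t_i$ exceeds roughly $\frac{n-s-1}{s+1-k}$, and near the bottom of the construction ($s$ close to $k$) this threshold is of order $n$ while the pairs' rounds may a priori be much smaller. Your fallback, invoking Corollary~\ref{corollary} to deduce $\expo t k\geq 1$, is circular for this purpose: that corollary (through Lemma~\ref{atmostndegree}, Lemma~\ref{cor:sum} and Lemma~\ref{littledeltas}) is proved about the objects of Definition~\ref{thesets} under the standing assumption that they exist all the way down to level $k$ --- this is precisely what ``let $t'$ be a large enough round'' is doing --- so it cannot be the thing that certifies their existence for your small $t'$.

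For comparison, the paper does not fix $t'$ in advance: it takes $t'$ large enough for the construction to go through and then normalizes (``setting $\expo t k=1$''), i.e.\ it in effect chooses $t'$ so that the rewind bottoms out at round $1$, whence $t'\leq\frac{\pi^2+6}{6}n+1$ and the cover only uses rounds $\geq 1$. If you want to fix $t'=\floor{\frac{\pi^2+6}{6}n}+1$ up front, as you do, you need a simultaneous downward induction that establishes, level by level, both that a non-strict round exists (using the counting bound above, which requires the current pairs' rounds to still be large) and the partial-sum bounds on the $\Delta_u$ that guarantee those rounds remain large enough for the next existence step; asserting ``$t'$ is large enough'' and then quoting Corollary~\ref{corollary} does not discharge this.
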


The rest of this section is dedicated to proving in detail all of those claims, as well as introducing all concepts, stating and proving any intermediate lemmata that would be necessary.
First we analyze in detail the sets defined in Definition~\ref{thesets}, making sure they have the desired cardinality and proving that they form a cover of $[n]$ from round $1$ to $t'$.
Note that it directly follows from Definition~\ref{thesets} that the size of $A_s$ is $s$ and we will in the following
always use $(a_i, t_i)$ for $1 \le i \le s$ to denote the elements of the set $A_s$.

\begin{lemma}
    For every $s \in \{k,\hdots, n\}, \card{A_s}=s$. 
\end{lemma}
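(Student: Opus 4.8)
The plan is to prove the claim $\card{A_s} = s$ for all $s \in \{k, \ldots, n\}$ by downward induction on $s$, starting from $s = n$ and descending to $s = k$, since Definition~\ref{thesets} constructs $A_{s}$ from $A_{s+1}$.

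\textbf{Base case.} For $s = n$, the set $A_n = \{(i, t') : i \in [n]\}$ is defined to have exactly one pair per process $i \in [n]$, so $\card{A_n} = n$ immediately.

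\textbf{Inductive step.} Assume $\card{A_{s+1}} = s+1$ for some $k \le s \le n-1$. By Definition~\ref{thesets}, since $A_{s+1}$ is (by the induction hypothesis, together with the strictness bookkeeping built into the definition) not strict at round $\expo t s$, there exist distinct indices $i, j \in [s+1]$ and a process $p_s$ with $p_s \in \I{\expo t s}{t_i}{a_i} \cap \I{\expo t s}{t_j}{a_j}$. There are two cases. In the first case $(p_s, \expo t s - 1) \in A_{s+1}$, and we set $A_s = A_{s+1} \setminus \{(a_i, t_i)\}$; since $(a_i, t_i) \in A_{s+1}$, removing it decreases the cardinality by exactly one, giving $\card{A_s} = (s+1) - 1 = s$. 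In the second case $(p_s, \expo t s - 1) \notin A_{s+1}$, and we set $A_s = (A_{s+1} \setminus \{(a_i, t_i), (a_j, t_j)\}) \cup \{(p_s, \expo t s - 1)\}$; here $(a_i, t_i)$ and $(a_j, t_j)$ are two distinct elements of $A_{s+1}$ (distinct because $i \ne j$ and pairs are indexed by their position, so $i \ne j$ ensures the pairs are distinct as set elements), so removing them decreases the cardinality by exactly two, and then adding $(p_s, \expo t s - 1)$ — which is not already present by the case hypothesis — increases it by one, yielding $\card{A_s} = (s+1) - 2 + 1 = s$. In both cases $\card{A_s} = s$, completing the induction.

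\textbf{Main obstacle.} The only subtlety is verifying that in the two-removal case the pairs $(a_i, t_i)$ and $(a_j, t_j)$ are genuinely distinct \emph{as elements of the set} $A_{s+1}$, rather than merely having distinct indices. This holds because the elements of $A_{s+1}$ are enumerated as $(a_1, t_1), \ldots, (a_{s+1}, t_{s+1})$ with the indices labeling distinct set members, so $i \ne j$ forces $(a_i, t_i) \ne (a_j, t_j)$; one should also note that the construction never produces a pair whose process-round combination collides with an existing one in a way that would shrink the set unexpectedly, which is exactly what the case distinction on whether $(p_s, \expo t s - 1) \in A_{s+1}$ is designed to handle. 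Once this is observed, the cardinality count is purely arithmetic.
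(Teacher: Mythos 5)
Your proof is correct and follows essentially the same route as the paper: reverse (downward) induction from $s=n$, with the same two-case analysis on whether $(p_s, \expo t s - 1)$ already belongs to $A_{s+1}$, giving a net decrease of exactly one element per step. The extra remark about the distinctness of $(a_i,t_i)$ and $(a_j,t_j)$ as set elements is a fine clarification but does not change the argument.
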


\begin{proof}
    It is true by reverse induction. Indeed, it is trivially true for $s=n$. Suppose it is true for some $s \in \{k+1, \hdots, n\}$ and consider two cases:
    If $(p_{s-1}, \expo t {s-1}-1) \in A_{s}$, then $\card{A_s}-\card{A_{s-1}} = 1$, as only $(a_i,t_i)$ is removed from $A_s$.
    If $(p_{s-1}, \expo t {s-1}-1) \not\in A_{s}$, then both $(a_i,t_i)$  and $(a_j, t_j)$ are removed from $A_s$ and $(p_{s-1}, \expo t {s-1}-1)$ is added, which implies that $\card{A_s}-\card{A_{s-1}} = 1$.
    Thus, in both cases $\card{A_s}-\card{A_{s-1}} = 1$. As by induction $|A_s| = s$, it follows that $|A_{s-1}| = s -1$.
\end{proof}

In Definition~\ref{thesets}, we offset $\expo t s$ by one when introducing $p_s$ in $A_s$, which guarantees that $A_s$ is a cover of $[n]$ for round $t'$, as shown below. 

\begin{lemma}\label{cover}
    For every $s \in \{k, \hdots, n\}$, we have that $A_s$ is a cover of $[n]$ for round $t'$.
\end{lemma}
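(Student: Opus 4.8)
The plan is a reverse induction on $s$, from $s=n$ down to $s=k$, whose single idea is that the ``$-1$'' offset attached to $p_s$ in Definition~\ref{thesets} is exactly what is needed so that Transitivity (Lemma~\ref{transitive}) reconnects whatever was covered by a deleted pair to the newly inserted pair $(p_s,\expo t s-1)$. For the base case $s=n$ one has $A_n=\{(i,t'):i\in[n]\}$, every round component equals $t'$, and for each $b\in[n]$ the pair $(b,t')$ witnesses the cover property since $b\in\{b\}=\Out{t'+1}{t'}{b}$.

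For the inductive step I would fix $s\geq k+1$, assume $A_s$ is a cover of $[n]$ for round $t'$, and write $\tau=\expo t{s-1}$, $p=p_{s-1}$, letting $i\neq j$ be the witnessing indices from Definition~\ref{thesets}, so that $p\in\I{\tau}{t_i}{a_i}\cap\I{\tau}{t_j}{a_j}$. A preliminary remark handles the round bound in the definition of ``cover'': $\I{\tau}{t_i}{a_i}\neq\varnothing$ forces $\tau\leq t_i+1$, hence $\tau-1\leq t_i\leq t'$ (the last inequality by the induction hypothesis), and likewise $\tau-1\leq t_j\leq t'$; since the round components of $A_{s-1}$ are among those of $A_s$ together with (in the ``else'' branch only) $\tau-1$, all of them are at most $t'$, so ``cover of $[n]$ for round $t'$'' is a meaningful claim for $A_{s-1}$.

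The key observation is then that the pair $(p,\tau-1)$ covers every process covered by $(a_i,t_i)$ and every process covered by $(a_j,t_j)$: from $p\in\I{\tau}{t_i}{a_i}$ and Lemma~\ref{easy} we get $a_i\in\Out{\tau}{t_i}{p}$, so if $b\in\Out{t_i+1}{t'}{a_i}$ then the first part of Transitivity (Lemma~\ref{transitive}) gives $b\in\Out{\tau}{t'}{p}=\Out{(\tau-1)+1}{t'}{p}$, i.e.\ $(p,\tau-1)$ covers $b$, and symmetrically for $(a_j,t_j)$. With this in hand, take any $b\in[n]$; by the induction hypothesis some $(a_\ell,t_\ell)\in A_s$ covers $b$. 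If $(a_\ell,t_\ell)\in A_{s-1}$ we are done; otherwise $(a_\ell,t_\ell)$ was deleted, hence equals $(a_i,t_i)$ or $(a_j,t_j)$, and the key observation shows $(p,\tau-1)$ covers $b$. It remains to verify $(p,\tau-1)\in A_{s-1}$: in the ``else'' branch it is inserted explicitly; in the ``if'' branch it already lies in $A_s$, and since a single pair can coincide with at most one of $(a_i,t_i),(a_j,t_j)$, we may label the two witnessing indices so that $(p,\tau-1)\neq(a_i,t_i)$, whence it survives the deletion $A_{s-1}=A_s\setminus\{(a_i,t_i)\}$. This closes the induction.

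The only step I expect to require care is precisely this last bookkeeping point in the ``if'' branch: when $\tau=t_i+1$ the inserted pair $(p,\tau-1)$ can literally be one of the witnessing pairs, so one must delete the other one in order that the pair used to re-cover $b$ actually belongs to $A_{s-1}$; this is why the two witnessing indices must be chosen (or relabeled) appropriately. Everything else is a routine combination of Transitivity with the induction hypothesis and does not involve any real calculation.
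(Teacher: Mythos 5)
Your proof is correct and follows essentially the same route as the paper's: reverse induction, with Transitivity (Lemma~\ref{transitive}) used to redirect any process covered by a deleted pair to the pair $(p_s,\expo t s-1)$. The only difference is that you additionally make explicit two points the paper leaves implicit --- that all round components stay at most $t'$, and that in the ``if'' branch the witnessing indices must be labeled so that the deleted pair is not $(p_s,\expo t s-1)$ itself, so that this pair indeed survives into $A_s$ --- which is a sound and worthwhile clarification rather than a deviation in approach.
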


\begin{proof}
    We show the claim by reverse induction. It holds trivially for $s=n$.
    Assume now that $A_{s+1}$ is a cover of $[n]$ for round $t'$. We will show that this implies that $A_{s}$ is also a cover of $[n]$ for round $t'$. By the fact that $A_{s+1}$ is a cover, it follows for each $x\in [n]$ that there exists a $h \in [s+1]$ such that $x\in \Out {t_h+1} {t'} {a_h}$. If $(a_h, t_{h}) \in A_{s}$, then there is nothing to do. If however $(a_h, t_{h}) \notin A_{s}$, then $p_s\in \I {\expo t {s}} {t_h} {a_h}$, and by Transitivity (Lemma~\ref{inout}) it follows that $x \in \Out {\expo t {s}} {t'} {p_s}$.
\end{proof}

Now that we are assured that the cover property will hold throughout, we give the intuition of what follows. As defined above, we have $\Delta_u=\expo t u-\expo t {u-1}$ for $k < u \le n$. We will then look at $A_s$ for some $s \in \{k, \hdots , n\}$. Since $A_s$ is strict at round $\expo t {s-1}+1$, it follows that $\left(\I {\expo t {s-1}+1} {t_i} {a_i}\right)_{i\in[s]}$ are pairwise disjoint subsets of $[n]$, which implies that
$S_s = \sum_{(a_i, t_i) \in A_s} \card {\I {\expo t {s-1}+1} {t_i} {a_i}}$ is at most $n$. We will find a lower bound for that value that depends on the values of $\Delta_u$.

To do so, we will use the Strict Increments Lemma (Lemma~\ref{bigin}). Indeed we will first prove that for every $i \in [s]$, it holds that $\expo t s\le t_i+1$. This will allow us to use the Strict Increments Lemma (Lemma~\ref{bigin}) between rounds $\expo t {s-1} +2$ and $\expo t s$ 
and so each of those rounds contributes an additive $s-k$ to the lower bound of $S_s$. Of course, such a contribution only happens if $\expo t {s-1}+2 \le \expo t s$.

We will further prove that there exist $s-1$ elements $i\in [s]$ such that $\expo t {s+1}\le t_i+1$. This will follow from the fact that $\card {A_s \inter A_{s+1}} \geq s-1$, and that for every $(a,t) \in A_{s+1}$, we have that $\expo t {s+1} \leq t+1$. This allows us to use the Strict Increments Lemma (Lemma~\ref{bigin}) between rounds $\expo t {s} +1$ and $\expo t {s+1}$, so that each of those rounds contributes at least an additive $s-1-k$ to the lower bounds of $S_s$ \emph{in addition to the contribution of rounds $[\expo t {s-1}+2, \expo t s]$.}

In fact, we will generalize this analysis to all values of $u \ge s$ with $A_u \inter A_s \neq \varnothing$, and for each $u$ every round in
 $[\expo t {u-1}+1, \expo t u]$ contributes at least $2s -k - u$ to the lower bound of $S_s$, leading to a contribution at at least $(2s-k-u) \Delta_u$ for $u$.

\begin{lemma}
    $A_s$ is strict at round $\expo t s+1$, for any $s \in \{k, \hdots, n\}$.
\end{lemma}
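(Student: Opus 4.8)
The plan is to exploit the defining property of $\expo t s$ and then check that the construction of $A_s$ from $A_{s+1}$ in Definition~\ref{thesets} cannot destroy strictness at round $\expo t s+1$. For $s<n$ we have $\expo t s=\max_j\{A_{s+1} \text{ is not strict at round } j\}$, so $A_{s+1}$ is, by definition of the maximum, strict at round $\expo t s+1$. The base case $s=n$ is handled directly: $A_n=\{(i,t'):i\in[n]\}$ with $\expo t n=t'$, and for each $i$ we have $\I{t'+1}{t'}{i}=\{i\}$, so these singletons are pairwise disjoint and $A_n$ is strict at round $\expo t n+1$. Hence it suffices to show, for each $s<n$, that $A_s$ is still strict at round $\expo t s+1$ given that $A_{s+1}$ is.

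There are two cases, matching Definition~\ref{thesets}. If $(p_s,\expo t s-1)\in A_{s+1}$, then $A_s=A_{s+1}\setminus\{(a_i,t_i)\}\subseteq A_{s+1}$, and strictness is inherited by subsets: a process together with a pair of distinct indices witnessing non-strictness of $A_s$ at round $\expo t s+1$ would witness it for $A_{s+1}$ as well, a contradiction. Otherwise $A_s=(A_{s+1}\setminus\{(a_i,t_i),(a_j,t_j)\})\cup\{(p_s,\expo t s-1)\}$. Two elements of $A_s$ both inherited from $A_{s+1}$ cannot break strictness at round $\expo t s+1$, again since $A_{s+1}$ is strict there, so the only thing to rule out is a process $q$ lying in both $\I{\expo t s+1}{\expo t s-1}{p_s}$ and $\I{\expo t s+1}{t_h}{a_h}$ for some surviving pair $(a_h,t_h)$. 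But $\I{\expo t s+1}{\expo t s-1}{p_s}=\varnothing$, since $\expo t s+1=(\expo t s-1)+2$ and the generalized in-neighborhood is empty whenever the lower index exceeds the upper index by more than one. Hence no such $q$ exists, and $A_s$ is strict at round $\expo t s+1$ in both cases.

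I do not anticipate a genuine obstacle: the statement is essentially bookkeeping on the construction. The one point to be careful with is that the argument hinges entirely on the ``$-1$'' offset in the round component of the inserted pair $(p_s,\expo t s-1)$ in Definition~\ref{thesets}: this offset is precisely what makes $\I{\expo t s+1}{\expo t s-1}{p_s}$ empty (and $\I{\expo t s}{\expo t s-1}{p_s}=\{p_s\}$ a singleton), so that the newly added pair can coexist with the rest one round after $\expo t s$ without violating strictness. Everything else reduces to the trivial fact that removing elements from a strict set leaves it strict.
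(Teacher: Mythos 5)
Your proof is correct and follows essentially the same route as the paper: use the maximality in the definition of $\expo t s$ to get strictness of $A_{s+1}$ at round $\expo t s+1$, and observe that $\I {\expo t s+1} {\expo t s-1} {p_s}=\varnothing$ so the newly inserted pair cannot witness a violation, forcing any violating pair into $A_{s+1}$. Your explicit case split and base case $s=n$ are only presentational differences (the base case is in fact handled more carefully than in the paper's own proof).
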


\begin{proof}
    Let $s \in \{k, \hdots, n\}$. $A_{s+1}$ is strict at round $\expo t s+1$. Assume by contradiction that $A_s$ is not strict at round $\expo t s +1$, that is, there exists a $x \in [n]$ and $i,j \in [s]$ such that $x \in \I {\expo t {s}+1} {t_i} {a_i}$ and $x \in \I {\expo t {s}+1} {t_j} {a_j}$. Since $\I {\expo t {s}+1} {\expo t s-1} {p_s} = \varnothing$ because $\expo t {s}+1 > (\expo t s-1)+1$, we have that $x \notin \I {\expo t {s}+1} {\expo t s-1} {p_s}$. This implies that $(a_i, t_i), (a_j, t_j) \in A_{s+1}$, which contradicts the strictness of $A_{s+1}$. This concludes the proof.
\end{proof}

\begin{corollary}\label{rightorder}
    $\expo t s \leq \expo t {s+1} \quad \forall s \in \{k,\hdots , n-1\}$ 
\end{corollary}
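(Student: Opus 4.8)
The plan is to exploit the fact that \emph{strictness is preserved as the round index increases}: if a set of (process, round) pairs is strict at some round $t_0$, it is strict at every round $t\ge t_0$. Granting this, the corollary drops out in one line. The preceding lemma, applied with index $s+1$ in place of $s$ (legitimate since $s\in\{k,\dots,n-1\}$ gives $s+1\in\{k+1,\dots,n\}\subseteq\{k,\dots,n\}$), states that $A_{s+1}$ is strict at round $\expo t {s+1}+1$; by the preservation property it is then strict at \emph{every} round $t\ge \expo t {s+1}+1$. On the other hand, $\expo t s$ is defined in Definition~\ref{thesets} as the largest round at which $A_{s+1}$ fails to be strict. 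A round that is simultaneously $\ge \expo t {s+1}+1$ and equal to $\expo t s$ therefore cannot exist, so $\expo t s\le \expo t {s+1}$, as claimed. (For the boundary case $s=n-1$ one reads $\expo t n=t'$, and $A_n=\{(i,t'):i\in[n]\}$ is strict at round $t'+1$ since $\I {t'+1} {t'} i=\{i\}$ for every $i$, so the same reasoning applies.)

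To establish the preservation property, let $A=\{(a_1,t_1),\dots,(a_m,t_m)\}$ be strict at round $t_0$, and fix $t\ge t_0$ together with indices $i\neq j$. Since $t_0\le t$, Monotonicity (Lemma~\ref{trivialinclusion}) gives $\I {t} {t_i} {a_i}\subseteq \I {t_0} {t_i} {a_i}$ and $\I {t} {t_j} {a_j}\subseteq \I {t_0} {t_j} {a_j}$, whence $\I {t} {t_i} {a_i}\cap \I {t} {t_j} {a_j}\subseteq \I {t_0} {t_i} {a_i}\cap \I {t_0} {t_j} {a_j}=\varnothing$, the last equality being exactly the strictness of $A$ at round $t_0$. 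As $i\neq j$ were arbitrary, $A$ is strict at round $t$.

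I do not expect a genuine obstacle: the argument is short, and the only delicate points are bookkeeping — checking that the preceding lemma may be invoked at index $s+1$ over the whole range $s\in\{k,\dots,n-1\}$ of the corollary (the boundary $s=n-1$ handled above), and, to be scrupulous, noting that $\expo t s$ is well-defined, i.e.\ that the set of rounds at which $A_{s+1}$ is not strict is nonempty and bounded above, which is already implicit in Definition~\ref{thesets} (strictness holds at round $t'+1$, and fails once we go back to round $1$, where the in-neighborhoods $\I 1 {t_i} {a_i}$ cover $[n]$).
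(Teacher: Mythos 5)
Your proof is correct and follows the same route the paper intends: the corollary is an immediate consequence of the preceding lemma (applied at index $s+1$, giving strictness of $A_{s+1}$ at round $\expo t {s+1}+1$) together with the fact that strictness, by Monotonicity of in-neighborhoods, persists at all later rounds, so the maximal non-strict round $\expo t s$ cannot exceed $\expo t {s+1}$. The paper leaves this unproven as an obvious corollary; you have merely made the preservation-of-strictness step explicit, which is fine.
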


\begin{corollary}\label{nicelyordered}
    For every $s \in \{k, \hdots, n\}$, we have that for every $i \in [s]$, $\expo t {s} \leq t_i+1$.
\end{corollary}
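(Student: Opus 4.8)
The plan is to argue by reverse induction on $s$, from $s=n$ down to $s=k$, using only the monotonicity of the thresholds $\expo t s$ recorded in Corollary~\ref{rightorder}. The guiding observation is that, by construction, every pair occurring in $A_s$ is either inherited unchanged from $A_{s+1}$ or is the single freshly introduced pair $(p_s, \expo t s-1)$; in both situations the desired inequality $\expo t s \leq t_i+1$ is essentially immediate, the first case via the induction hypothesis together with $\expo t s \leq \expo t {s+1}$, and the second case because there the inequality holds with equality.

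For the base case $s=n$, Definition~\ref{thesets} gives $A_n=\{(i,t'): i\in[n]\}$ and $\expo t n=t'$, so every pair $(a_i,t_i)\in A_n$ has $t_i=t'=\expo t n$ and thus $\expo t n\leq t_i+1$ trivially. For the inductive step, assume the claim for $A_{s+1}$, i.e.\ $\expo t {s+1}\leq t+1$ for every $(a,t)\in A_{s+1}$. By Definition~\ref{thesets}, $A_s$ is obtained from $A_{s+1}$ by either deleting one pair or deleting two pairs and inserting $(p_s,\expo t s-1)$; in both cases $A_s\subseteq A_{s+1}\cup\{(p_s,\expo t s-1)\}$. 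Pick any $(a_i,t_i)\in A_s$. If $(a_i,t_i)\in A_{s+1}$, the induction hypothesis yields $\expo t {s+1}\leq t_i+1$, and combining with $\expo t s\leq \expo t {s+1}$ from Corollary~\ref{rightorder} gives $\expo t s\leq t_i+1$. Otherwise $(a_i,t_i)=(p_s,\expo t s-1)$, and then $t_i+1=\expo t s$, so the inequality holds. This establishes the claim for $A_s$ and closes the induction.

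There is no genuine obstacle here: the only point requiring care is the bookkeeping of which elements of $A_s$ are inherited from $A_{s+1}$ and which one (if any) is new, but the explicit form of Definition~\ref{thesets} makes this transparent, and Corollary~\ref{rightorder} provides exactly the monotonicity needed to propagate the bound from level $s+1$ down to level $s$.
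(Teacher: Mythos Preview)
Your proof is correct and follows essentially the same approach as the paper: reverse induction on $s$, with the base case $s=n$ immediate from Definition~\ref{thesets}, and the inductive step handling the two cases (pair inherited from $A_{s+1}$ versus the newly inserted pair $(p_s,\expo t s-1)$) using Corollary~\ref{rightorder} for the former and a direct check for the latter.
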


\begin{proof}
    By reverse induction, it is true for $s=n$, as for every $i \in [n], t_i=\expo t n$. Assume it is true for some $s+1$ for $s \in \{k, \hdots n-1\}$, and let us look at $A_s$. Then by Corollary~\ref{rightorder} and the induction hypothesis, for every $i \in [s]$ we have that either $(a_i, t_i) \in A_{s+1}$ and thus $\expo t {s-1} \leq \expo t s\leq t_i+1$, or that $(a_i, t_i) = (p_s, \expo t s-1)$ and trivially the inequality holds.
\end{proof}

\begin{lemma}\label{biginter}
    Let $s,u \in \{k, \hdots, n\}$ such that $s \leq u$. Then $A_s \inter A_u \geq 2s-u$.
\end{lemma}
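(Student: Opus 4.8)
The plan is to track how the sets $A_n, A_{n-1}, \dots, A_k$ of Definition~\ref{thesets} evolve as the index decreases by one, and to exploit the fact that each such step discards at most two pairs. Concretely, fix $s \le u$ in $\{k,\dots,n\}$. Reading the conclusion as a statement about cardinalities, I will show that $\card{A_u \setminus A_s} \le 2(u-s)$; since $\card{A_u}=u$, this gives $\card{A_s \inter A_u} = u - \card{A_u \setminus A_s} \ge u - 2(u-s) = 2s-u$, as desired.

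To bound $\card{A_u \setminus A_s}$, I first record a per-step bound. For $k \le v \le n-1$, passing from $A_{v+1}$ to $A_v$ in Definition~\ref{thesets} removes only $(a_i,t_i)$ in the first case, and only $(a_i,t_i)$ and $(a_j,t_j)$ in the second case (the newly added pair $(p_v, \expo t v - 1)$ is not in $A_{v+1}$ in that case), so $\card{A_{v+1}\setminus A_v} \le 2$. Next I claim $A_u \setminus A_s \subseteq \bigcup_{v=s}^{u-1}(A_{v+1}\setminus A_v)$: given $x \in A_u \setminus A_s$, set $v^\star = \min\{v : s \le v \le u,\ x \in A_v\}$, which is well-defined because $x \in A_u$, and satisfies $v^\star > s$ because $x \notin A_s$; then $x \in A_{v^\star}\setminus A_{v^\star-1}$ with $s \le v^\star-1 \le u-1$. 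Summing the per-step bound over $v = s,\dots,u-1$ yields $\card{A_u \setminus A_s} \le 2(u-s)$, finishing the argument.

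An equivalent and even shorter route is a reverse induction on $u$ for fixed $s$: the base case $u=s$ reads $\card{A_s\inter A_s}=s\ge s$, and for the step from $u$ to $u+1$ one notes that any element of $(A_s\inter A_u)\setminus(A_s\inter A_{u+1})$ lies in $A_u\setminus A_{u+1}$, a set of size at most one (only the pair $(p_u,\expo t u-1)$ can be gained when going from $A_{u+1}$ to $A_u$), so $\card{A_s\inter A_{u+1}} \ge \card{A_s\inter A_u} - 1 \ge 2s-u-1$. I do not anticipate a genuine obstacle: the only points requiring care are handling the two cases of the construction uniformly (the ``remove one'' case only enlarges the intersection, so the binding case is ``remove two, add one'') and correctly interpreting the terse inequality in the statement as $\card{A_s \inter A_u} \ge 2s-u$.
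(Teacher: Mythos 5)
Your proposal is correct and rests on essentially the same idea as the paper's proof: each step from $A_{v+1}$ to $A_v$ discards at most two pairs, and telescoping this over $v=s,\hdots,u-1$ (the paper phrases it as a reverse induction on $s$ using $\card{A_{s+1}\setminus A_s}\le 2$ together with $\card{A_u}=u$) gives $\card{A_s\inter A_u}\ge u-2(u-s)=2s-u$. Your alternative route, which instead uses that at most one pair is \emph{gained} per step so that $\card{A_s\inter A_u}$ drops by at most one as $u$ increases, is just a minor variant of the same bookkeeping and is also fine.
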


\begin{proof}
    By reverse induction, it is trivially true for $s=u$. Assume we have $\card {A_{s+1}\inter A_u} \geq 2(s+1)-u$ for some $s$ such that $k\leq s\leq u-1$. 
    
    We have that, by distribution of the intersection over the union operator:
    $$
    A_{s+1}\inter A_u=\left((A_{s+1}\inter A_s)\union (A_{s+1}\setminus A_s)\right)\inter A_u=\left(A_{s+1}\inter A_s\inter A_u\right)\union\left((A_{s+1}\setminus A_s)\inter A_u\right)
    $$
    Hence:
    $$
    \card {A_s \inter A_u}\geq \card {A_{s+1}\inter A_s\inter A_u} \geq \card {A_{s+1} \inter A_u} - \card {(A_{s+1}\setminus A_s)\inter A_u}\geq 2(s+1)-u-2=2s-u
    $$
Where we used that $\card {(A_{s+1}\setminus A_s)\inter A_u}\leq \card {A_{s+1}\setminus A_s}\leq 2$, because we removed at most two elements from $A_{s+1}$ to get $A_s$.
\end{proof}

\largesetoft*

\begin{proof}
    It is a direct implication of Corollary~\ref{nicelyordered} and Lemma~\ref{biginter}. 
    
    Indeed, $\card I =  \card {\{(a,t)\in A_s: \expo t u \leq t+1\}}$, and we have $\{(a,t)\in A_s: \expo t u \leq t+1\} \supseteq \{(a,t) \in A_s \inter A_u: \expo t u \leq t+1\} = \{(a,t)\in A_s \inter A_u\}=A_s\inter A_u,$ where the penultimate equality holds by Corollary~\ref{nicelyordered} applied to $u$. Lemma~\ref{biginter} states that $\card{A_s\inter A_u} \geq 2s-u$. 
\end{proof}

\begin{lemma}\label{atmostndegree}
    Let $\Delta_s=\expo t s-\expo t {s-1}$. Then, for every $s \in \{k+1, \hdots, n\}$:
    $$
     \sum_{u=s}^{\min\{2s-k-1, n\}}\left(2s-k - u\right)\Delta_{u} \leq n
     $$
\end{lemma}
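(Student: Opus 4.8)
The plan is to sandwich, for each fixed $s\in\{k+1,\dots,n\}$, the quantity
$$S_s\;:=\;\sum_{i=1}^{s}\card{\I{\expo t {s-1}+1}{t_i}{a_i}},\qquad\text{where }A_s=\{(a_1,t_1),\dots,(a_s,t_s)\}.$$
The upper side is immediate: by Definition~\ref{thesets}, $\expo t {s-1}$ is the \emph{largest} round at which $A_s$ fails to be strict, so $A_s$ is strict at round $\expo t {s-1}+1$, which means the sets $\I{\expo t {s-1}+1}{t_i}{a_i}$, $i\in[s]$, are pairwise disjoint subsets of $[n]$; hence $S_s\le n$. It therefore suffices to prove the lower bound $S_s\ \ge\ \sum_{u=s}^{\min\{2s-k-1,\,n\}}(2s-k-u)\,\Delta_u$.

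For the lower bound I would first note, via Monotonicity (Lemma~\ref{trivialinclusion}), that $\card{\I{t-1}{t_i}{a_i}}\ge\card{\I{t}{t_i}{a_i}}$ for every round $t$, while $\card{\I{t_i+1}{t_i}{a_i}}=1$. Telescoping each summand of $S_s$ down to this base value and then exchanging the order of summation gives
$$S_s\;=\;s\;+\;\sum_{t\,\ge\,\expo t {s-1}+2}\ \sum_{\substack{i\in[s]\\ t\le t_i+1}}\left(\card{\I{t-1}{t_i}{a_i}}-\card{\I{t}{t_i}{a_i}}\right),$$
all of whose summands are $\ge 0$, so I may discard every round $t>\expo t n=t'$. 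Now fix a round $t$ with $\expo t {s-1}+1\le t\le\expo t n$, and let $u=u(t)$ be the least element of $\{s,\dots,n\}$ with $t\le\expo t u$; by Corollary~\ref{rightorder} it exists, satisfies $u\ge s$ (since $t>\expo t {s-1}$), and $t\in(\expo t {u-1},\expo t u]$. Strictness of $A_s$ at round $\expo t {s-1}+1$ together with Monotonicity makes $A_s$ strict at round $t$ as well, and Lemma~\ref{largesetoft} (applied with this $u$) shows the index set $I:=\{i\in[s]:t\le t_i+1\}$ has $\card I\ge 2s-u$, since it contains every $i$ with $\expo t u\le t_i+1$. Feeding $I$ into the Strict Increments Lemma (Lemma~\ref{bigin}) at round $t$ then produces at least $\card I-k\ge 2s-k-u$ indices $i$ with $\card{\I{t-1}{t_i}{a_i}}>\card{\I{t}{t_i}{a_i}}$; so round $t$ contributes at least $\max\{0,\,2s-k-u(t)\}$ to the double sum above.

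Finally I would group the rounds of $(\expo t {s-1},\expo t n]$ according to which sub-interval $(\expo t {u-1},\expo t u]$ they lie in: the block labelled $u$ consists of exactly $\Delta_u$ rounds, each having $u(t)=u$, so the rounds $\ge\expo t {s-1}+1$ contribute in total at least $\sum_{u=s}^{n}\max\{0,2s-k-u\}\Delta_u=\sum_{u=s}^{\min\{2s-k-1,n\}}(2s-k-u)\Delta_u$, while dropping the single omitted round $\expo t {s-1}+1$ costs at most $2s-k-u(\expo t {s-1}+1)\le s-k$ (as $u(\cdot)\ge s$). Hence $S_s\ge s-(s-k)+\sum_{u=s}^{\min\{2s-k-1,n\}}(2s-k-u)\Delta_u\ge\sum_{u=s}^{\min\{2s-k-1,n\}}(2s-k-u)\Delta_u$, and combining with $S_s\le n$ finishes the proof. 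The delicate part is precisely this bookkeeping — the telescoping, the summation swap, and pairing each round with the interval that (through Lemma~\ref{largesetoft}) supplies an index set wide enough for the Strict Increments Lemma to yield the coefficient $2s-k-u$; the boundary cases where several $\Delta_u$ vanish or a pair has $t_i<\expo t {s-1}$ (contributing $0$ to $S_s$) need a short separate check, but one verifies the bound $S_s\ge k+\sum_{u=s}^{\min\{2s-k-1,n\}}(2s-k-u)\Delta_u$ survives them unchanged.
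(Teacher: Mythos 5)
Your proof is correct and follows essentially the same route as the paper's: it sandwiches $S_s=\sum_{i\in[s]}\card{\I{\expo t {s-1}+1}{t_i}{a_i}}$ between the strictness bound $S_s\le n$ and a telescoping lower bound in which each round $t\in(\expo t {u-1},\expo t u]$ contributes $2s-k-u$ via Lemma~\ref{largesetoft} and the Strict Increments Lemma, with the slack of order $s-k$ absorbing the boundary round (the paper does the same bookkeeping with indicator sums and the index $x=\min\{u:\Delta_u>0\}$). The boundary case you flag (pairs with $t_i<\expo t {s-1}$, which can only occur when $\Delta_s=0$) is indeed harmless for the reason you indicate, since there are at most $x-s$ such pairs and the slack $x+k-s$ covers them; the paper's own write-up glosses over this point as well.
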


An illustration of the proof can be found in Figure~\ref{fig:atmostn}.

\begin{figure}
    \centering
    \includegraphics[width=\linewidth]{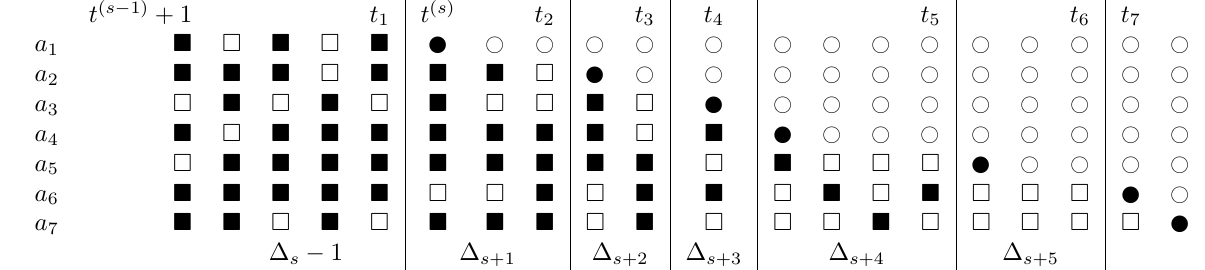}
    \caption{Illustration of proof of Lemma~\ref{atmostndegree}, with $s=7$ and $k=2$. Each column represents a round, each row represents an element $(a_i, t_i)$ of $A_s$, ordered according to $t_i$. Remember that every $t_i$ directly precedes a $t^{(u)}$. We add a square if $t\leq t_i$, and a circle otherwise. The entry is black if $\card{\I t {t_i} {a_i}}>\card{\I {t+1} {t_i} {a_i}}$ and white otherwise.
    We have one black circle per row. Indeed, we have that $\card {\I {t_i+1} {t_i} {a_i}}=\card {\{a_i\}} = 1 > 0 = \card {\I {t_i+2} {t_i} {a_i}}$. 
    Lemma~\ref{largesetoft} gives a lower bound on the number of squares for each column. The Strict Increments Lemma asserts that there are at most $k$ white squares per column. A lower bound for $S_s$ is thus the number of black entries.
    }
    \label{fig:atmostn}
\end{figure}

\begin{proof}
Let $s \in \{k+1, \hdots, n\}$. If for every $u$ such that $s\leq u\leq \min\{2s-k-1, n\}$, we have $\Delta_u = 0$, then the result is immediate. If not, let $x=\min\{u: s\leq u\leq \min\{2s-k-1, n\} \land \Delta_u >0\} $.
In the rest of the proof, 
let $u$ be restricted to fulfill $x\leq u\leq \min\{2s-k-1, n\}$ and let $I_s^u=\{i \in [s]: \expo t u \leq t_i+1\}$. By Lemma~\ref{largesetoft}, we have that $\card{ I_s^u} \geq 2s-u$.
    To prove the claim we will give upper and lower bounds for $S_s = \sum_{i \in [s]} \card{\I {\expo t {s-1}+1} {t_i} {a_i}}$. The upper bound directly follows from the fact that $\expo t {s-1}+1$ is the smallest round at which $A_s$ is strict, which implies that $S_s\le n$. 
    
    We next fix a round $t$ with $\expo t {u-1}+1\leq t\leq \expo t u$. 
    We have that $\expo t {s-1}$ is the largest round $A_s$ is not strict at (by Definition~\ref{thesets}), so it is strict at $t\geq \expo t {u-1}+1 > \expo t {s-1}$. By the Strict Increments Lemma (Lemma~\ref{bigin}),
    there exists a set $J_s^u(t)\subseteq I_s^u$ with $\card {J_s^u(t)} \geq \card {I_s^u} -k \geq 2s-u-k$ such that for every $j \in J_s^u(t)$, $\card {\I {t-1} {t_j} {a_j}} > \card {\I {t} {t_j} {a_j}}$. 
    As every $j \in J_s^u(t)$ belongs to 
    $I_s^u$ it follows that $t \leq \expo t u \leq t_j+1$. 
    Thus, for every $(j,t) \in [s]\times \N$ with $t>t_j+1$ it holds that it does not belong to $J_s^u(t)$,
    or, equivalently, $\1(j\in J_s^u(t))=0$.
    
    We can then write:
    $$
         S_s:=\sum_{i \in [s]} \card{\I {\expo t {s-1}+1} {t_i} {a_i}}=
         \sum_{i\in[s]}\left(\card{\I {t_i+1} {t_i} {a_i}}+ \sum_{t=\expo t {s-1}+1}^{t_i}\left( \card{\I {t} {t_i} {a_i}}-\card{\I {t+1} {t_i} {a_i}}\right)\right)$$

         We first separate the second sum according to which interval $[\expo t {u-1}+1, \expo t u]$ round $t$ belongs to, and furthermore add a third sum that is equal to $0$, since $\1(j\in J_s^u(t))=0$ for every $(j,t) \in [s]\times \N$ such that $t>t_j+1$.
         For the consecutive inequality note that by the definition of $J_s^u(t)$, we have that in the second sum $\card{\I {t} {t_i} {a_i}}-\card{\I {t+1} {t_i} {a_i}} \geq \1(i\in J_s^u({t+1}))$.
         \noindent\resizebox{\linewidth}{!}{
    \parbox{\linewidth}{
         \begin{align*}
         S_s&\geq
         \sum_{i\in[s]}\left(\card{\I {t_i+1} {t_i} {a_i}}
         + \sum_{t=\expo t {s-1}+1}^{t_i}\sum_{u=s}^{\min\{2s-k-1, n\}}\1(\expo t {u-1}+1\leq t+1\leq \expo t u)\left( \card{\I {t} {t_i} {a_i}}-\card{\I {t+1} {t_i} {a_i}}\right)\right.
         \\&\quad\quad\quad\quad\quad\quad\quad\quad\quad+\left.\sum_{t=t_i+2}^{\expo t {\min\{2s-k-1, n\}}}\sum_{u=s}^{\min\{2s-k-1, n\}}\1(\expo t {u-1}+1\leq t\leq \expo t u)\1(i\in J_s^u(t))\right)\\
         &\geq
         \sum_{i\in[s]}\left(\card{\I {t_i+1} {t_i} {a_i}}+ \sum_{t=\expo t {s-1}+1}^{t_i}\sum_{u=s}^{\min\{2s-k-1, n\}}\1(\expo t {u-1}+1\leq t+1\leq \expo t u)\1(i\in J_s^u({t+1}))\right.
         \\&\quad\quad\quad\quad\quad\quad\quad\quad\quad\left.+\sum_{t=t_i+2}^{\expo t {\min\{2s-k-1, n\}}}\sum_{u=s}^{\min\{2s-k-1, n\}}\1(\expo t {u-1}+1\leq t\leq \expo t u)\1(i\in J_s^u(t))\right) \\
         &=
        \sum_{i\in[s]}\left(\card{\I {t_i+1} {t_i} {a_i}}+ \sum_{t=\expo t {s-1}+2}^{t_i+1}\sum_{u=s}^{\min\{2s-k-1, n\}}\1(\expo t {u-1}+1\leq t \leq \expo t u)\1(i\in J_s^u({t}))\right.
        \\&\quad\quad\quad\quad\quad\quad\quad\quad\quad\left.+\sum_{t=t_i+2}^{\expo t {\min\{2s-k-1, n\}}}\sum_{u=s}^{\min\{2s-k-1, n\}}\1(\expo t {u-1}+1\leq t\leq \expo t u)\1(i\in J_s^u(t))\right) \end{align*}}}
        \noindent\resizebox{\linewidth}{!}{
    \parbox{\linewidth}{
         \begin{align*}
         S_s&\geq
         \sum_{i\in[s]}\Biggl(\card{\I {t_i+1} {t_i} {a_i}}\\&\quad\quad\quad\quad+ \sum_{t=\expo t {s-1}+2}^{\max \{ t_i+1, \expo t {\min\{2s-k-1, n\}}\}}\sum_{u=s}^{\min\{2s-k-1, n\}}\1(\expo t {u-1}+1\leq t\leq \expo t u)\1(i\in J_s^u({t}))\Biggr) =: X_1
         \end{align*}}}
         
         We next invert the two sum symbols, and delete terms where $\1(\expo t {u-1}+1\leq t\leq \expo t u)=0$:
         \resizebox{\linewidth}{!}{\parbox{\linewidth}{
         \begin{align*}
         X_1 &\geq
         \sum_{i \in [s]}\left(1+\sum_{t=\expo t {x-1}+2}^{\expo t x}\1(i \in J_s^x(t))+\sum_{u=x+1}^{\min\{2s-k-1, n\}}\sum_{t=\expo t {u-1}+1}^{\expo t u}\1(i \in J_s^u(t))\right)\\
         &\geq
        s+\sum_{t=\expo t {x-1}+2}^{\expo t x}\sum_{i \in [s]}\1(i \in J_s^x(t))+\sum_{u=x+1}^{\min\{2s-k-1, n\}}\sum_{t=\expo t {u-1}+1}^{\expo t u}\sum_{i \in [s]}\1(i \in J_s^u(t)) =: X_2\\
        \intertext{Using that $\sum_{i \in [s]}\1(i \in J_s^u(t)) = \card{J_s^u(t)} \geq 2s-k - u$ and $\Delta_u=\expo t u - \expo t {u-1}$:}
         X_2 &\geq s+(\Delta_x-1)(2s-k-x)+\sum_{u=x+1}^{\min\{2s-k-1, n\}}(2s-k-u)\Delta_u
         \geq \sum_{u=s}^{\min\{2s-k-1, n\}}(2s-k-u)\Delta_u,
    \end{align*}}}
    where the last inequality follows since   $x \geq s$ which implies that $2s-k-x \leq 2s-k-s \leq s$ and that $\Delta_u = 0$ for $u < x$.

\noindent
As $\sum_{i \in [s]} \card{\I {\expo t {s-1}+1} {t_i} {a_i}}=S_s \leq n$, the claim follows.   
\end{proof}

We now recall the definition of the strict rounds graph, which is built to harness the previous result.

\strictrounds*

The graph is represented in \figref{fig:strict}. Each node $s$ with $k+1 \le s \le (n+k)/2$
has $s-k+ 2$ incoming edges, namely $(s,s)$, $(s+1, s), \dots, (2s-k-1,s)$,
with weights $s-k, s-k-1, \dots, 1$, respectively. Each node
$s$ with $(n+k)/2 < s \le n$ has 
$n+1-s$ incoming edges, namely $(s,s)$, $(s+1, s), \dots, (n,s)$,
with weights $s-k, s-k-1, \dots, 2s-k-n$, respectively. 

Each node $s$ has $s+1 - \floor{\frac{s+k}{2}}$ outgoing edges, namely $(s,s), (s, s-1), \dots, (s,\floor{\frac{s+k}{2}} + 1)$ of weight
$s-k, s-k - 2, \dots$, respectively. If $s-k$ is even, all outgoing edges of $s$ have even weight and the edge
$(s,\floor{\frac{s+k}{2}} + 1)$ has weight 2.
If $s-k$ is odd, all outgoing edges of $s$ have odd weight and
the edge $(s,\floor{\frac{s+k}{2}} + 1)$  has weight 1.

\begin{lemma}\label{lem:strict}
    Let $u\in [n]$, and define $E_u^{out}:=\{(s,v)\in E:s\leq u\}$. In the strict rounds graph, we have that 

    $$
    \sum_{(s,v) \in E_u^{out}} \Delta_s w(s,v) \leq (u-k) \cdot n
    $$
\end{lemma}

\begin{proof}
    We are summing over all the out-edges of all the vertices with a label at most $u$. 
    Let $E_u^{in}=\{(s,v)\in E:v\leq u\}$. 
    Every outgoing edge of a node $s \le u$ must end at a node $v$ with $v \le s \le u$. Thus, it is contained in the set of incoming edge of all nodes $v$ with $v \le u$. Said differently, $E_u^{out} \subseteq E_u^{in}$.
    
    Since all the terms are positive, we have:

    \begin{multline*}
        \sum_{(s,v) \in E_u^{out}} \Delta_s w(s,v)  \leq\sum_{(s,v) \in E_u^{in}} \Delta_s w(s,v)  = \sum_{v \leq u}\sum_{s:(s,v) \in E}\Delta_s w(s,v) \\\leq \sum_{v \leq u}\sum_{s:(s,v) \in E}\Delta_s (2v-k-s)\leq \sum_{v \leq u}\sum_{s = v}^{\min\{n, 2v-s-1\}}\Delta_s (2v-k-s)
    \end{multline*}

    Applying Lemma~\ref{atmostndegree},
    the claim follows.
\end{proof}

\begin{restatable}{lemma}{alphas}\label{alphas}
    Let $s \geq k+1$. If $s-k$ is odd, we have $\sum_{v:(s,v) \in E}w(s,v)={(\frac{s-k+1} 2)^2}$. If it is even, we have $\sum_{v:(s,v) \in E}w(s,v)={(\frac{s-k} 2)^2+\frac{s-k} 2}$.
\end{restatable}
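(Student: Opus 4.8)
The plan is to directly enumerate the outgoing edges of a fixed vertex $s \ge k+1$ in the strict rounds graph and sum their weights, observing that these weights form an arithmetic progression with common difference $2$.

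First I would unwind the definition: $(s,v) \in E$ precisely when $v \le s \le \min\{2v-k-1, n\}$. Since $s \le n$ always holds, the condition for an \emph{outgoing} edge of $s$ reduces to $v \le s$ together with $s \le 2v-k-1$, i.e. $v \ge \frac{s+k+1}{2}$; as $v$ is an integer this is equivalent to $v \ge \floor{\frac{s+k}{2}}+1$ (checking the two parities of $s+k$ separately), and one verifies $\floor{\frac{s+k}{2}}+1 \ge k+1$ so that all such $v$ are genuine vertices. Hence the out-neighbours of $s$ are exactly $v \in \{\floor{\frac{s+k}{2}}+1, \dots, s\}$, and $w(s,v) = 2v-k-s$.

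Next, as $v$ runs from $s$ down to $\floor{\frac{s+k}{2}}+1$, the weight $2v-k-s$ takes the values $s-k,\, s-k-2,\, s-k-4,\, \dots$, decreasing by $2$ at each step, so it only remains to identify the last term. If $s-k$ is odd, then $s+k$ is odd, $\floor{\frac{s+k}{2}} = \frac{s+k-1}{2}$, and the weight at $v=\floor{\frac{s+k}{2}}+1$ is $2\cdot\frac{s+k-1}{2}+2-k-s = 1$; thus the weights are the odd numbers $1,3,\dots,s-k$, i.e. $\frac{s-k+1}{2}$ terms. If $s-k$ is even, then $s+k$ is even, $\floor{\frac{s+k}{2}} = \frac{s+k}{2}$, and the last weight is $2\cdot\frac{s+k}{2}+2-k-s = 2$; thus the weights are the even numbers $2,4,\dots,s-k$, i.e. $\frac{s-k}{2}$ terms. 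Summing, $\sum_{v:(s,v)\in E} w(s,v) = \sum_{j=1}^{(s-k+1)/2}(2j-1) = \left(\frac{s-k+1}{2}\right)^2$ in the odd case (sum of the first $m$ odd numbers is $m^2$), and $\sum_{v:(s,v)\in E} w(s,v) = \sum_{j=1}^{(s-k)/2} 2j = \left(\frac{s-k}{2}\right)^2 + \frac{s-k}{2}$ in the even case (since $\sum_{j=1}^m 2j = m(m+1)$), which are exactly the claimed formulas.

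There is essentially no conceptual obstacle: the only thing demanding care is keeping the floor functions and the parity of $s-k$ consistent, so that the lowest-weight outgoing edge is correctly pinned down as having weight $1$ (odd case) or $2$ (even case) — this is precisely what decides whether one sums odd or even numbers, and hence which of the two formulas results.
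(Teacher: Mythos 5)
Your proof is correct and follows essentially the same route as the paper: enumerate the out-neighbours $v\in\{\floor{\frac{s+k}{2}}+1,\dots,s\}$ of $s$ and sum the arithmetic progression of weights $1,3,\dots,s-k$ (odd case) or $2,4,\dots,s-k$ (even case). Your handling of the edge condition is in fact slightly more careful than the paper's, which uses $s\le 2v-k$ instead of $s\le 2v-k-1$ (harmless there, since the extra term has weight $0$).
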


Basically, as discussed above and seen in \figref{fig:strict}, if $s-k$ is odd, we have that $\sum_{v:(s,v) \in E}w(s,v)=1+3+5+\cdots +s-k={(\frac{s-k+1} 2)^2}$, while if it is even, $\sum_{v:(s,v) \in E}w(s,v)=2+4+6+\cdots+s-k={(\frac{s-k} 2)^2+\frac{s-k} 2}$. The full proof can be found in Appendix~\ref{appendix}.

\begin{definition}
    We define the following numbers: $\beta=\frac {\pi^2 +6} 6, \alpha_s={(\frac {s-k+1} 2)^2}$ if $s-k$ is odd, $\alpha_s={(\frac{s-k} 2)^2+\frac{s-k} 2}$ if $s-k$ is even.
\end{definition}
Note that $\alpha_s$ is an integer for all values of $s \ge k$.
By applying this definition to the bounds of Lemmata~\ref{lem:strict} and~\ref{alphas} we achieve the following result.

\corsum*

\littledeltas*

\begin{proof}
    We will show that $\sum_{s=k+1}^{n}\delta_s \alpha_s$  is maximized when $\delta_s=\alpha^{-1}_s\cdot n$ for every $s$. Let $\{\delta_s\}_{k+1\leq s \leq n}$ be such that
    $\sum_{s=k+1}^{n}\delta_s \alpha_s$ if maximized.
    %an optimal\mh{? maximizes the sum?}  solution. 
    Assume by contradiction that there exists a $k+1\leq s\leq n$ such
    that $\delta_s < \alpha^{-1}_s \cdot n$. Let $v$ be the smallest such $s$ and set $\epsilon = n - \delta_v \alpha_v>0$. 
    We can then build a different solution $\gamma$ by setting $\gamma_s=\delta_s$ for every $s \notin \{v, v+1\}$, and setting $\gamma_v=\delta_v+{\epsilon}{\alpha^{-1}_v} = \alpha^{-1}_v \cdot n > \delta_v$ and $\gamma_{v+1}=\delta_{v+1}-{\epsilon}{\alpha^{-1}_{v+1}}$. 
    Note that for every $u\in \{k+1, \hdots n\}$, $    \sum_{s \leq u} \gamma_s \alpha_s \leq (u-k)\cdot n $ as 
    
    (1) for $u < v$ 
    it holds that $\sum_{s \leq u} \gamma_s \alpha_s = (u-k)\cdot n $,
    
     (2) for $u = v$ it holds that
    $    \sum_{s \leq u} \gamma_s \alpha_s = 
       (\epsilon \alpha^{-1}_v)\alpha_v +  \delta_v \alpha_v  + \sum_{s < u} \delta_s \alpha_s =
       (n - \delta_v \alpha_v)  + \delta_v \alpha_v +  \sum_{s < u} \delta_s \alpha_s = 
       n + (u-1-k) \cdot n  = (u-k) \cdot n$, and
       
       (3) for $u \ge v+1$ if holds that 
       $    \sum_{s \leq u} \gamma_s \alpha_s = 
       \sum_{s \leq v-1} \delta_s \alpha_s + \gamma_{v} \alpha_v + \gamma_{v+1}\alpha_{v+1}  + \sum_{v+2 \le s \leq u} \delta_s \alpha_s = 
       \sum_{s \leq v - 1} \delta_s \alpha_s + 
       \delta_{v} \alpha_{v} + \epsilon \alpha_{v+1} \alpha_{v+1}^{-1} +
       \delta_{u} \alpha_{u}^{-1} - \epsilon \alpha_{u} \alpha_{u}^{-1} 
       + \sum_{v+2 \le s \leq u} \delta_s \alpha_s
       =
        \sum_{s \le u} \delta_s \alpha_s \le (u-k) \cdot n,$ and

    Now note that
    $\sum_{s=k+1}^{n}\gamma_s \alpha_s > \sum_{s=k+1}^{n}\delta_s \alpha_s$ as the $\alpha_s$ values are decreasing in $s$, which gives a contradiction to the assumption that $\delta$ maximizes $\sum_{s=k+1}^{n}\delta_s \alpha_s$. It now follows that
    
    %We conclude by computing:

    $$\sum_{s=k+1}^{n} \delta_s \le \sum_{s=k+1}^{n}\alpha^{-1}_s n\leq\sum_{s-k=1}^{\infty}\alpha^{-1}_s n 
    \le n\sum_{\ell\in \N}\frac{1}{\ell^2}
    +n\sum_{\ell \in \N}\frac{1}{\ell^2+\ell}=\beta n.
    $$
    where the last equation follows by the fact that $\sum_{\ell\in \N}\frac{1}{\ell^2} = \pi^2/6$ and $\sum_{\ell\in \N}\frac{1}{\ell^2+\ell}=\sum_{\ell\in \N}\frac{1}{\ell}-\frac 1 {\ell+1}=1$.
\end{proof}

\atmostbeta*

\bigtheorem*

\begin{proof}
    We have $\sum_{s=k+1}^{n}\Delta_s=\sum_{s=k+1}^n \expo t s-\expo t {s+1}=t'-\expo t k$, which is at most  $\frac{\pi^2+6}{6} n$ by Corollary~\ref{corollary}. 
    Setting $\expo t k=1$, we get that $t'\leq\frac{\pi^2+6}{6}n+1$. Lemma~\ref{cover} ensures that $A_k$ is a cover of $[n]$ for $t'$, which means that for every $x \in [n]$, there exists a $i \in [k]$ such that $x \in \Out {t_i+1} {t'} {a_i}$. 
    Since $\min_i\{t_i\} = \expo t k-1=0$, this means that every $x \in  [n]$ belongs to $\Out {1} {t'} {a_i}$, which implies that $t^c_k(\F_n^k)\leq t'$.
\end{proof}

\subsection{The Lower Bound}

We build a lower bound example based on the lower bound for broadcasting on trees. A figure and analysis of that example can be found in Appendix~\ref{appendix}, which yields the following result:

\begin{restatable}{theorem}{lowerbthree}
    $t^c_k(\RT_n^k) \geq \ceil{\frac {3n-3k}{2} -1}$
\end{restatable}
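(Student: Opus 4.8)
The plan is to obtain the lower bound from the known $\ceil{\frac{3n-1}{2}}-2$ lower bound for broadcasting on trees by embedding a hard tree-broadcast instance into a $k$-forest whose other $k-1$ trees are trivial. Concretely, set $m=n-k+1$, fix a set $H\subseteq[n]$ of $m$ processes, and call the remaining $k-1$ processes \emph{singletons}. Let $(\hat G_t)_{t\ge1}$ be a sequence of rooted trees on $H$ realising the tree lower bound, i.e.\ with broadcast time $\T(\hat G_1,\hat G_2,\hdots)\ge\ceil{\frac{3m-1}{2}}-2$ (such a sequence exists by relabelling a hard sequence on $[m]$, which is harmless since the broadcast time is invariant under relabelling of the ids). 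Let $\tilde G_t$ be the disjoint union of $\hat G_t$ on $H$ with the $k-1$ single-node self-loop trees on the singletons; this is a union of exactly $k$ rooted trees, so $\tilde G_t\in\F_n^k$ and $(\tilde G_t)_{t\ge1}$ is a legal adversarial strategy. We will show that its cover time is at least $\ceil{\frac{3m-1}{2}}-2$.

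First I would record the product-graph decomposition: since $H$ is a union of whole connected components of every $G_t$ and each singleton keeps only its self-loop, for every singleton $v$ we have $\I 1 t v=\Out 1 t v=\{v\}$ for all $t$, and the restriction of $G(t)$ to $H$ is exactly the product graph produced by the sequence $(\hat G_t)$. Then comes the one real step, the structural description of covers: at any round $t$, every cover $I\subseteq[n]$ with $\card I\le k$ of $[n]$ must equal $\{\text{all }k-1\text{ singletons}\}\union\{x\}$ for a single $x\in H$, and that $x$ must have an out-edge in $G(t)$ to every process of $H$. Indeed, each singleton $v$ needs an in-neighbour in $I$, and its only in-neighbour in $G(t)$ is $v$ itself, so all $k-1$ singletons lie in $I$; since singletons reach nobody else, the $m\ge1$ processes of $H$ must be covered by the remaining element(s) of $I$, which forces some $x\in H$ into $I$, and then $\card I\le k$ pins $I$ down exactly and makes $x$ alone responsible for covering all of $H$ — that is, $x$ must already have broadcast inside the $(\hat G_t)$-instance.

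Combining these facts, for every $t<\ceil{\frac{3m-1}{2}}-2$ no process of $H$ has broadcast in the $(\hat G_t)$-instance, hence no cover of size $k$ of $[n]$ exists at round $t$, so $t^c_k(\F_n^k)\ge\ceil{\frac{3m-1}{2}}-2$. Substituting $m=n-k+1$ and using $\lceil y+j\rceil=\lceil y\rceil+j$ for integer $j$ gives $\ceil{\frac{3(n-k+1)-1}{2}}-2=\ceil{\frac{3(n-k)}{2}+1}-2=\ceil{\frac{3(n-k)}{2}}-1$, as claimed; a figure showing the $k-1$ isolated vertices alongside the known hard tree-broadcast instance on the other $m$ vertices makes this transparent. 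The only delicate point is the structural claim — in particular, arguing from the bare definition of a cover (which merely requires $\card I\le k$) that $I$ can neither skip a singleton nor avoid $H$ entirely, which is exactly what forces $\card I=k$ and the stated shape; everything else is bookkeeping plus a black-box appeal to the tree lower bound.
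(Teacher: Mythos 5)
Your proposal is correct and is essentially the paper's own argument: the paper likewise traps $k-1$ vertices in singleton one-node trees and runs the known hard tree-broadcast sequence on the remaining $i=n-k+1$ vertices, observing that a size-$k$ cover must contain every isolated singleton and hence can only exist once some vertex has broadcast within the big tree, which gives $\ceil{\frac{3(n-k+1)-1}{2}}-2=\ceil{\frac{3(n-k)}{2}}-1$. Your write-up just spells out the cover-structure step that the paper leaves to a figure and a short sketch.
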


\section{\texorpdfstring{$k$-Broadcasting on $k$-rooted Networks}{k-Broadcasting on k-rooted Networks}}\label{multiroot}
In this section, we study the case where the adversary has to choose a communication network that has $k$ roots at least in each round. This allows us to enforce not only broadcast, but rather $k$-broadcast. 

The problem is very similar to the one studied in the Section~\ref{tree}, and indeed a slight modification of the proofs there work for this problem.

\subsection{The Upper Bound}

By using a technique very similar to Section~\ref{tree}, we will show that whenever we have a set $A\subset [n]$ of size at most $k-1$, we can find a process $p \notin A$ that has broadcast, as long as we have waited for a large enough number of rounds. 

\begin{definition}
    Let $t$ be a round. We denote by $R_t$ the set of the roots of $G_t$. 
\end{definition}
Recall that $\card {R_t} \geq k$.

The following two lemmata, very similar to Section~\ref{tree}, can be proven by looking at paths going from a root to a carefully chosen node.

\begin{restatable}{lemma}{ingetsbigger}\label{ingetsbigger}
    Let $t\leq t'$ be rounds, and let $x$ and $r$ be processes such that, $r \in R_t$ and $r \notin \I {t+1} {t'} x$. Then $\card {\I t {t'} x}> \card {\I {t+1} {t'} x}$. 
\end{restatable}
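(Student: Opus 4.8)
The plan is to mimic the argument for Lemma~\ref{propagation}.\ref{ingetsbig}, now exploiting the fact that in a $k$-rooted network there is a directed path from \emph{each} root to every node, so in particular from the specified root $r$ to $x$. Concretely, I would fix a directed path $r = z_0 \to z_1 \to \dots \to z_\ell = x$ in $G_t$ (the self-loop-augmented graph of round $t$). Since $x \in \I{t+1}{t'}{x}$ trivially (as $t+1 \le t'+1$, so $\I{t+1}{t'}{x}$ contains $x$), while $r = z_0 \notin \I{t+1}{t'}{x}$ by hypothesis, there must be a first index along the path where membership flips: an edge $(z_{m-1}, z_m)$ in $G_t$ with $z_{m-1} \notin \I{t+1}{t'}{x}$ and $z_m \in \I{t+1}{t'}{x}$.

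Next I would push this witness back by one round. Set $z = z_{m-1}$ and $y = z_m$. Since $y \in \I{t+1}{t'}{x}$ means $x \in \Out{t+1}{t'}{y}$ by Lemma~\ref{easy}, and $(z,y)$ being an edge of $G_t$ means $y \in \Out{t}{t}{z}$, Transitivity (Lemma~\ref{transitive}, part ii, with the roles $z \in \I t t z$... — more directly part i: $y \in \Out{t}{t}{z}$ and $x \in \Out{t+1}{t'}{y}$ give $x \in \Out{t}{t'}{z}$) yields $x \in \Out{t}{t'}{z}$, i.e.\ $z \in \I{t}{t'}{x}$. Now Monotonicity (Lemma~\ref{trivialinclusion}) gives $\I{t+1}{t'}{x} \subseteq \I{t}{t'}{x}$, and since $z \in \I{t}{t'}{x} \setminus \I{t+1}{t'}{x}$ this inclusion is strict, so $\card{\I t {t'} x} > \card{\I {t+1} {t'} x}$.

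I do not expect any real obstacle here: the only thing to be slightly careful about is the degenerate ranges of the generalized neighborhoods (e.g.\ whether $t = t'$, or $t+1 = t'+1$), which are handled uniformly by the convention $\I{t+1}{t'}{x} = \{x\}$ when $t = t'$ and by Monotonicity otherwise, and by noting that the hypothesis $r \notin \I{t+1}{t'}{x}$ forces $x \ne r$ so the path has length at least $1$. The key conceptual point — and the only place the $k$-rooted hypothesis is used — is that $r \in R_t$ guarantees the existence of the $r$-to-$x$ path in $G_t$; everything else is the same transitivity-plus-monotonicity bookkeeping as in Section~\ref{tree}.
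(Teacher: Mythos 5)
Your proposal is correct and follows essentially the same argument as the paper: take the path from the root $r$ to $x$ in $G_t$, find the edge where membership in $\I{t+1}{t'}{x}$ flips, and conclude via Transitivity and Monotonicity that the in-neighborhood strictly grows. The only cosmetic difference is that you route the transitivity step through Lemma~\ref{easy} and part i of Lemma~\ref{transitive}, whereas the paper invokes part iii directly; the substance is identical.
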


\begin{proof}
    We will show that there exists a process $y \in \I {t+1} {t'} x$ that has an in-neighbor $z$ in $G_t$ such that $z \notin \I {t+1} {t'} x$. Then, by Transitivity (Lemma~\ref{inin}), $z \in \I t {t'} x$. By Monotonicity (Lemma~\ref{trivialinclusion}), we have that $\I {t+1} {t'} x \subset \I {t} {t'} x$, this will show that $\card {\I t {t'} x}> \card {\I {t+1} {t'} x}$.

    Let us now find such a $z$. Consider the path from $r$ to $x$ in $G_t$. Since $r \notin \I {t+1} {t'} x$, and trivially $x \in \I {t+1} {t'} x$, this path must include an edge $(z,y)$ such that $z \notin \I {t+1} {t'} x$, $y \in \I {t+1} {t'} x$.
\end{proof}

\begin{restatable}{lemma}{outgetsbigger}\label{outgetsbigger}
    Let $t\leq t'$ be rounds, and let $x$ and $r$ be processes such that $r \in R_{t'}$, and $r \in \Out {t} {t'-1} x$.
Then $\card {\Out t {t'} x}> \card {\Out {t} {t'-1} x}$, unless $\card {\Out {t} {t'-1} x}=n$.
\end{restatable}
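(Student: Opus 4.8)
The plan is to mirror the proof of Lemma~\ref{propagation}.\ref{outgetsbig} exactly, with the only change being that we invoke the multi-root structure and the corresponding transitivity lemma. Concretely, suppose $\card{\Out t {t'-1} x} < n$, so there is a process $a \notin \Out t {t'-1} x$. Since $r \in R_{t'}$ is a root of $G_{t'}$, there is a directed path from $r$ to $a$ in $G_{t'}$. We are given $r \in \Out t {t'-1} x$, while $a \notin \Out t {t'-1} x$, so walking along this path we must encounter an edge $(y,z)$ with $y \in \Out t {t'-1} x$ and $z \notin \Out t {t'-1} x$.

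Next I would combine these facts. From $y \in \Out t {t'-1} x$ and $z$ being an out-neighbor of $y$ in $G_{t'}$, i.e. $z \in \Out {t'} {t'} y$, Transitivity (Lemma~\ref{transitive}.i, the $\Out\text{-}\Out$ composition with $t'' = t'$) gives $z \in \Out t {t'} x$. Meanwhile $z \notin \Out t {t'-1} x$, and by Monotonicity (Lemma~\ref{trivialinclusion}) we have $\Out t {t'-1} x \subseteq \Out t {t'} x$. Hence the inclusion is strict, and combining with the integrality of cardinalities yields $\card{\Out t {t'} x} > \card{\Out t {t'-1} x}$, as desired. The case $\card{\Out t {t'-1} x} = n$ is excluded by hypothesis, so nothing more is needed.

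I do not expect any real obstacle here: this is essentially a verbatim adaptation of the single-root argument, and the only substantive difference is that the role of "the root $r_{t'}$" is played by an \emph{arbitrary} element $r$ of $R_{t'}$ that happens to lie in $\Out t {t'-1} x$ — the existence of a path from $r$ to every node is exactly the defining property of a root, so the path-tracing step goes through unchanged. The one point to be slightly careful about is to cite the correct transitivity clause (composing an $\Out$-neighborhood over $[t,t'-1]$ with a single-round $\Out$-step over $[t',t']$), but that is precisely Lemma~\ref{transitive}.i with $t'$ in place of $t'+1$ and $t''=t'$.
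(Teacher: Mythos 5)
Your proof is correct and is essentially identical to the paper's own argument: pick a process $a \notin \Out t {t'-1} x$, trace the path in $G_{t'}$ from the root $r \in \Out t {t'-1} x$ to $a$, extract the boundary edge $(y,z)$, and combine Transitivity with Monotonicity to get the strict cardinality increase. The paper's proof uses exactly this path-tracing and the same two lemmas, so there is nothing further to add.
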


\begin{proof}
    Let us look at the case $\card {\Out {t} {t'-1} x}<n$. We will show that there exists a process $y \in \Out {t} {t'-1} x$ that has an out-neighbor $z$ in $G_{t'}$ such that $z \notin \Out {t} {t'-1} x$. Then, by Transitivity (Lemma~\ref{outout}), $z \in \Out t {t'} x$. By Monotonicity, we obviously have $\Out {t} {t'-1} x \subset \Out {t} {t'} x$, which implies that $\card {\Out t {t'} x}> \card {\Out {t} {t'-1} x}$.

    Let us now find such a $z$. Since $\card {\Out {t} {t'-1} x}<n$, there exists a process $a$ such that $ a \notin \Out {t} {t'-1} x$. Consider the path from $r$ to $a$ in $G_{t'}$. Since $r \in \Out {t} {t'-1} x$, and $a \notin \Out{t} {t'-1} x$, this path must include an edge $(y,z)$ such that $z \notin \Out {t} {t'-1} x$, $y \in \Out {t} {t'-1} x$.
\end{proof}

We now give a lemma that links the number of in-neighbors of a node to the number of previous rounds whose root are in-neighbors of said node:

\begin{restatable}{lemma}{manyones}\label{manyone}
    Let $x$ be a process, let $t_1\leq t_2$ be rounds, and let $r_t$ denote a root of $R_t$ for every $t \in \{t_1, \hdots, t_2\}$, then it holds that
    $$
    \card{\{t: t_1\leq t\leq t_2, r_t \notin \I {t_1} {t_2} x\}}+1 \leq \card{\I {t_1} {t_2} x}
    $$
\end{restatable}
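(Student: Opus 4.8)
The plan is to mirror the proof of Lemma~\ref{manyones} from Section~\ref{tree}, the only change being that in each round $t$ we have a set $R_t$ of at least $k$ roots rather than a single root $r_t$, and we simply fix one representative $r_t \in R_t$ for each round (this is exactly the hypothesis of the statement). The engine of the argument is the monotone growth of the in-neighborhood $\I {t} {t_2} x$ as the left endpoint $t$ decreases, combined with the fact that whenever a round's chosen root is \emph{not} already an in-neighbor, the in-neighborhood must strictly grow when we extend one step further back in time.

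First I would set $A = \{t : t_1 \le t \le t_2,\ r_t \notin \I {t_1} {t_2} x\}$ and observe that for any $t \in A$, Monotonicity (Lemma~\ref{trivialinclusion}.i) gives $\I {t+1} {t_2} x \subseteq \I {t_1} {t_2} x$, hence $r_t \notin \I {t+1} {t_2} x$ as well. Then I would apply the strict-increment lemma of this section, Lemma~\ref{ingetsbigger}, with the root $r = r_t \in R_t$: since $r_t \notin \I {t+1} {t_2} x$, we get $\card{\I {t} {t_2} x} > \card{\I {t+1} {t_2} x}$. Writing $A = \{t^1 < t^2 < \dots < t^m\}$ with $m = \card A$, I would then chain these strict inequalities together with the non-strict inclusions $\I {t^{i}+1} {t_2} x \supseteq \I {t^{i+1}} {t_2} x$ (again Monotonicity, since $t^i + 1 \le t^{i+1}$), exactly as in the single-root proof:
$$
\card{\I {t_1} {t_2} x} \ge \card{\I {t^1} {t_2} x} > \card{\I {t^1+1} {t_2} x} \ge \card{\I {t^2} {t_2} x} > \cdots > \card{\I {t^m+1} {t_2} x} \ge 1.
$$
The final bound $\card{\I {t^m+1} {t_2} x} \ge 1$ follows because $t^m + 1 \le t_2 + 1$, so by the base case of the in-neighborhood definition $x \in \I {t^m+1} {t_2} x$. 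Counting, there are $m = \card A$ strict inequalities between nonnegative integers, so $\card{\I {t_1} {t_2} x} \ge m + 1$, which is the claim.

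I do not expect any genuine obstacle here: the proof is essentially identical to that of Lemma~\ref{manyones}, with Lemma~\ref{ingetsbigger} (which allows \emph{any} root of $R_t$, not just a unique one) playing the role of Lemma~\ref{propagation}.\ref{ingetsbig}. The only point requiring a sentence of care is the reduction from ``$r_t \notin \I {t_1}{t_2} x$'' to ``$r_t \notin \I {t+1}{t_2} x$'', which is just Monotonicity, and the verification that the chain of inclusions/strict inequalities has the right endpoints; both are routine. If anything is mildly delicate it is bookkeeping the indices so that the number of strict steps is exactly $\card A$ and the weakest link of the chain is genuinely $\ge 1$, but this is handled precisely by the ordering $t^i < t^{i+1}$ and the inequality $t^m + 1 \le t_2 + 1$.
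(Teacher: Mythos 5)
Your proof is correct and follows essentially the same route as the paper's: define $A$, pass from $r_t \notin \I{t_1}{t_2}{x}$ to $r_t \notin \I{t+1}{t_2}{x}$ via Monotonicity, apply Lemma~\ref{ingetsbigger} to get a strict increase for each $t \in A$, and chain the $\card A$ strict inequalities with the monotone inclusions down to the base value $1$. The only difference is cosmetic: you spell out the Monotonicity step that the paper leaves implicit.
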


\begin{proof}
    Let $A=\{t: t_1\leq t\leq t_2, r_t \notin \I {t_1} {t_2} x\}$. Then, in particular, for any $ t \in A$, we have $r_t \notin \I {t+1} {t_2} x$. Then, for all $t \in a$, applying Lemma~\ref{ingetsbigger}, we have $\card{\I {t} {t_2} x} > \card{ \I {t+1} {t_2} x }$. Let $k= |A|$ and let $A=\{t^{(1)}, \hdots, t^{(k)}\}$, with $t^{(i)}<t^{(i+1)}$ for $1\leq i<k$. Then by Monotonicity (Lemma~\ref{trivialinclusion}) it follows that
    $$
    \card{\I {t_1} {t_2} x} \geq \card{\I {t^{(1)}} {t_2} x } >\card{\I {t^{(1)}+1} {t_2} x } \geq \card{\I {t^{(2)}} {t_2} x }>\card{\I {t^{(2)}+1} {t_2} x }\geq \hdots >\card{\I {t^{(k)}+1} {t_2} x }\geq 1
    $$

    There are $k=\card{A}$ inequalities over integers, which concludes the proof.
\end{proof}

We now define the rounds graph avoiding some set $A$, which is the equivalent of the rounds graph of Section~\ref{tree}, this time being very careful not to choose any process from $A$ being used as a root for a round.

\begin{definition}
    Let $A$ be an arbitrary set of processes with $\card{A}<k$. We define the \emph{rounds graph avoiding set $A$} as follows:
    It contains $2n+\ceil{\sqrt2 n}+\card A$ nodes, namely
    \begin{enumerate}
        \item one {\em process node} representing each process.
        \item one {\em round node} for each of the first $\ceil{(1+\sqrt{2})n}+\card A$ rounds.
    \end{enumerate}
    For each round $t$, let $r_t$ be a vertex of $ R_t\setminus A$. The graph contains
    \begin{enumerate}
        \item a directed edge to every round node $t$ from each process node $p$ with $p \in \I 1 {t-1} {r_t}$.
        \item  a directed edge from every round node $t<\ceil{\sqrt 2 n}+\card A$ to a round  node $t'>t$ if $r_t \in \I 1 {t'-1} {r_{t'}}$.
    \end{enumerate}
\end{definition}

We now find a node in the rounds graph that has out-degree $n$, that is not a node from $A$. This node will represent a process that has broadcast.

\begin{restatable}{lemma}{pigeonhole}\label{lem:pig2}
    For any set $A$ of processes with $\card{A} < k$, the rounds graph avoiding set $A$ contains at least one node of out-degree $n$ that is not a node representing a process of $A$.
\end{restatable}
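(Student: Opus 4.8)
The plan is to mimic the proof of Lemma~\ref{lem:pig1} as closely as possible, with the added bookkeeping needed to avoid the (at most $k-1$) processes of $A$ ever being counted as a root of a round. The key observation that makes this possible is that $\card{R_t} \geq k > \card A$, so for every round $t$ the set $R_t \setminus A$ is nonempty and we may legitimately pick $r_t \in R_t \setminus A$ when building the rounds graph avoiding $A$.

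First I would count the in-degrees of the round nodes exactly as in Lemma~\ref{lem:pig1}, but using $\ceil{\sqrt 2 n} + \card A$ in place of $\ceil{\sqrt 2 n}$ as the threshold. Concretely: for a round node $t$ with $t \le \ceil{\sqrt 2 n} + \card A$, its in-degree is $\card{\{t' < t : r_{t'} \in \I 1 {t-1} {r_t}\}} + \card{\I 1 {t-1} {r_t}}$, and applying Lemma~\ref{manyone} (with $x = r_t$, $t_1 = 1$, $t_2 = t-1$, and the roots $r_{t'}$) gives that this is at least $1 + \card{\{t' < t : r_{t'} \in \I 1 {t-1} {r_t}\}} + \card{\{t' < t : r_{t'} \notin \I 1 {t-1} {r_t}\}} = t$. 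For a round node $t > \ceil{\sqrt 2 n} + \card A$, the same chain of inequalities — restricting first to $t' < \ceil{\sqrt 2 n} + \card A$ in the "$\in$" term and then bounding the "$\notin$" count below by its restriction to $t' < \ceil{\sqrt 2 n} + \card A$ — yields in-degree at least $\ceil{\sqrt 2 n} + \card A$. Summing over all $\ceil{(1+\sqrt 2)n} + \card A$ round nodes, the total edge count $\card E$ is at least $\sum_{t=1}^{\ceil{\sqrt 2 n}+\card A} t + n \cdot (\ceil{(1+\sqrt 2)n} + \card A - \ceil{\sqrt 2 n} - \card A) = \binom{\ceil{\sqrt 2 n}+\card A + 1}{2} + n\cdot\ceil{\sqrt 2 n}$; one checks this strictly exceeds $(\ceil{(1+\sqrt 2)n} + \card A) \cdot n$ — the $\card A$ terms in this comparison are dominated just as the constant was in Lemma~\ref{lem:pig1}, since $\binom{\ceil{\sqrt 2 n}+\card A+1}{2} \ge \binom{\ceil{\sqrt 2 n}+1}{2} + \card A \cdot \ceil{\sqrt 2 n} + \binom{\card A}{2}$ and $\binom{\ceil{\sqrt 2 n}+1}{2} + n\ceil{\sqrt 2 n} > \ceil{(1+\sqrt 2)n}\cdot n$ already.

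The out-edges of the graph originate only from the $n$ process nodes and from the round nodes $t < \ceil{\sqrt 2 n} + \card A$, so there are $n + \ceil{\sqrt 2 n} + \card A - 1$ nodes with out-edges; since $\card E > (n + \ceil{\sqrt 2 n} + \card A - 1)\cdot n$ after verifying $\ceil{(1+\sqrt 2)n} \ge n + \ceil{\sqrt 2 n} - 1$, the pigeonhole principle gives a node $y$ of out-degree at least $n$. It remains to argue $y$ does not represent a process of $A$: by construction every edge into a round node comes from a process $p \in \I 1 {t-1} {r_t}$, and no constraint forces $p \notin A$ — so a process node of $A$ could in principle have large out-degree. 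The fix is the standard one: if the pigeonhole step produces such a node, it still produces \emph{some} node of out-degree $\ge n$; but we need the conclusion for a non-$A$ node. Here I would instead apply pigeonhole to the $(n - \card A) + (\ceil{\sqrt 2 n} + \card A - 1) = n + \ceil{\sqrt 2 n} - 1$ nodes that are either round nodes with out-edges or process nodes \emph{not} in $A$, after first discarding the contribution of the at most $\card A$ process nodes of $A$: each such discarded node has out-degree at most $\ceil{(1+\sqrt 2)n} + \card A$ (the number of round nodes), so the edges incident to non-$A$-process-or-round nodes number at least $\card E - \card A(\ceil{(1+\sqrt 2)n}+\card A)$, and one checks this still exceeds $(n + \ceil{\sqrt 2 n} - 1)\cdot n$. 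The main obstacle, and the only genuinely new part relative to Lemma~\ref{lem:pig1}, is getting this last arithmetic to close — i.e. verifying that the $\Theta(\card A \cdot n)$ we lose by throwing away the $A$-process nodes' edges is more than compensated by the $\Theta(\card A \cdot \sqrt 2 n)$ extra edges gained from raising the round threshold by $\card A$; since $\card A < k$ and the relevant inequalities are linear in $\card A$ with a strictly positive coefficient, this goes through, but it should be written out carefully.
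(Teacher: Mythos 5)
Your scaffolding is the same as the paper's (same rounds graph avoiding $A$, same in-degree bounds via Lemma~\ref{manyone}, same pigeonhole over the $n+\ceil{\sqrt 2 n}-1$ admissible sources), but the final step --- globally discarding all potential out-edges of the $A$-process nodes --- is too lossy, and the compensation claim at the end is arithmetically wrong. Write $m=\ceil{\sqrt 2 n}$ and $a=\card A$. Relative to the $a=0$ count of Lemma~\ref{lem:pig1}, enlarging the graph gains at most $am+\binom{a+1}{2}$ from the small rounds plus $an$ from the $n$ large rounds (and your displayed sum actually credits the large rounds only with $m$ rather than $m+a$, so it gains even less), i.e.\ at most $a(n+m)+\binom{a+1}{2}$; discarding up to $a\bigl(\ceil{(1+\sqrt 2)n}+a\bigr)=a(n+m)+a^2$ edges costs more. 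The net is a \emph{deficit} of $\binom{a}{2}$, not a surplus linear in $a$: your surviving edge count is at most $\binom{m+1}{2}+nm-\binom{a}{2}$, whereas the pigeonhole over $n+m-1$ sources needs roughly $\binom{m+1}{2}+nm-\binom{a}{2}>(n+m-1)n$, and the available slack $\binom{m+1}{2}-n^2+n$ is only $\Theta(n)$. So your argument closes only when $\binom{a}{2}=O(n)$, i.e.\ $\card A=O(\sqrt n)$; but the lemma must hold for every $\card A<k$ and $k$ may be $\Theta(n)$, so this is a genuine gap. In particular the heuristic ``we lose $\Theta(\card A\cdot n)$ but gain $\Theta(\card A\cdot\sqrt 2 n)$'' mis-counts both sides: the loss is about $(1+\sqrt 2)\card A\, n+\card A^2$ and the gain at most about $(1+\sqrt 2)\card A\, n+\binom{\card A+1}{2}$.

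The paper avoids this by never counting $A$-incident edges in the first place: since at most $\card A$ in-neighbors of any round node can be process nodes of $A$, each round node $t\le m+\card A$ contributes at least $t-\card A$ (truncated at $0$, i.e.\ summed only from $t=\card A$) edges from non-$A$ nodes, and each of the $n$ later rounds contributes at least $(m+\card A)-\card A=m$. This yields exactly $\binom{m+1}{2}+nm>\ceil{(1+\sqrt 2)n}\,n$ with no $\card A$-dependent deficit --- the per-round truncation at zero is precisely what your single global subtraction gives away --- and the pigeonhole over the $n-\card A$ non-$A$ process nodes together with the $m+\card A-1$ round nodes having out-edges then works for every $\card A<k$. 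Your proof can be repaired by switching to this per-round accounting; as written, it does not establish the lemma for large $k$.
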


\begin{proof}
    As in the proof of Lemma~\ref{lem:pig1} we apply the pigeonhole principle on the edges, however, this time we will ignore out-edges from nodes representing processes from $A$.
    Let us look at round $t$. If $t\leq\ceil{\sqrt 2 n}+\card A$, then the round node $t$ has in-degree at least $t$. Indeed, it has in-degree:
    \begin{multline*}
    \card{\{t': 1\leq t'\leq t-1, {r_{t'} }\in  \I {1} {t-1} {r_t}\}}+ \card{\I {1} {t-1} {r_t}}\\ \geq 1+\card{\{t': 1\leq t'\leq t-1, {r_{t'} }\in \I {1} {t-1} {r_t}\}}+\card{\{t': 1\leq t'\leq t-1, {r_{t'} }\notin \I {1} {t-1} {r_t}\}}=t
    \end{multline*}
    where we used Lemma~\ref{manyone} for the inequality. Therefore, it has at least $t-\card A$ in-neighbors not in $A$.
    Similarly, if $t>\ceil{\sqrt 2 n}+\card A$, then the round node $t$ has in-degree at least $\ceil{\sqrt 2 n}+\card A$. Indeed, it has in-degree:
    \begin{multline*}
    \card{\{t': 1\leq t'< \ceil{\sqrt 2 n}+\card A, {r_{t'} }\in  \I {1} {t-1} {r_t}\}}+ \card{\I {1} {t-1} {r_t}}\\ \geq 1+\card{\{t': 1\leq t'< \ceil{\sqrt 2 n}+\card A, {r_{t'} }\in \I {1} {t-1} {r_t}\}}+\card{\{t': 1\leq t'\leq t-1, {r_{t'} }\notin \I {1} {t-1} {r_t}\}}\\
    \geq 1+\card{\{t': 1\leq t'< \ceil{\sqrt 2 n}+\card A, {r_{t'} }\in \I {1} {t-1} {r_t}\}}\\+\card{\{t': 1\leq t'<\ceil{\sqrt 2 n}+\card A, {r_{t'} }\notin \I {1} {t-1} {r_t}\}}=\ceil{\sqrt 2 n}+\card A
    \end{multline*}
    where we used Lemma~\ref{manyone} for the first inequality. Therefore, it has at least $\ceil{\sqrt 2 n}$ in-neighbors not elements of $A$.

    Summing the in-degrees over all the rounds, we get that the number of edges $\card{E_{\overline{A}}}$ with no endpoint in $A$ is at least:

    $$
    \card{E_{\overline{A}}}\geq \sum_{t=\card A}^{\ceil{\sqrt 2 n}+\card A}(t-\card A)+n\times \ceil{\sqrt 2 n} = \frac {\ceil{\sqrt 2 n}\left(\ceil{\sqrt 2 n}+1\right)} 2 + n \ceil{\sqrt 2 n}> \ceil{(1+\sqrt 2) n}n
    $$

    But only the $n-\card A$ nodes representing the processes and the nodes representing the first $\ceil{\sqrt 2 n}-1+\card A$ rounds have out-edges among those counted. The pigeonhole principle asserts then that one of those nodes has an out-degree of at least $n$.
\end{proof}

\begin{lemma}\label{manybroadcast}
    For any set $A$ of processes with $\card A <k$, there exists a process $z \notin A$ that has broadcast its identifier to everyone after $\ceil{(1+\sqrt 2)n}+\card A$ rounds.
\end{lemma}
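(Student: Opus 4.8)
The statement is the $k$-rooted analogue of the broadcast-time theorem in Section~\ref{tree}, and I would prove it by replaying that argument with the ingredients already built for this section: the pigeonhole Lemma~\ref{lem:pig2} in place of Lemma~\ref{lem:pig1}, and Lemma~\ref{outgetsbigger} in place of Lemma~\ref{propagation}.\ref{outgetsbig}.

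First I would apply Lemma~\ref{lem:pig2} to the set $A$: the rounds graph avoiding set $A$ contains a node $y$ of out-degree $n$ that is not a process node representing an element of $A$. I then single out a process $z \notin A$ from $y$: if $y$ is a process node, let $z$ be the process it represents (which lies outside $A$ by the choice of $y$); if $y$ is a round node for round $t$, let $z := r_t$ be the vertex of $R_t \setminus A$ fixed when building the graph (again outside $A$). In both cases $z \notin A$, so it only remains to show that $z$ has broadcast within $\ceil{(1+\sqrt 2)n}+\card A$ rounds.

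Next I argue by contradiction, assuming $\card{\Out 1 t z} < n$ for every round $t \le \ceil{(1+\sqrt 2)n}+\card A$. Let $t_1 < \dots < t_n$ be the rounds whose round nodes receive an outgoing edge from $y$; note $t_n \le \ceil{(1+\sqrt 2)n}+\card A$ since the graph has round nodes only for the first that many rounds. Inspecting the two edge types in the definition of the rounds graph avoiding $A$, in either case for $y$ (process node $z$, or round node with chosen root $z = r_t$) each such edge gives $z \in \I 1 {t_i-1} {r_{t_i}}$, equivalently (Lemma~\ref{easy}) $r_{t_i} \in \Out 1 {t_i-1} z$, with $r_{t_i} \in R_{t_i}$. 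Applying Lemma~\ref{outgetsbigger} with $t = 1$, $t' = t_i$, $r = r_{t_i}$ yields $\card{\Out 1 {t_i} z} > \card{\Out 1 {t_i-1} z}$ for each $i$, the exceptional case $\card{\Out 1 {t_i-1} z} = n$ being ruled out by the contradiction hypothesis. Chaining these $n$ strict inequalities with Monotonicity (Lemma~\ref{trivialinclusion}) along $[1,t_i-1]\subseteq[1,t_i]\subseteq[1,t_{i+1}-1]$ gives $\card{\Out 1 {t_n} z} \ge n$, contradicting the hypothesis; hence $z$ broadcasts in at most $\ceil{(1+\sqrt 2)n}+\card A$ rounds.

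\textbf{Main obstacle.} The only delicate point is the bookkeeping in the case split on the type of $y$: one must verify that whichever kind of node $y$ is, all $n$ of its outgoing edges land on round nodes and each translates into the relation $r_{t_i}\in\Out 1 {t_i-1} z$ with $r_{t_i}$ an actual root of round $t_i$ (so that Lemma~\ref{outgetsbigger} is applicable), and that $z\notin A$ in both cases. Once this is checked, the remainder is exactly the same ``$n$ strict increments among non-negative integers'' argument as in Section~\ref{tree}.
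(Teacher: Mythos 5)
Your proposal is correct and follows essentially the same route as the paper: apply Lemma~\ref{lem:pig2} to get a node $y$ of out-degree $n$ avoiding $A$, extract $z\notin A$ (the process itself or the chosen root $r_t\in R_t\setminus A$), translate each out-edge into $r_{t_i}\in\Out 1 {t_i-1} z$, and chain $n$ strict increments from Lemma~\ref{outgetsbigger} with Monotonicity. The only cosmetic difference is that you frame the final counting step as an explicit contradiction, whereas the paper reads off $\card{\Out 1 {t_n} z}\ge n$ directly from the $n$ strict inequalities; the content is identical.
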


\begin{proof}
        Build the rounds graph avoiding $A$. We know that in that graph, there is a node $y$ of degree at least $n$, that is not a process node representing $A$. Define $z$ as follows: (1) If $y$ represents a process, let $z$ be that process. It follows that $z \notin A$. (2) If $y$ represents a round $t$, let $z$ be the root $r_t$, which is not in $A$ by definition of the graph. We will show in either case that $z$ must have broadcast before $\ceil{(1+\sqrt 2)n}+\card A$ rounds.

    Let $t_1< \hdots < t_n$ be the rounds $y$ has out-edges to. By definition, and in both cases (1) and (2), this means $z\in \I 1 {t_i-1} {r_{t_i}} $
    which is equivalent to $ r_{t_i} \in \Out 1 {t_i-1} z$ for every $i \in [n]$. 
   By Lemma~\ref{outgetsbigger}, we thus have, for every $i \in [n]$, that $\card{\Out 1 {t_i} z} >\card{ \Out 1 {t_i-1} z}$. Then:

    $$
    \card{\Out 1 {t_n} z} >\card{ \Out 1 {t_n-1} z}\geq \card{\Out 1 {t_{n-1}} z} >\card{ \Out 1 {t_{n-1}-1} z}\geq \hdots \geq\card{\Out 1 {t_1} z} >\card{ \Out 1 {t_1-1} z}\geq 0
    $$

    We have $n$ strict inequalities over non-negative integers, the largest one must be at least $n$, which implies that $z$ has broadcast.
\end{proof}

This naturally leads to an upper bound for the $k$-broadcasting on $k$-rooted networks:

\begin{theorem}
    $t^*_k(\R_n^k) \leq \ceil{(1+\sqrt 2)n}+k-1$
\end{theorem}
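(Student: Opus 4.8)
The plan is to derive the theorem $t^*_k(\R_n^k) \leq \ceil{(1+\sqrt 2)n}+k-1$ as an almost immediate consequence of Lemma~\ref{manybroadcast}, applied iteratively to peel off $k$ broadcasting processes one at a time. The key observation is that if we have already identified a set $A$ of fewer than $k$ processes, each of which has broadcast within $\ceil{(1+\sqrt 2)n}+\card A$ rounds, then Lemma~\ref{manybroadcast} hands us a \emph{new} process $z \notin A$ that has also broadcast within $\ceil{(1+\sqrt 2)n}+\card A$ rounds, and we can grow $A$ to $A \union \{z\}$.

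Concretely, I would argue as follows. Start with $A_0 = \varnothing$. For $j = 0, 1, \dots, k-1$, apply Lemma~\ref{manybroadcast} to the set $A_j$ (which has size $j < k$): this yields a process $z_{j+1} \notin A_j$ that has broadcast its identifier to everyone after $\ceil{(1+\sqrt 2)n} + j$ rounds. Set $A_{j+1} = A_j \union \{z_{j+1}\}$; since $z_{j+1} \notin A_j$, the size grows by exactly one, so $\card{A_{j+1}} = j+1$. After $k$ iterations we obtain a set $A_k$ of exactly $k$ distinct processes $z_1, \dots, z_k$, each of which has broadcast. By Monotonicity (Lemma~\ref{trivialinclusion}), broadcasting is preserved in later rounds, so since every $z_j$ has broadcast by round $\ceil{(1+\sqrt 2)n} + (j-1) \leq \ceil{(1+\sqrt 2)n} + k - 1$, all of $z_1, \dots, z_k$ have broadcast simultaneously by round $\ceil{(1+\sqrt 2)n} + k - 1$. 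That is exactly the statement that at round $t = \ceil{(1+\sqrt 2)n} + k - 1$ there exist $k$ distinct processes with an out-edge to every other node in $G(t)$, i.e.\ $t^*_k(\R_n^k) \leq \ceil{(1+\sqrt 2)n} + k - 1$.

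One subtlety to be careful about: Lemma~\ref{manybroadcast} is stated for an \emph{arbitrary} sequence of graphs from $\R_n^k$, and the rounds graph avoiding $A$ it uses requires, for each round $t$, a choice of root $r_t \in R_t \setminus A$ — this is where $\card A < k$ and $\card{R_t} \geq k$ are both needed, guaranteeing $R_t \setminus A \neq \varnothing$. As long as we only ever invoke the lemma with $\card{A_j} = j \leq k-1$, this precondition holds at every step, so the induction goes through cleanly.

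The argument has essentially no hard part; it is a short induction wrapping Lemma~\ref{manybroadcast}. If there is any mild obstacle, it is purely bookkeeping: making sure the bound on the number of rounds is taken at the \emph{last} application (where $\card{A} = k-1$, giving $\ceil{(1+\sqrt 2)n} + k - 1$) and invoking Monotonicity to conclude that the processes found in earlier, cheaper applications are still broadcasting at that later round, so that all $k$ of them broadcast at a common round.
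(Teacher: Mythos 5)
Your proof is correct and follows essentially the same route as the paper: both derive the theorem directly from Lemma~\ref{manybroadcast}, the paper via a one-line contradiction (take $A$ to be the set of processes that have broadcast by round $\ceil{(1+\sqrt 2)n}+k-1$ and assume $\card A<k$), you via an equivalent direct iteration that peels off one broadcaster at a time. The distinctness and monotonicity bookkeeping you add is sound and just makes explicit what the paper leaves implicit.
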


\begin{proof}
    By contradiction, assume that the set $A$ of elements that have broadcast after $\ceil{(1+\sqrt 2)n}+k-1$ rounds is smaller than $k$ and apply Lemma~\ref{manybroadcast} to find a process not in $A$ that has broadcast in time less than $\ceil{(1+\sqrt 2)n}+\card A$ rounds, contradicting the assumption.
      
\end{proof}

\subsection{The Lower Bound}

We build a lower bound example based on the lower bound for broadcasting on trees. A figure and analysis of that example can be found in Appendix~\ref{appendix}, which yields the following result:

\begin{restatable}{theorem}{lowerbtwo}
    $t^*_k(\R_n^k) \geq \ceil{\frac {3n-9k}{2} +2}$
\end{restatable}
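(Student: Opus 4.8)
The plan is to exhibit a single sequence of networks in $\R_n^k$ whose $k$-broadcast time is at least $\ceil{\frac{3n-9k}{2}+2}$, by generalizing the tree lower bound of Zeiner, Schwarz and Schmid~\cite{schwarz2017linear} — which provides a sequence of rooted trees on $m$ processes that keeps ordinary broadcast from happening before round $\ceil{\frac{3m-1}{2}}-2$ — to the present model. The difficulty that distinguishes this from the cover lower bound on $\F_n^k$ (where one simply appends $k-1$ isolated singleton trees to the tree construction on $n-k+1$ processes) is that a network in $\R_n^k$ cannot have isolated components: each of its $k$ roots must have a directed path to all $n$ processes, so one cannot bolt on cheap ``blocker'' components. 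Instead I would run a version of the classical construction on $m:=n-3(k-1)$ core processes, augmented using the remaining $3(k-1)$ reserved processes so that every round's network carries $k$ roots while the $\Theta(n)$-round delay is preserved up to a small additive term.

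The guiding idea is that the classical construction continually moves its single root onto ``fresh'' processes, which is exactly what keeps any one identifier from reaching all processes quickly; to stay inside $\R_n^k$ one must maintain $k$ such moving roots at once. I would therefore build, for each round $t$, a network $\tilde G_t\in\R_n^k$ that (i) restricted to the $m$ core processes coincides with the $t$-th tree of a suitable (possibly relabelled or time-shifted) run of the classical construction, so that the core's internal product dynamics are left untouched; (ii) uses the $3(k-1)$ reserved processes to provide the extra roots and the connectivity that $k$-rootedness demands, arranged so that no core-internal path in any product graph detours through a reserved process, and so that each reserved process is itself a root only in a controlled pattern of rounds; and (iii) has at least $k$ roots and a self-loop at every node. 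By (i) and (ii) the tree lower bound then applies verbatim to the core: writing $T:=\ceil{\frac{3m-1}{2}}-2$, for every core process $x$ and every round $t<T$, $x$ fails to reach all $m$ core processes, hence all of $[n]$, in $G_1\circ\cdots\circ G_t$.

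It remains to check that the reserved processes also fail to broadcast before round $T$. For a reserved identifier to reach all of $[n]$ it must in particular fill the core, which by (ii) it can only do by first entering the core and then spreading along core-internal edges; cutting any witnessing sequence at the round it enters the core and applying Monotonicity (Lemma~\ref{trivialinclusion}) yields a core process whose out-neighbourhood already contains the whole core by round $T$, contradicting the tree lower bound. Combined with the ``controlled pattern'' of (ii), this shows that no reserved process broadcasts before round $T$. Hence before round $T$ fewer than $k$ processes have broadcast, so $t^*_k(\R_n^k)\ge T=\ceil{\frac{3m-1}{2}}-2=\ceil{\frac{3n-9k}{2}+2}$, the last equality being a routine ceiling computation with $m=n-3(k-1)$; for $k=1$ there are no reserved processes and the bound is exactly the tree lower bound.

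The crux — and the reason the bound degrades by $\Theta(k)$ relative to a single tree — is step (ii): simultaneously choreographing $k$ moving roots so that the connectivity mandated by $k$-rootedness never accelerates any identifier past the pure-tree threshold and never lets $k$ identifiers become universally known early. A fixed set of $k$ always-roots, or even $k$ roots kept ``near'' one another in the information-propagation sense, would broadcast within $O(n)$ rounds and destroy the bound, which is why the root schedule (and the consequent loss of $3(k-1)$ processes) is essential. Pinning down the exact augmentation, the schedule of the $k$ roots, and the accounting that shows $3(k-1)$ reserved processes and the stated additive constant suffice is the technical heart, and is carried out, together with an illustrating figure, in Appendix~\ref{appendix}.
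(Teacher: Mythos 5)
Your outline matches the paper's high-level plan (reduce to the Zeiner--Schwarz--Schmid tree lower bound on $m=n-3(k-1)$ vertices, account for $3(k-1)$ extra processes, do the ceiling arithmetic), but the proof of a lower bound \emph{is} the construction, and you never exhibit one: your step (ii) — the ``choreography'' of $k$ moving roots using the reserved processes so that $k$-rootedness holds in every round while neither core nor reserved identifiers spread too fast — is exactly the content of the theorem, and you defer it to an appendix you do not supply. The paper's concrete answer is much simpler than a round-by-round root schedule: take the three networks of the tree lower bound on $i=n-3k+3$ vertices and replace each of the three vertices that ever act as root by a clique $K_k$ whose members all inherit that vertex's in- and out-edges (edges $K_k^a\rightarrow y$ and $y\rightarrow K_k^a$ for every original edge at $a$). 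Every resulting network is in $\R_n^k$ (the $k$ clique members at the current root position are $k$ roots), and contracting each clique projects every product graph back onto the original construction, so no identifier — core or clique — is known by all before round $\ceil{\frac{3i-1}{2}}-2=\ceil{\frac{3n-9k}{2}+2}$; in particular no $k$ identifiers are.

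Two of the properties you postulate are also not quite the right ones, which suggests the gap is not merely presentational. First, your requirement that ``no core-internal path in any product graph detours through a reserved process'' is not satisfied by the actual construction (paths $y\rightarrow x\rightarrow z$ through a clique member $x$ do exist); what matters is the weaker statement that every such detour mirrors a path through the original root vertex, so no \emph{new} core-to-core connections are created — this is what the contraction argument gives you. Second, your argument that a reserved identifier cannot fill the core is flawed as stated: a reserved process must eventually be a root, hence its identifier can enter the core at several places and rounds, and ``cutting the witnessing sequence at the round it enters the core'' does not produce a single core process whose out-neighbourhood contains the whole core. In the paper's construction this issue disappears because a clique member's identifier propagates exactly as the replaced vertex's identifier does, so the tree lower bound applies to it directly rather than via a cut-and-monotonicity detour.
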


\section{Related Work}\label{sec:relwork}

Broadcasting, gossiping, and other information dissemination problems have been studied by the distributed computing community for decades already~\cite{hedetniemi1988survey}. Most classic literature on network broadcast considers a static setting, e.g., where in each round each node can send information to one neighbor~\cite{hromkovivc1996dissemination}. This model has also been explored in the context of gossiping, e.g., by Fraigniaud and Lazard~\cite{fraigniaud1994methods}. Kuhn, Lynch and Oshman~\cite{kuhn2010distributed} explore the all-to-all data dissemination problem (gossiping) in an undirected dynamic network, where processes do not know beforehand the total number of processes and must decide on that number. Ahmadi, Kuhn, Kutten, Molla and Pandurangan~\cite{DBLP:conf/icdcs/AhmadiKKMP19} study the message complexity of broadcast in an undirected dynamic setting, where the adversary pays up a cost for changing the network. Broadcast has also been studied in dynamic communication networks which evolve randomly, e.g., by Clementi et al.~\cite{clementi2013rumor}, and in the radio network model~\cite{ellen2021constant}, just to give a few examples. 

A closely related yet different problem to broadcasting is the consensus problem. Our model builds up on the heard-of model first introduced by Charron-Bost and Schiper~\cite{charron2009heard}, where authors prove results for the solvability of consensus also considering oblivious message adversaries. Among other results, they give a $\log n$ upper bound for nonsplit graphs, which are graphs for which every pair of nodes has a common in-neighbor. This would result in an $n\log n$ upper bound for rooted trees when combining it with the result of Charron-Bost, F{\"u}gger and Nowak~\cite{charron2015approximate}. F{\"u}gger, Nowak, and Winkler~\cite{fugger2020radius} prove that the time complexity of broadcast is a lower bound for consensus time. A general characterization of oblivious message adversaries on which consensus is solvable, based on broadcastability, has been presented by Coulouma, Godard and Peters in~\cite{coulouma2015characterization}. A time complexity analysis has further been studied by Winkler, Rincon Galeana, Paz, Schmid, and Schmid~\cite{itcs23consensus}. Another similar problem is agreement, considered by Santoro and Widmayer~\cite{santoro1989time}, where only a $k$-majority should agree on a value, as opposed to everyone for consensus. Afek, Gafni, Rajsbaum, Raynal and Travers~\cite{AfekGRRT10} studied a generalization to consensus that is $k$-set consensus, where each node has to decide on a value such that all the nodes together do not decide on more than $k$ different values. 

In this paper, we have studied the broadcasting problem on directed dynamic networks, with an adversary that can choose the communication network at each round among rooted trees. Zeiner, Schwarz, and Schmid~\cite{schwarz2017linear} give a $n\log n$ upper bound to our exact problem by using graph-theoretic reasoning. They also give a $\ceil{\frac{3n-1}{2}}-2$ lower bound by providing an explicit example. They further show that under an adversary that can only choose rooted trees with a fixed number of leaves or internal nodes, broadcast time is linear.

There has also been interest in a problem variant which only differs in the pool of networks the adversary can choose a network from for each communication round. F{\"u}gger, Nowak, and Winkler~\cite{fugger2020radius} give an $O( \log\log n)$ upper bound if the adversary can only choose nonsplit graphs. Combined with the result of Charron-Bost, F{\"u}gger, and Nowak~\cite{charron2015approximate} that states that one can simulate $n-1$ rounds of rooted trees with a round of a nonsplit graph, this gives the previous $O(n\log\log n)$ upper bound for broadcasting on trees. Dobrev and Vrto~\cite{dobrev2002optimal, dobrev1999optimal} give specific results when the adversary is restricted to hypercubic and tori graphs with some missing edges.

\noindent \textit{Bibliographic note: an announcement of this work has been presented at ACM PODC 2022~\cite{podc2022broadcasting}}.

\section{Conclusion}\label{sec:conclusion}

In this paper, we considered an innovative version of the classic broadcast problem where processes communicate across a dynamically changing network, as it often arises in practice (e.g., due to interference). Like in the static setting, the broadcast problem on dynamic networks is related to consensus and leader election: broadcast is a prerequisite for consensus, and hence, the time complexity of broadcast is a lower bound for the consensus and leader election complexity. 

Our main contribution is a proof that the broadcast time is at most linear in this setting, which is asymptotically optimal. We further presented several natural generalizations of our model and result. 

Our work opens several avenues for future research. In particular, it will be interesting to study the broadcast time also in non-adversarial environments where graphs evolve according to a random process (e.g., due to random node movements). It will also be interesting to further explore the implications of our methods on the closely related consensus problem.

\bibliographystyle{plain}
\bibliography{main}

\newpage
\appendix

\section{Omitted Proofs}
\label{appendix}
\subsection{Basic Tools}
\label{appendixtool}
\first*
\begin{proof}
    If $t>t'+1$, the equivalence is empty. If $t=t'+1,$ then $x \in \Out {t} {t'} y \Leftrightarrow x=y \Leftrightarrow y \in \I {t} {t'} x$. If $t\leq t'$, $x \in \Out {t} {t'} y \Leftrightarrow (y,x) \in \graph G t {t'} \Leftrightarrow y \in \I {t} {t'} x$.
\end{proof}

\outout*

\begin{proof}
    \emph{ii.} and \emph{iii.} will follow from \emph{i.} and Lemma~\ref{easy}, so we only prove \emph{i.}.

    Since $y \in \Out t {t'} x$, this means $\Out t {t'} x \neq \varnothing$. Therefore, $t\leq t'+1$. If $t=t'+1$, we have that $\Out t {t'} x = \{x\} \Rightarrow y=x$, and thus $z \in \Out t {t''} x$. We will then only need to consider the case $t\leq t'$.
    
    Similarly, since $z \in \Out {t'+1} {t''} y$, we have that $\Out {t'+1} {t''} y \neq \varnothing$ and thus $t'+1\leq t''+1$. If $t'=t''$, we have that $\Out {t'+1} {t''} y = \{y\}$ thus $z=y$ and therefore $z \in \Out t {t''} x$. Hence we only need to consider the case $t\leq t'\leq t''-1$.
    
    In that last case, we know that $(x,y)$ is an edge in $\graph G t {t'}$ and $(y, z)$ is an edge in $\graph G {t'+1} {t''}$. By definition of a product graph, this means that $(x,z)$ is an edge in $\graph G {t} {t''}$, and thus that $z \in \Out {t} {t''} x$.
\end{proof}

\trivialinclusion*
\begin{proof}
    Let us first consider the case $t_2 > t_3+1$. Then $\I {t_2} {t_3} x = \Out {t_2} {t_3} x = \varnothing$ and the result is trivial. Similarly, if $t_2=t_3+1$, then $\I {t_2} {t_3} x= \Out {t_2} {t_3} x= \{x\}$, and then either $t_1=t_4+1$ which means $\I {t_1} {t_4} x= \Out {t_1} {t_4} x= \{x\}$, or $t_1\leq t_4$, and then $x$ has a self-loop in $G_t$ for every $t_1\leq t\leq t_4$, which implies $x \in \Out {t_1} {t_4} x$ and $x \in \I {t_1} {t_4} x$. 
    
    Let's now consider the case $t_2 \leq t_3$. Let $y\in \I {t_2} {t_3} x$ (respectively $z \in \Out {t_2} {t_3} x$). Then edge $(y,x)$ ($(x,z)$) is in graph $\graph G {t_2} {t_3}$. Since all rounds have self-loops, we clearly have that edge $(y,y)$ ($(x,x)$) is in the graph $\graph G {t_1} {t_2}$. Similarly, edge $(x,x)$ ($(z,z)$) is in graph  $\graph G {t_3} {t_4}$. This shows that $(y,x)$ ($(x,z)$) is in graph $\graph G {t_1} {t_4}$ and thus $y \in \I {t_1} {t_4} x$ ($z \in \Out {t_1} {t_4} x$).
\end{proof}

\subsection{Broadcasting on Trees}
\label{appendixlow}
\subsubsection{The Lower Bound}

The lower bound for this problem, due to Zeiner, Schwarz and Schmid~\cite{schwarz2017linear} can be seen in \figref{fig:low1}. The sequence of graphs in that example do not achieve broadcast before $\ceil{\frac {3n-1} 2 }-2$ rounds. This examples proves the theorem:

\begin{figure}
    \centering
    \includegraphics{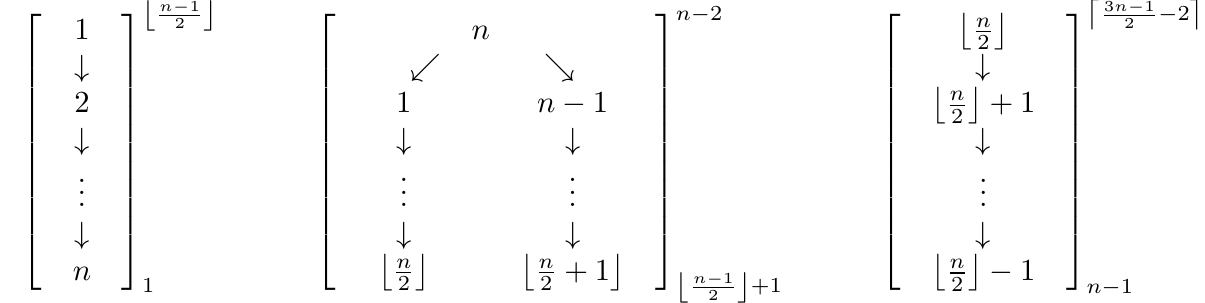}
    \caption{The lower bound example for the broadcasting on trees problem \cite{schwarz2017linear}. $\left[ G \right]_a^b$ means that network $G$ is the communication graph for all rounds between $a$ and $b$ (inclusive).}
    \label{fig:low1}
\end{figure}

\lowerb*

\subsection{\texorpdfstring{Covering on $k$-Forests}{Covering on k-Forests}}
\label{appendix2}
\subsubsection{The Upper Bound}

\begin{figure}
    \centering
    \includegraphics[width=0.7\linewidth]{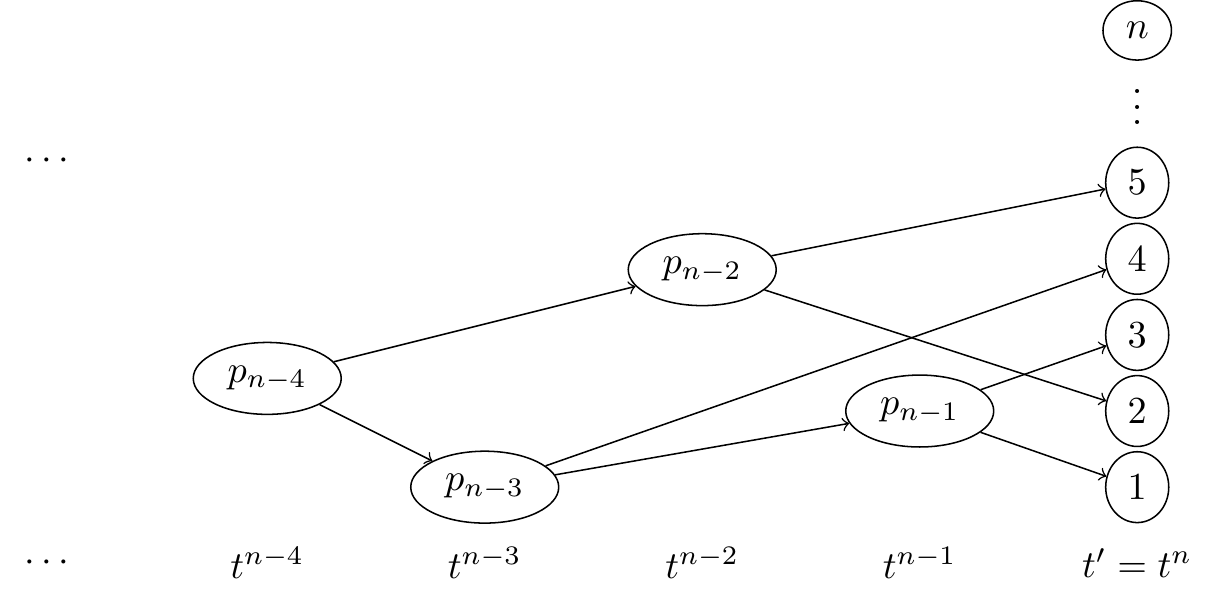}
    \caption{Main idea of the proof for the upper bound on covering on $k$-forests. An edge from process $x$ in column $t_1$ to process $y$ in column $t_2$ means that $x\in \I {t_1} {t_2-1} y$ (unless $t_2=t'$ where we don't require the offset). We start with a cover of size $n$ at $t'$ then go back in time, finding a smaller cover at every step. After the first step, we don't need to cover 1 and 3 anymore, because if we reach $p_{n-1}$ before round $\expo t {n-1}$, we also reach 1 and 3 before round $t'$, by going through $p_{n-1}$.}
    \label{fig:intuition}
\end{figure}

\alphas*

\begin{proof}
    If $s-k$ is odd, let $\ell$ be such that $s-k=2\ell+1$: 
    $$
    \sum_{v:(s,v) \in E}w(s,v)=\sum_{v:(s,v) \in E}2v-k-s
    =\sum_{v:v\leq s\leq 2v-k}2v-k-s
    =\sum_{(s+k)/2\leq v\leq s}2v-k-s
    $$

but then $s-k \equiv s-k+2k \mod 2$, so $s+k$ is also odd, and $\ceil{\frac {s+k} 2}=\ell+k+1$, therefore:

\begin{multline*}
    \sum_{v:(s,v) \in E}w(s,v)=\sum_{\ell+k+1\leq v\leq 2\ell+k+1}2v-k-2\ell-k-1
    =\sum_{0\leq v\leq \ell}2v+1\\=2\frac{\ell(\ell+1)} 2+\ell+1=\frac{s-k-1}{2}\frac{s-k+1}{2}+\frac{s-k+1} 2
    \end{multline*}

        If $s-k$ is even, let $\ell$ be such that $s-k=2\ell$: 
    $$
    \sum_{v:(s,v) \in E}w(s,v)=\sum_{v:(s,v) \in E}2v-k-s
    =\sum_{v:v\leq s\leq 2v-k}2v-k-s
    =\sum_{(s+k)/2\leq v\leq s}2v-k-s
    $$

Therefore:

$$
    \sum_{v:(s,v) \in E}w(s,v)=\sum_{\ell+k\leq v\leq 2\ell+k}2v-k-2\ell-k
    =\sum_{0\leq v\leq \ell}2v=2\frac{\ell(\ell+1)} 2=\frac{s-k}{2}\frac{s-k+2}{2}
    $$

\end{proof}
\subsubsection{The Lower Bound}

Our lower bound for this problem is very similar to the lower bound in \figref{fig:low1}. The lower bound specific for this problem can be found in \figref{fig:low2}. This lower bound is essentially ``trapping'' $k-1$ vertices each in its own 1-vertex tree, and repeating the strategy for the lower bound with $i$ vertices for broadcasting on trees on the last tree. Since none of the first $k-1$ vertices can communicate with the others, cover is only achieved when broadcast is achieved on the last tree. Setting $i=n-k+1$, we get a lower bound for the Covering on $k$-forests of $\ceil{\frac {3n-3k}{2} -1}$:

\lowerbthree*

\begin{figure}[ht]
    \centering
    \includegraphics[width=\linewidth]{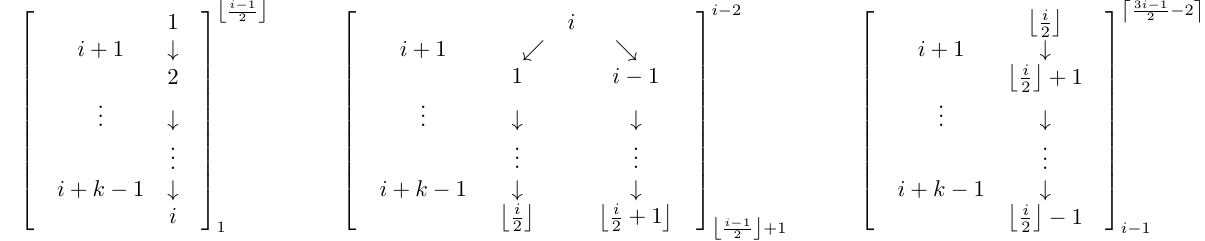}
    \caption{The lower bound example for the Covering on $k$-Forests problem. $\left[ G \right]_a^b$ means that network $G$ is the communication graph for all rounds between $a$ and $b$ (inclusive). Each of the vertices $i+1, \hdots, i+k-1$ is isolated in every round.}
    \label{fig:low2}
\end{figure}

\subsection{\texorpdfstring{$k$-Broadcasting on $k$-Rooted Networks}{k-Broadcasting on k-Rooted Networks}}
\label{appendix1}
\subsubsection{The Lower Bound}

The lower bound for this problem is based on the lower bound for broadcasting on trees, in the \figref{fig:low1}.
\lowerbtwo*
\begin{proof}
The graph for
this lower bound  can be found in \figref{fig:low3}. The idea is  to 
reduce the $k$-broadcasting problem to the broadcast problem.
More specifically, we use the sequence of networks from
the lower bound of \figref{fig:low1} with $i$ vertices for broadcasting on trees, while replacing each of the 3 vertices that act as root in the three networks of \figref{fig:low1} by $k$ fully connected vertices, i.e.~$k$
vertices such that everyone points to the other $k-1$. Then $k$-cover  is only achieved when broadcast is achieved on the original networks. Setting $i=n-3k+3$, we get a lower bound for the $k$-broadcasting on $k$-rooted networks of $\ceil{\frac {3n-9k}{2} +2}$.
\end{proof}

\begin{figure}
    \centering
    \includegraphics[width=0.85\linewidth]{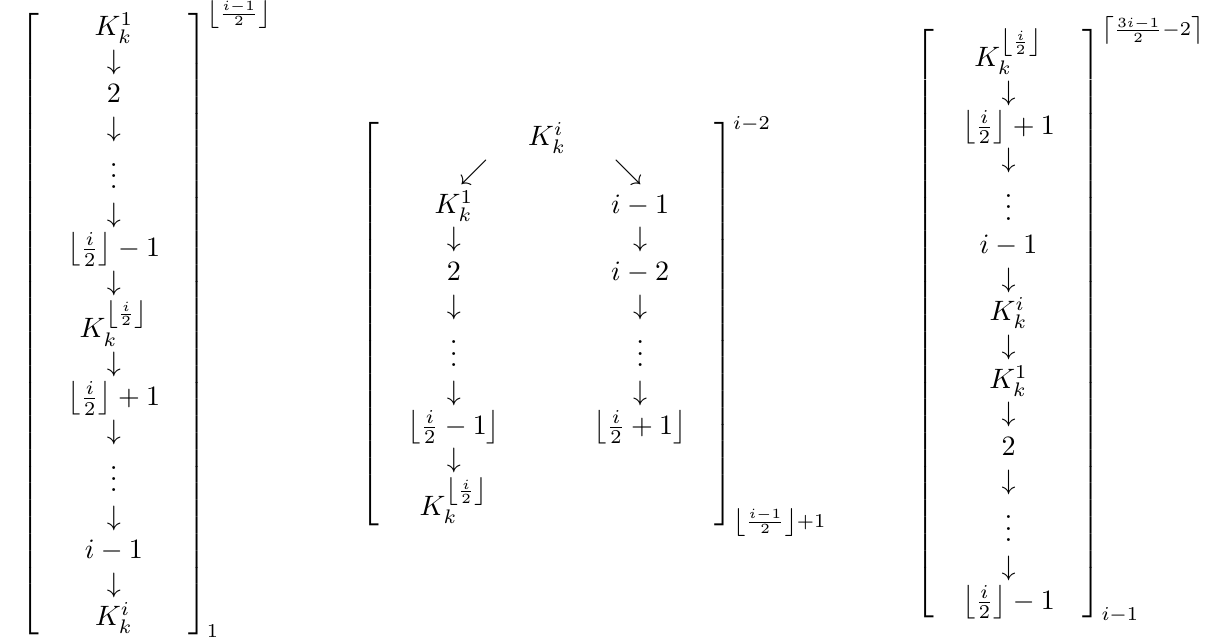}
    \caption{The lower bound example for the Covering on $k$-Forests problem. $\left[ G \right]_a^b$ means that network $G$ is the communication graph for all rounds between $a$ and $b$ (inclusive). $K_k^a$ is the complete graph that replaces vertex $a$ from the example is \figref{fig:low1}. $K_k^a \rightarrow y$ (respectively $y \rightarrow K_k^a$) means that an edge is inserted from every $x \in K_k^a$ to $y$ (respectively from $y$ to every $x \in K_k^a$). Similarly, $K_k^a \rightarrow K_k^b$ means that an edge is inserted from every $x \in K_k^a$ to every $y \in K_k^b$.}
    \label{fig:low3}
\end{figure}

\end{document}